\providecommand{\U}[1]{\protect\rule{.1in}{.1in}}
\def\theenumi{\arabic{enumi}}
\def\theenumii{\alph{enumii}}
\def\p@enumii{\theenumi.}
\def\theenumiii{\arabic{enumiii}}
\def\p@enumiii{(\theenumi)(\theenumii)}
\def\p@enumiv{\p@enumiii.\theenumiii}
\newenvironment{lcases}
  {\left\lbrace\begin{aligned}}
  {\end{aligned}\right.}
\newcommand{\bbF}{\mathbb F}
\newcommand{\bbZ}{\mathbb Z}
\newcommand{\Char}{\operatorname{char}}
\newcommand{\rank}{\operatorname{rank}}
\newcommand{\symrank}{\operatorname{rk_\mathsf S}}
\newcommand{\im}{\operatorname{im}}
\newcommand{\Span}{\operatorname{span}}
\newcommand{\Aut}{\operatorname{Aut}}
\newcommand{\calPLG}{\mathcal{PLG}}
\newcommand{\calG}{\mathcal G}
\newcommand{\calK}{\mathcal K}
\newcommand{\bfx}{\mathbf x}
\newcommand{\bfy}{\mathbf y}
\newcommand{\bfv}{\mathbf v}
\newcommand{\bff}{\mathbf f}
\newcommand{\calI}{\mathcal I}
\newtheorem{theorem}{Theorem}[section]
\newtheorem{corollary}[theorem]{Corollary}
\newtheorem{lemma}[theorem]{Lemma}
\numberwithin{equation}{section}
\setlist{noitemsep,topsep=0pt,parsep=0pt} % Modify the amount of white space for all lists
\numberwithin{equation}{section}
\newcommand{\Rmnum}[1]{\expandafter\@slowromancap\romannumeral #1@}
\newcommand{\Sym}{\operatorname{Sym}}
\newenvironment{remark}{\medskip{\bfseries \noindent Remark:}}{\par\medskip}{\par\medskip}
\newcommand{\simp}{\operatorname{simp}}
\newcommand{\orb}{\operatorname{orb}}
\newcommand{\Isom}{\operatorname{Isom}}
\begin{document}

\title{\textbf{On a Theorem of Lov\'asz that $\hom(\cdot, H)$ Determines the Isomorphism Type of $H$}\thanks{A preliminary version of this paper appeared in the Proceedings of ITCS 2020~\cite{Cai-Govorov-GH-hom-type}.}} %A preliminary version of this paper appeared in the Proceedings of ITCS 2020 and is available at \url{https://drops.dagstuhl.de/opus/volltexte/2020/11702/}.

\vspace{0.3in}

\author{Jin-Yi Cai\thanks{Department of Computer Sciences, University of Wisconsin-Madison. Supported by 
NSF CCF-1714275.} \\ {\tt jyc@cs.wisc.edu}
\and Artem Govorov\thanks{Department of Computer Sciences, University of Wisconsin-Madison. Supported by 
NSF CCF-1714275. %\protect\\
 } \thanks{Artem Govorov is the author's preferred spelling 
of his name,
rather than the official spelling Artsiom Hovarau.
 } \\ {\tt hovarau@cs.wisc.edu}}

\date{}
\maketitle

\bibliographystyle{plainurl}

\begin{abstract}
Graph homomorphism has been an important research topic since
its introduction~\cite{Lovasz-1967}. Stated in the language of 
binary relational structures in that paper~\cite{Lovasz-1967}, Lov\'asz 
 proved  a fundamental theorem
that, for a graph $H$ given by its $0$-$1$ valued adjacency matrix,
 the graph homomorphism function $G \mapsto \hom(G, H)$
determines the isomorphism type of $H$. In the past 50 years various
extensions have been proved by
% Lov\'asz and 
%others
many researchers~\cite{Lovasz-2006,Freedman-Lovasz-Schrijver-2007,
Borgs-et-al,Schrijver-2009,Lovasz-Szegedy-2009}.
These extend the basic $0$-$1$ case to admit vertex  and edge weights; 
but these extensions all have
some restrictions such as  all  vertex weights must be positive.
In this paper we prove a general form of this
theorem where $H$ can have arbitrary vertex and edge weights.
A noteworthy aspect is that we prove this by a surprisingly simple and
unified argument.
This bypasses various technical obstacles
and unifies and extends all previous known versions of this theorem
on graphs.
The constructive proof of 
our theorem can be used to make various complexity dichotomy theorems
for graph homomorphism \emph{effective} in the following sense:
it provides an algorithm
that for any $H$ either outputs a P-time algorithm
solving $\hom(\cdot, H)$
or a P-time reduction from a canonical \#P-hard problem
to  $\hom(\cdot, H)$.
\end{abstract}

%\newpage
\thispagestyle{empty}
\clearpage
\setcounter{page}{1}

\section{Introduction}\label{sec:intro}

More than 50 years ago the concept of graph homomorphism 
was introduced~\cite{Lovasz-1967,GH-book}.
Given two graphs $G$ and $H$, a mapping from $V(G)$ to $V(H)$ is called
a homomorphism if every edge of $G$ is mapped to an edge of $H$.
The graphs $G$ and $H$ can be either both directed or undirected.
Presented in the language of binary relational structures,
Lov\'asz proved in that paper~\cite{Lovasz-1967}
 the following fundamental theorem about graph homomorphism:
 If $H$ and $H'$ are two graphs,
then they are isomorphic iff they define the same counting
graph homomorphism function, namely, for every $G$, the number
of homomorphisms from $G$ to $H$ is the same as
that from $G$ to $H'$.  This number is denoted by $\hom(G , H)$.
(Formal definitions are in Section~\ref{sec:preliminaries}.)

In~\cite{Lovasz-1967} the graph $H$ is a $0$-$1$ adjacency matrix;
there are no vertex and edge weights.
In~\cite{Freedman-Lovasz-Schrijver-2007} Freedman, Lov\'asz and Schrijver 
define a weighted version of the homomorphism function $\hom(\cdot, H)$,
where $H$ has \emph{positive} vertex weights and \emph{real} edge weights.
The paper~\cite{Freedman-Lovasz-Schrijver-2007} investigates
what graph properties can be expressed
as such graph homomorphism functions. They gave a necessary and
sufficient condition for this expressibility. 
This work has been extended to
the case with arbitrary vertex and edge weights in
a field~\cite{Cai-Govorov-2019}, and to
% ``edge models'' or
``edge coloring models'', 
e.g.,~\cite{B-Szegedy,Regts-2013}. A main technical tool
introduced in~\cite{Freedman-Lovasz-Schrijver-2007} is the so-called
graph algebras. In~\cite{Lovasz-2006} Lov\'asz further investigated
these graph algebras and proved precise bounds for their dimensions.~\footnote{See
also~\cite{Lovasz-loops-correction} for a small correction suggested by Martin 
Dyer.
%%%
%However, this fix works only for
%positive vertex and real edge weights.}
However, this fix
%is specific to
works only for
%for this fix it is crucial to have
undirected weighted graphs with positive vertex and real edge weights.
%However, dealing with undirected graphs with positive vertex and real edge weights
%is specific for this fix.
} 
%%%
% It doesn't work for arbitrary vertex and edge weights (char = 0), and even for negative vertex weights. }.
These dimensions are a quantitative account of the space of
all isomorphisms from $H$ to $H'$. They are expressed in a theory
of labeled graphs.
Schrijver~\cite{Schrijver-2009}
studied the function $\hom(\cdot, H)$ where $H$ is an undirected graph 
with complex edge weights (but all vertex weights are restricted to $1$).  
He also gave a characterization of a graph property expressible
in this form, and proved that $\hom(\cdot, H) = \hom(\cdot, H')$ implies that
$H \cong H'$ for undirected graphs with complex edge weights (but
unit vertex weights). See also~\cite{Schrijver-2013}, where 
edge weights are complex but  vertex weights are restricted to $1$.
%Again, the garphs  $G$ may have multi-loops
%and multi-edges.
Regts in~\cite{Regts-2013}, in addition
to finding interesting connections between edge coloring
models and invariants of the orthogonal group,
also proved multiple theorems in the framework
of graph homomorphisms (corresponding to ``vertex models'') requiring that
all (nonempty) sums of vertex weights be nonzero.
The possibility
that vertex weights may sum to zero has been a difficult point.
Our main result is to extend this isomorphism theorem to both directed and 
undirected graphs with arbitrary vertex and edge weights.
We also determine the precise values of the dimensions of the 
corresponding graph algebras. A variant of our result, 
in terms of dimensions of associated algebras, was proved by Goodall,
Regts and Vena~\cite{Goodall-et-al}; see the corresponding remark in Section~\ref{sec:MainProof}.
%please see Remark~\ref{remark-on-regts} in Section~\ref{sec:MainProof}.

To prove our theorem,
we introduce a surprisingly simple and completely elementary argument, which
we call the \emph{Vandermonde Argument}.
All of our results are proved by this one technique.

Two vertices $i$ and $j$ in an unweighted graph $H$ are called
\emph{twins} iff the neighbor sets of $i$ and $j$ are identical.
For weighted graphs, $i$ and $j$ are called twins iff
the edge weights $\beta(i,k) = \beta(j, k)$ 
(and for directed graphs also $\beta(k,i) = \beta(k,j)$) for all $k$.
In order to identify the isomorphism class of $H$, 
a natural step 
is to combine twin vertices. This creates a
super vertex with a combined vertex weight
(even when originally all                
vertices are unweighted, i.e., have weight $1$).
After this ``twin reduction'' step, our isomorphism theorem
can be stated. The following is a simplified form.

\begin{theorem}\label{thm:intro-gh-both-twin-free-k=0}
Let $\bbF$ be a field of characteristic $0$.
Let $H$ and $H'$ be (directed or undirected) weighted graphs 
with arbitrary vertex and edge weights from $\bbF$.
Without loss of generality all individual vertex weights are nonzero.
Suppose $H$ and $H'$ are twin-free.
If for all simple graphs $G$ (i.e., without loops and multiedges),
\begin{equation}\label{hom-reqn-in-intro}
\hom(G, H) = \hom(G, H'),
\end{equation}
then the graphs $H$ and $H'$ are isomorphic as weighted graphs, i.e.,
there is a bijective map from $H$ to $H'$ that
preserves all vertex and edge weights. 
\end{theorem}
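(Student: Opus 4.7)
The plan is to exploit the hypothesis on parameterized families of ``pinning gadgets'' glued at a single distinguished vertex, and to invoke Vandermonde invertibility (which requires only $\Char\bbF=0$) to extract the structure of $H$ and $H'$ vertex by vertex.

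For a graph $F$ with one distinguished vertex $v_0$, define the pinning $\mu_F^H:V(H)\to\bbF$ by
\[
\mu_F^H(i)\;=\;\sum_{\phi:v_0\mapsto i}\prod_{v\neq v_0}\alpha_{\phi(v)}\prod_{e\in E(F)}\beta(\phi(e)),
\]
and analogously $\mu_F^{H'}$. Identifying distinguished vertices is multiplicative, $\mu_{F_1\cdot F_2}^H=\mu_{F_1}^H\mu_{F_2}^H$, and $\hom(F,H)=\sum_i\alpha_i\mu_F^H(i)$. For any labeled graphs $F_1,\dots,F_r,G$ and any $(k_1,\dots,k_r)\in\bbZ_{\ge 0}^r$, the graph $F_1^{k_1}\cdots F_r^{k_r}\cdot G$ glued at $v_0$ satisfies
\[
\hom(F_1^{k_1}\cdots F_r^{k_r}\cdot G,\,H)\;=\;\sum_{i\in V(H)}\alpha_i\,\prod_{\ell=1}^r\mu_{F_\ell}^H(i)^{k_\ell}\,\mu_G^H(i),
\]
and likewise for $H'$. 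Equating for every tuple $(k_1,\dots,k_r)$ and invoking a multivariate Vandermonde on the distinct tuples $(\mu_{F_1}^H(i),\dots,\mu_{F_r}^H(i))$ and $(\mu_{F_1}^{H'}(j),\dots,\mu_{F_r}^{H'}(j))$ gives, for every $(x_1,\dots,x_r)\in\bbF^r$,
\[
\sum_{i:\,\mu_{F_\ell}^H(i)=x_\ell\,\forall\ell}\alpha_i\,\mu_G^H(i)\;=\;\sum_{j:\,\mu_{F_\ell}^{H'}(j)=x_\ell\,\forall\ell}\alpha'_j\,\mu_G^{H'}(j).
\]

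Next I would choose finitely many labeled graphs $F_1,\dots,F_r$ so that the joint map $(\mu_{F_1}^H,\dots,\mu_{F_r}^H):V(H)\to\bbF^r$ is injective, and likewise for $H'$; the union of the two lists gives a common family. Twin-freeness drives this step: for each pair of distinct vertices of $H$ (resp.\ $H'$) a labeled graph whose pinning distinguishes them can be built from local probes such as pendant edges, parallel edges, and self-loops, with the Vandermonde argument itself invoked internally to strip outer layers whose weights might otherwise collapse.

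With the separating family in place, apply the above identity with $G$ the single labeled vertex ($\mu_G\equiv 1$): joint injectivity leaves at most one term on each side, and nonzero vertex weights rule out a support mismatch, yielding a bijection $\sigma:V(H)\to V(H')$ with $\alpha_i=\alpha'_{\sigma(i)}$ and $\mu_{F_\ell}^H(i)=\mu_{F_\ell}^{H'}(\sigma(i))$ for all $\ell$. To recover edge weights, run the argument in two coordinate blocks: attach $F_1^{k_1}\cdots F_r^{k_r}$ at one endpoint of a single edge $e$ and $F_1^{m_1}\cdots F_r^{m_r}$ at the other; the resulting graph has partition function $\sum_{i,n}\alpha_i\alpha_n\beta(i,n)\prod_s\mu_{F_s}^H(i)^{k_s}\prod_s\mu_{F_s}^H(n)^{m_s}$, and a $2r$-variable Vandermonde plus joint injectivity collapses the $H$-vs-$H'$ identity to $\alpha_i\alpha_n\beta(i,n)=\alpha'_{\sigma(i)}\alpha'_{\sigma(n)}\beta'(\sigma(i),\sigma(n))$, whence $\beta(i,n)=\beta'(\sigma(i),\sigma(n))$. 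For directed $H,H'$ the two orientations of $e$ are handled separately.

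The main difficulty I foresee is the middle step: translating the purely combinatorial twin-freeness hypothesis into the existence of a separating family of pinning gadgets. One cannot probe a prescribed vertex of $H$ from outside, and the weights $\alpha_k,\beta(k,\cdot)$ can conspire to collapse naive probes; the construction likely requires an inductive refinement that itself uses Vandermonde arguments over auxiliary parameters. Once such a family is secured, the remainder of the proof is linear algebra and bookkeeping.
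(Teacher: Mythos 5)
Your overall architecture---multiplicative pinning gadgets glued at a distinguished vertex, a multivariate Vandermonde over exponent tuples, and a two-block version to recover edge weights---is close in spirit to the paper's core lemma. But the step you yourself flag as the ``main difficulty'' is a genuine gap, and moreover the goal you set for it is unattainable as stated. You ask for finitely many $0$-labeled gadgets $F_1,\dots,F_r$ making $i\mapsto(\mu_{F_1}^H(i),\dots,\mu_{F_r}^H(i))$ injective on $V(H)$. Whenever $H$ has a nontrivial automorphism $\sigma$---e.g.\ $H=K_2$ with all weights $1$, which is twin-free---every unlabeled pinning gadget satisfies $\mu_F^H(i)=\mu_F^H(\sigma(i))$, so no such separating family can exist. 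The most one can hope for is separation up to $\Aut(H)$-orbits, and then your ``at most one term on each side'' collapse fails: the fibers are orbits, the matched identity becomes an equality of orbit sums $\sum_{i\in O}\alpha_i=\sum_{j\in O'}\alpha'_j$, and one must still argue that the matching set $O'$ on the $H'$ side is nonempty and yields an actual weight-preserving bijection. This is precisely where $\Char \bbF=0$ must enter (an orbit size $|O|$ is nonzero in $\bbF$), and that mechanism is absent from your sketch; as written, your proposal never uses the characteristic-$0$ hypothesis, which the paper shows cannot be dropped.

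The paper sidesteps the separating-family problem entirely by first proving a $k$-labeled version in which the labeling map $\varphi$ hits every vertex of $H$ at least $4m^2$ times. There the probes are just single edges from one new vertex to the labeled vertices; the relevant tuples are the rows and columns $(\beta_{ij},\beta_{ji})_{j}$ themselves, and twin-freeness \emph{literally is} the required pairwise distinctness---no gadget construction is needed. The $k=0$ statement is then recovered by introducing labels, which replaces a single pinning by a sum over all extensions of the empty map; a second Vandermonde argument over products of witness graphs reduces that sum to $|I|_\bbF\cdot\prod\alpha_{\eta(i)}$ on one side, nonzero because $\Char\bbF=0$, forcing the isomorphism set on the other side to be nonempty. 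To salvage your route you would need to (i) weaken ``injective'' to ``injective up to $\Aut(H)$'' and actually construct such a family---which is essentially equivalent to the theorem itself---and (ii) handle the resulting orbit sums via the characteristic-$0$ assumption. (A minor additional point: the graphs $G$ in the hypothesis are loopless, so ``self-loops'' are not available as probes.)
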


Theorem~\ref{thm:intro-gh-both-twin-free-k=0} is the special
case of $k=0$ of the more general Theorem~\ref{thm:gh-main} which deals with
$k$-labeled graphs. In Section~\ref{sec:Implications}
we also determine the dimensions of the 
corresponding graph algebras in terms of
the rank of the so-called connection tensors, 
introduced in~\cite{Cai-Govorov-2019}. These  
improve the corresponding theorems in~\cite{Lovasz-2006,Schrijver-2009,Regts-2013}
as follows.

From the main theorem (Theorem 2.2) of~\cite{Lovasz-2006}
we generalize from positive vertex weights and real edge weights
to arbitrary weights.
The main technique in~\cite{Lovasz-2006} is algebraic.
The proof relies on the notion of quantum graphs and structures built from them,
and uses idempotent elements in the graph algebras.
Similarly, from the isomorphism theorem in~\cite{Schrijver-2009} 
we generalize from unit vertex weights and complex edge weights to arbitrary weights.
The results in~\cite{Schrijver-2013} also deal with complex edge weights 
but unit vertex weights.
%Again, the garphs  $G$ may have multi-loops
%and multi-edges.
Our results allow directed and undirected weighted graphs $H$. 
Theorem~\ref{thm:gh-main} also
weakens the condition (\ref{hom-reqn-in-intro})
on $G$ to simple graphs (i.e., no multiedges or loops).
Schrijver's proof technique~\cite{Schrijver-2009}
 is different from that of Lov\'asz~\cite{Lovasz-2006},
but is also algebraic and built on quantum graphs.
He uses a Reynolds operator and the M\"obius transform (of a graph).
Also in~\cite{Schrijver-2009,Schrijver-2013} the graphs $G$ 
may have multi-loops
and multi-edges.
The results of Lov\'asz~\cite{Lovasz-2006} and Schrijver~\cite{Schrijver-2009}
are incomparable.
While requiring all vertex weights to be positive
is not unreasonable,
it is nonetheless a severe restriction, and has been a technical obstacle
to previous work.
%all existing proofs.
In Regts' thesis~\cite{Regts-2013}, multiple theorems were proved with
the explicit requirement that all (nonempty) sums of vertex 
weights be nonzero, which circumvented this issue.
In this paper, we allow arbitrary
vertex weights with no assumptions.
In particular, $H$ can have arbitrary complex vertex and edge weights.

However, more than the explicit strengthening of the theorems,
we believe the most 
%innovative
noteworthy aspect of this work is that
we found a direct elementary argument that
bypassed various technical obstacles
and unified all previously known versions.
We can also show that the only restriction---$\bbF$ has
 characteristic $0$---cannot be removed, and thus
 our results are the most general extensions on graphs.
We give counterexamples for fields of finite characteristic 
in Section~\ref{sec:counterexample}.

This line of work has already led to significant applications
in the graph limit literature, such as on quasi-random graphs~\cite{Lovasz-Sos-2008}.
In~\cite{Lovasz-Szegedy-2009} Lov\'asz and B.~Szegedy also studied
these graph algebras  where ``contractors'' and ``connectors'' 
are used.
In our treatment these ``contractors'' and ``connectors'' can also be constructed
with simple graphs.
 
In terms of applications to complexity theory,
there has been a series of significant complexity dichotomy
theorems on counting graph homomorphisms which
show that the function $\hom(\cdot, H)$
is either P-time computable or \#P-hard, \emph{depending} on $H$~\cite{Dyer-Greenhill-2000,
Dyer-Greenhill-corrig-2004,
Bulatov-Grohe-2005,Grohe-Thurley-2011,Thurley-2009,
Goldberg-et-al-2010,
Cai-Chen-Lu-2013,
Dyer-Goldberg-Paterson-2007,
Cai-Chen-2019}.
These theorems differ in the scope of what types of $H$ are allowed,
from 0-1 valued to complex valued, from undirected to directed.
In all these theorems 
 a P-time tractability condition on $H$ is given
such that if $H$  satisfies the condition then
$\hom(\cdot, H)$ is P-time computable, otherwise
$\hom(\cdot, H)$ is \#P-hard.  In the latter case,
the theorem asserts that
there is a P-time reduction
from a canonical \#P-hard problem to $\hom(\cdot, H)$.
\emph{However}, the proof of these dichotomy
 theorems requires  various \emph{pinning lemmas}, and some are
proved  nonconstructively.
For example,
for undirected complex weighted graphs~\cite{Cai-Chen-Lu-2013} %$\mathbb{C}$-
it was unknown how to make one pinning lemma constructive.
Consequently, there was no known algorithm 
 to produce a \#P-hardness reduction from $H$.
Because
the proof in this paper is
constructive, it can be applied 
%as a crucial step
%in making all these dichotomy theorems
to make all these dichotomy theorems
\emph{effective} in the following sense:  we can obtain an algorithm
that for any $H$ either outputs a P-time algorithm 
solving $\hom(\cdot, H)$
or a P-time reduction from a canonical \#P-hard problem
to  $\hom(\cdot, H)$.

Theorem~\ref{thm:intro-gh-both-twin-free-k=0}
weakens the requirement of $G$ to \emph{simple graphs}
in order that (\ref{hom-reqn-in-intro}) 
 implies isomorphism of $H$ and $H'$.
In~\cite{DGR-2018}, Grohe, Dell, and Rattan
studied the restriction of $G$ 
to trees or graphs of bounded tree width in (\ref{hom-reqn-in-intro}),
and relate this to the Weisfeiler-Leman algorithm 
for graph isomorphism. They showed that
 (\ref{hom-reqn-in-intro}) for all trees $G$ is equivalent to
indistinguishability by
the color refinement algorithm (1-dimensional W-L algorithm).
Corresponding results were also obtained for 
graphs of tree width $k$, and the $k$-dimensional W-L algorithm.
It is known that $k$-dimensional W-L algorithm is not
sufficient for graph isomorphism~\cite{CFI92}.
%cai-furer-immerman}.
It is an interesting open problem whether  (\ref{hom-reqn-in-intro})
 can be further
weakened from simple graphs, and to characterize the equivalence relation
defined by (\ref{hom-reqn-in-intro}) for various
restricted classes of graphs.

\begin{comment}
{\color{red} (that text)

It is interesting to note that placing loose requirements on the graphs $G$ does not necessarily result in
in $H \cong H'$. For example, this is not true for trees with the dependencies for unweighted $H$ and $H'$ established~\cite{DGR-2018}.
(JYC: may or may not cite GH on trees by Grohe, Dell, and Rattan~\cite{DGR-2018}. Honestly, no right place in the text, maybe as a footnote somewhere. Maybe, not cite it, please open that article)

}
\end{comment}

\section{Preliminaries}\label{sec:preliminaries}
We first recap the notion of weighted graph 
homomorphisms~\cite{Freedman-Lovasz-Schrijver-2007}, 
but state it for an arbitrary field $\bbF$.
We let $[k] = \{ 1, \ldots, k \}$ for integer $k \ge 0$.
In particular, $[0] = \emptyset$.
By convention $\bbF^0  = \{\emptyset\}$,
and $0^0 = 1$ in $\bbZ$, $\bbF$, etc.
Often we discuss both directed and undirected graphs together.

An ($\bbF$-)\textit{weighted graph} $H$ is a finite (di)graph with a 
weight $\alpha_H(i) \in \bbF \setminus \{0\}$ associated with 
each vertex $i$ ($0$-weighted vertices can be deleted) 
and a weight $\beta_H(i, j) \in \bbF$ associated with 
each edge $ij$ (or loop if $i=j$). For undirected graphs,
$\beta_H(i, j)= \beta_H(j, i)$.
It is convenient to assume that
$H$ is a complete graph with a loop at all nodes by adding
all missing edges and loops with weight $0$.
Then $H$ is described by an integer $q = |V(H)| \ge 0$ ($H$ can be the empty graph), a nowhere zero vector $\alpha = (\alpha_H(1), \ldots, \alpha_H(q)) \in \bbF^q$
and a matrix $B = (\beta_H(i, j)) \in \bbF^{q \times q}$.
An isomorphism from $H$ to $H'$ is a bijective map
from $V(H)$ to $V(H')$ that preserves vertex and edge weights.

According to~\cite{Freedman-Lovasz-Schrijver-2007},
let $G$ be an unweighted graph (with possible 
multiple edges, but no loops) and $H$ a weighted graph given by $(\alpha, B)$,
we define
%\begin{equation}\label{eqn:hom-def}
%\hom(G, H) = \sum_{\small {\phi \colon V(G) \to V(H)}}
%\alpha_\phi \hom_\phi(G, H) =
%\sum_{\small {\phi \colon V(G) \to V(H)}}
%\prod_{u \in V(G)} \alpha_H(\phi(u))
% \prod_{u v \in E(G)} \beta_H(\phi(u), \phi(v)).
%\end{equation}
%%% alternative formula
%\begin{equation}\label{eqn:hom-def}
%\begin{aligned}
%\hom(G, H) &= \sum_{\phi \colon V(G) \to V(H)}
%\alpha_\phi \hom_\phi(G, H) \\
%&= \sum_{\phi \colon V(G) \to V(H)} \prod_{u \in V(G)} \alpha_H(\phi(u))
 %\prod_{u v \in E(G)} \beta_H(\phi(u), \phi(v)).
%\end{aligned}
%\end{equation}
%%%%%%%alternative:
% this way \alpha_\phi
% and  \hom_\phi(G, H) are formally defined.
\begin{equation}\label{eqn:hom-def}
\hom(G, H) = \sum_{\phi \colon V(G) \to V(H)}
\alpha_\phi \hom_\phi(G, H)
\end{equation}
where
\begin{equation}\label{eqn:hom-def-vertex-hom-subphi}
\alpha_\phi
=  \prod_{u \in V(G)} \alpha_H(\phi(u)),~~~~
\hom_\phi(G, H) = 
 \prod_{u v \in E(G)} \beta_H(\phi(u), \phi(v)).
\end{equation}
The unweighted case is when all node weights are $1$ and all edge weights
are $0$-$1$ in $H$,
and $\hom(G, H)$ is the number of homomorphisms from $G$ into $H$.

A $k$-labeled graph ($k \ge 0$) is a finite graph in which $k$ nodes are labeled by $1, 2, \ldots, k$ (the graph can have any number of unlabeled nodes). Two $k$-labeled graphs are isomorphic if there is a label-preserving isomorphism between them. 
$U_k$ denotes the $k$-labeled graph on $k$ nodes with no edges. In particular, 
$U_0$ is the empty graph with no nodes and no edges.
The  \textit{product} of  two $k$-labeled graphs $G_1$ and $G_2$
 is defined as follows: take their disjoint union, and then identify nodes with the same label. Hence for two $0$-labeled graphs, $G_1 G_2 = G_1 \sqcup G_2$ (disjoint union). Clearly, the graph product is associative and commutative with the identity $U_k$, so the set of all (isomorphism classes) of $k$-labeled graphs together with the product operation forms a commutative monoid which we denote by $\calPLG[k]$.
We denote by $\calPLG^{\simp}[k]$ the submonoid of
simple graphs in $\calPLG[k]$; these are graphs with no loops, at most one
edge between any two vertices $i$ and $j$, and no edge between labeled vertices.
A directed labeled graph is simple if its underlying undirected one
is simple; 
in particular, for any $i$ and $j$, we require that
if $i \rightarrow j$ is an edge, then $j \rightarrow i$ is not an edge.
Clearly,  $\calPLG^{\simp}[k]$  is closed under the product operation
(for both directed and undirected types).

Fix a weighted graph $H = (\alpha, B)$.
For any $k$-labeled graph $G$ and mapping $\psi \colon [k] \to V(H)$, let
\begin{equation}\label{eqn:partial-hom-dec}
\hom_\psi(G, H) = \sum_{\text{\small $\substack{\phi \colon V(G) \to V(H) \\ \phi \text{ extends } \psi}$}} \frac{\alpha_\phi}{\alpha_\psi} \hom_\phi(G, H), %\tiny
\end{equation}
where $\phi  \text{ extends } \psi$ means that
 if $u_i \in V(G)$ is labeled by $i \in [k]$
then $\phi(u_i) = \psi(i)$, and
 $\alpha_\psi= \prod_{i=1}^k \alpha_H(\psi(i))$, $\alpha_\phi
= \prod_{v \in V(G)} \alpha_H(\phi(v))$, so $\frac{\alpha_\phi}{\alpha_\psi}$
is the product of vertex weights of $\alpha_\phi$ \emph{not} in $\alpha_\psi$. Then
\begin{equation}\label{eqn:full-hom-dec}
\hom(G, H) = \sum_{\psi \colon [k] \to V(H)} \alpha_\psi \hom_\psi(G, H).
\end{equation}
When $k = 0$, we only have the empty map $\emptyset$ with the domain $\emptyset$.
Then $\hom(G, H) = \hom_\emptyset(G, H)$ for every $G \in \calPLG[k]$.
The functions $\hom_\psi(\cdot, H)$, where $\psi \colon [k] \to V(H)$ and $k \ge 0$, satisfy
\begin{equation}\label{eq:part-hom-multiplicativity}
\begin{lcases}
&\hom_\psi(G_1 G_2, H) = \hom_\psi(G_1, H) \hom_\psi(G_2, H), \quad G_1, G_2 \in \calPLG[k], \\
&\hom_\psi(U_k, H) = 1.
\end{lcases}
\end{equation}

Given a directed or undirected $\bbF$-weighted graph $H$,
we call two vertices $i, j \in V(H)$ \textit{twins} if for every vertex $\ell \in V(H)$,
$\beta_H(i, \ell) = \beta_H(j, \ell)$ and $\beta_H(\ell, i) = \beta_H(\ell, j)$.
Note that in this definition the universal quantifier over $\ell$ includes
$\ell = i$ and $\ell=j$. Also note that the vertex weights $\alpha_H(w)$ do not participate in this definition.
If $H$ has no twins, we call it twin-free.

The twin relation 
partitions $V(H)$ into nonempty equivalence classes,
$I_1, \ldots, I_s$ where $s \ge 0$.
We can define a  \textit{twin contraction} graph  $\widetilde H$,
having $I_1, \ldots, I_s$  as vertices,
with  vertex weight $\sum_{t \in I_r} \alpha_H(t)$ for $I_r$,
and edge  weight from $I_r$ to $I_q$ to be $\beta_H(u, v)$
for some arbitrary $u \in I_r$ and $v \in I_q$.
After that, we remove all vertices in  $\widetilde H$ 
with zero vertex weights together with all incident edges (still called 
$\widetilde H$).
This defines a twin-free $\widetilde H$.
Clearly, $\hom(G, H) = \hom(G, \widetilde H)$ for all $G$.

We denote by $\Isom(H, H')$ the set of $\bbF$-weighted graph isomorphisms from $H$ to $H'$
and by $\Aut(H)$ the group of ($\bbF$-weighted) graph automorphisms of $H$.

It is obvious that for directed (or undirected) $\bbF$-weighted graphs $H$ and $H'$,
and the maps $\varphi \colon [k] \to V(H)$ and $\psi \colon [k] \to V(H')$ such that $\psi = \sigma \circ \varphi$
for some isomorphism $\sigma \colon V(H) \to V(H')$ from $H$ to $H'$,
we have $\hom_\varphi(G, H) = \hom_\psi(G, H')$ for every $G \in \calPLG[k]$.

\section{Our results}
Theorem~\ref{thm:intro-gh-both-twin-free-k=0} is a direct consequence of 
the 
case $k=0$ of the following Main Theorem.
\begin{theorem}\label{thm:gh-main}
Let $\bbF$ be a field of characteristic $0$.
Let $H, H'$ be (directed or undirected) $\bbF$-weighted graphs
such that $H$ is twin-free and $m = |V(H)| \ge m' = |V(H')|$.
Suppose $\varphi \colon [k] \to V(H)$ and $\psi \colon [k] \to V(H')$ where $k \ge 0$.
If $\hom_\varphi(G, H) = \hom_\psi(G, H')$ for every $G \in \calPLG^{\simp}[k]$,
then there exists an isomorphism of $\bbF$-weighted graphs
$\sigma \colon V(H) \to V(H')$ from $H$ to $H'$
such that $\psi = \sigma \circ \varphi$
(a fortiori, $H'$ is twin-free and $m = m'$).
\end{theorem}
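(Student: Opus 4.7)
The plan is to prove the Main Theorem by a Vandermonde argument working entirely inside the monoid $\calPLG^{\simp}[k]$ of simple $k$-labeled graphs. The guiding idea is that attaching a parametric simple gadget at an unlabeled vertex of a base graph converts $\hom_\varphi$ into a polynomial in the parameter whose coefficient structure is a Vandermonde system; matching this system between $H$ and $H'$ promotes a $k$-labeled identity into a finer $(k+1)$-labeled one, and iterating with richer gadgets together with the twin-free hypothesis collapses everything down to a vertex-by-vertex matching.

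Concretely, fix a simple $k$-labeled graph $G_0$ with at least one unlabeled vertex $u^{\ast}$, and let $\varphi_v:[k+1]\to V(H)$ extend $\varphi$ by $\varphi_v(k+1)=v$ (viewing $G_0$ as a $(k+1)$-labeled graph by promoting $u^{\ast}$ to label $k+1$). Define $G_0^{(N)}\in\calPLG^{\simp}[k]$ by attaching $N$ fresh pendant leaves at $u^{\ast}$. Expanding $\hom_\varphi(G_0^{(N)},H)$ via (\ref{eqn:partial-hom-dec}) yields
\[
\hom_\varphi\bigl(G_0^{(N)},H\bigr) \;=\; \sum_{v\in V(H)} \alpha_H(v)\,\hom_{\varphi_v}(G_0,H)\,D_v^{N}, \qquad D_v := \sum_{\ell\in V(H)} \alpha_H(\ell)\,\beta_H(v,\ell),
\]
with the analogous formula for $H'$ in terms of $\psi_{v'}$ and $D'_{v'}$. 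Applying the hypothesis to every $N\ge 0$, the classical Vandermonde identity over the characteristic-$0$ field $\bbF$ forces, for each $d\in\bbF$,
\[
\sum_{v:\,D_v=d} \alpha_H(v)\,\hom_{\varphi_v}(G_0,H) \;=\; \sum_{v':\,D'_{v'}=d} \alpha_{H'}(v')\,\hom_{\psi_{v'}}(G_0,H').
\]
This is a refinement of the hypothesis in which an extra label has been injected and then partially resolved by the scalar invariant $D$.

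To descend further, I would iterate with richer parametric gadgets: pendant chains of increasing length ending in $N$ leaves, several independent gadgets attached simultaneously at distinct unlabeled vertices (producing multivariate Vandermonde systems), and so on. Each round refines the scalar invariant attached to each $v\in V(H)$ into a more informative tuple; the twin-free hypothesis on $H$ guarantees that some iterated invariant eventually separates any two distinct vertices. After finitely many refinements each $v\in V(H)$ is isolated, yielding a map $\sigma:V(H)\to V(H')$ with $\alpha_H(v)=\alpha_{H'}(\sigma(v))$ and $\beta_H(v,w)=\beta_{H'}(\sigma(v),\sigma(w))$; the hypothesis $m\ge m'$ then promotes $\sigma$ to a bijection (so $m=m'$ and $H'$ is also twin-free), hence a weighted graph isomorphism. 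The compatibility $\psi=\sigma\circ\varphi$ follows from the same extraction applied to $G_0$ consisting of the $k$ isolated labeled vertices together with one unlabeled vertex joined by an edge to label $i$: the resulting identity shows $\sigma(\varphi(i))$ and $\psi(i)$ have identical adjacencies in $H'$, and twin-freeness of $H'$ then forces $\psi(i)=\sigma(\varphi(i))$.

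The main obstacle is the refinement step: one must show that simple-graph gadgets produce enough Vandermonde parameters to separate every pair of non-twin vertices of $H$, over an arbitrary characteristic-$0$ field and without any positivity or nonvanishing assumption on vertex weights or their subset sums (precisely the restrictions that limited prior approaches). In particular, a $D$-class whose total vertex weight is zero cannot be resolved by a single round of Vandermonde, so one must recurse into it with a further round of gadgets, and argue that the iteration terminates exactly when the only remaining equivalence is the twin relation. The Vandermonde step itself is completely elementary; the combinatorial design of the iteration is where all the work sits.
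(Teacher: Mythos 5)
Your opening move is sound: attaching $N$ pendant leaves at an unlabeled vertex and applying the Vandermonde argument to the resulting polynomial in $N$ is a correct deduction, and it is in the spirit of the paper's ``Vandermonde Argument.'' But the gadgets you iterate with are the wrong ones, and the iteration cannot terminate at the twin relation. Pendant leaves, pendant chains, and more generally any unlabeled tree-like attachment only ever probe a vertex $v\in V(H)$ through weighted sums over \emph{all} of $V(H)$ (your $D_v=\sum_{\ell}\alpha_H(\ell)\beta_H(v,\ell)$ and its iterates). These are weighted colour-refinement invariants, and twin-freeness does not imply that they separate vertices: take $H$ unweighted, $d$-regular, twin-free and not vertex-transitive; then $D_v=d$ for every $v$, every iterated invariant of this type is constant on $V(H)$, and your refinement never splits any class, even though for $k=1$, $\varphi(1)=u$, $\psi(1)=v$ with $u,v$ in different automorphism orbits the theorem demands a separating simple graph. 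Your closing paragraph flags this as ``where all the work sits,'' but the work cannot be done with these gadgets; showing that the terminal equivalence of your iteration is the twin relation is essentially the $k=1$ case of the theorem itself. The paper's Lemma~\ref{lem:gh-highly-surj} escapes this by first arranging that every vertex of $H$ is hit by at least $4m^2$ \emph{labels}, and then joining the new vertex to prescribed numbers $k_j,\ell_j<2m$ of labelled vertices inside $\varphi^{-1}(j)$ for each $j$ separately. This produces the monomials $\prod_j\beta_{ij}^{k_j}\beta_{ji}^{\ell_j}$ with independently controllable exponents, and twin-freeness is then \emph{exactly} the pairwise distinctness of the tuples $(\beta_{ij},\beta_{ji})_{j\in[m]}$ that Corollary~\ref{cor:vand-cancellation-gen} needs.

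Second, your proposal contains no step that genuinely uses $\Char\bbF=0$: the Vandermonde determinant is nonzero over any field once the nodes are distinct, so Lemma~\ref{lem:simpleVamdermonde} is characteristic-free, yet the theorem fails in positive characteristic (Section~\ref{sec:counterexample}). Any correct proof must therefore contain a step of a different nature. In the paper this is the descent from a highly surjective labelling back to an arbitrary $\varphi$: one extends $\varphi$ to $\eta$ on $\ell\le k+4m^3$ labels, splits the sum over all extensions $\mu$ of $\varphi$ using products of witness graphs and one more Vandermonde step, and arrives at equation~(\ref{eq:alpha-part-prod-simp}), whose left-hand side is nonzero precisely because the number of elements in an automorphism orbit is nonzero \emph{in} $\bbF$. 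Your outline neither handles non-surjective $\varphi$ (nothing in your scheme forces the classes you isolate to exhaust $V(H')$, which is where the hypothesis $m\ge m'$ must be spent) nor contains any analogue of this orbit-counting step, so even if the separation problem above were repaired the argument would still be incomplete.
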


In Section~\ref{sec:Implications} we will give our results
about the space of such isomorphisms, expressed in terms of
the dimensions of the corresponding graph algebras.
However, all results stated in this section will be proved in Sections~\ref{sec:MainProof} and \ref{sec:MainProof-dir}.

In Theorem~\ref{thm:gh-main} we require  ${\rm char}~\bbF = 0$, and
this is also assumed in the following 
Corollaries~\ref{cor:gh-both-twin-free-k=0}
%~\ref{cor:gh-both-twin-free}
to~\ref{cor:gh-only-ew-k=0-matrix-version}.
In  Section~\ref{sec:counterexample} we show that this
condition is necessary.
The following two corollaries extend
Lov\'asz's theorem (and lemmas that are of independent interest)
in~\cite{Lovasz-2006} from
real edge weight and positive vertex weight.
Furthermore,  they hold for both directed and undirected graphs,
and the condition on $G$
 is weakened so that it is sufficient to assume it for simple graphs only.
%%%
%The fact that the theorem holds under the condition 
% $\hom(G, H) = \hom(G, H')$ for loopless graphs
%%%
The fact that the theorem holds under the condition 
$\hom_\varphi(G, H) = \hom_\psi(G, H')$ for loopless labeled graphs
 $G$ is important in making the complexity dichotomies \emph{effective} in 
the sense defined in Section~\ref{sec:intro}.

%Theorem~\ref{thm:gh-main} is stated in a technical way where we only assume
%$H$ is twin-free and $|V(H)| \ge |V(H')|$. Corollary~\ref{cor:gh-both-twin-free}
%is a symmetric statement.
%
%
%\begin{corollary}\label{cor:gh-both-twin-free}
%Let $H, H'$ be (directed or undirected) $\bbF$-weighted twin-free graphs.
%Let $\varphi \colon [k] \to V(H)$
%and $\psi \colon [k] \to V(H')$ where $k \ge 0$.
%If $\hom_\varphi(G, H) = \hom_\psi(G, H')$ for every $G \in \calPLG^{\simp}[k]$,
%then
%there exists an isomorphism $\sigma$ from $H$ to $H'$ 
%such that $\psi = \sigma \circ \varphi$.
%\end{corollary}

\begin{corollary}\label{cor:gh-both-twin-free-k=0}
Let $H, H'$ be  $\bbF$-weighted twin-free graphs,
either both directed or both undirected.
If $\hom(G, H) = \hom(G, H')$ for every \emph{simple} graph $G$,
then 
$H$ and $H'$ are isomorphic as $\bbF$-weighted graphs.
\end{corollary}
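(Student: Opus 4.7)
The plan is to derive this corollary as a direct specialization of Theorem~\ref{thm:gh-main} at $k=0$, after using a symmetry observation to match up the sides.

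First, the hypothesis $\hom(G, H) = \hom(G, H')$ for every simple graph $G$ is symmetric in $H$ and $H'$, and the desired conclusion $H \cong H'$ is also symmetric. So by swapping the roles of $H$ and $H'$ if necessary, I may assume $m := |V(H)| \ge |V(H')| =: m'$, which is the size condition demanded by Theorem~\ref{thm:gh-main}. Both graphs are twin-free by hypothesis, so this relabeling preserves everything else.

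Next, I translate the hypothesis into the form required by the theorem. Take $k = 0$, and let $\varphi \colon [0] \to V(H)$ and $\psi \colon [0] \to V(H')$ be the unique maps with empty domain. As noted just after equation~(\ref{eq:part-hom-multiplicativity}) in Section~2, one has $\hom_\emptyset(G, H) = \hom(G, H)$ for every $G \in \calPLG[0]$, and likewise for $H'$. Since $\calPLG^{\simp}[0]$ is exactly the set of simple (di)graphs, the hypothesis of the corollary reads verbatim $\hom_\varphi(G, H) = \hom_\psi(G, H')$ for every $G \in \calPLG^{\simp}[k]$.

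I now invoke Theorem~\ref{thm:gh-main}. It produces an $\bbF$-weighted graph isomorphism $\sigma \colon V(H) \to V(H')$ with $\psi = \sigma \circ \varphi$; the compatibility condition is vacuous at $k=0$ since both sides are the empty map. This $\sigma$ is the isomorphism demanded by the corollary, and its existence already entails $m = m'$ and twin-freeness of $H'$ (which was among our assumptions anyway). In effect there is no separate obstacle beyond the main theorem: the whole content of the corollary is the $k=0$ specialization plus the trivial but essential observation that the symmetric hypothesis lets us arrange $m \ge m'$ without loss of generality.
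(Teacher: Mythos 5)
Your proof is correct and matches the paper's intended route: Corollary~\ref{cor:gh-both-twin-free-k=0} is exactly the $k=0$ instance of Theorem~\ref{thm:gh-main} (equivalently of Corollary~\ref{cor:gh-both-twin-free}), with the empty maps $\varphi=\psi=\emptyset$ and the identification $\hom(G,H)=\hom_\emptyset(G,H)$ on $\calPLG^{\simp}[0]$. The only point needing care is the WLOG step $m\ge m'$ via the symmetry of hypothesis and conclusion, and you handle it explicitly.
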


For edge weighted graphs with unit vertex weight,
the requirement of twin-freeness can be dropped.
The following two corollaries directly generalize
Schrijver's theorem (Theorem~2) in~\cite{Schrijver-2009}.
Corollary~\ref{cor:gh-only-ew-k=0-matrix-version}
 is a restatement of Corollary~\ref{cor:gh-only-ew-k=0}
using the terminology in~\cite{Schrijver-2009}.
Here we strengthen his theorem by requiring the 
condition  $\hom(G, H) = \hom(G, H')$ for only simple graphs $G$.
Also our result holds 
for fields $\bbF$ of characteristic $0$ generalizing from $\mathbb{C}$, and for directed as well as undirected graphs.
\begin{corollary}\label{cor:gh-only-ew-k=0}
Let $H, H'$ be (directed or undirected) $\bbF$-edge-weighted graphs.
If $\hom(G, H) = \hom(G, H')$ for every \emph{simple}
graph $G$,
then
$H$ and $H'$ are isomorphic as $\bbF$-weighted graphs.
\end{corollary}

\begin{corollary}\label{cor:gh-only-ew-k=0-matrix-version}
Let $A \in \bbF^{m \times m}$ and $A' \in \bbF^{m' \times m'}$.
Then $\hom(G, A) = \hom(G, A')$ for 
every \emph{simple} graph $G$
iff $m = m'$ and there is a permutation matrix $P \in \bbF^{m \times m}$ such that $A' = P^T A P$.
\end{corollary}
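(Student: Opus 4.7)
The plan is to derive Corollary~\ref{cor:gh-only-ew-k=0-matrix-version} as a direct matrix-language restatement of Corollary~\ref{cor:gh-only-ew-k=0}. Given a matrix $A \in \bbF^{m \times m}$, I would associate with it the $\bbF$-edge-weighted graph $H_A$ with vertex set $[m]$, all vertex weights equal to $1$, and edge weights $\beta_{H_A}(i, j) = A_{ij}$ (interpreted as directed or undirected according to whether $A$ is arbitrary or symmetric). Under this identification, the homomorphism-sum definition (\ref{eqn:hom-def}) becomes
\[
\hom(G, H_A) = \sum_{\phi \colon V(G) \to [m]} \prod_{uv \in E(G)} A_{\phi(u), \phi(v)} = \hom(G, A),
\]
and likewise for $A'$ and $H_{A'}$. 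So the two formulations of the hypothesis are the same.

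For the ``if'' direction, suppose $A' = P^T A P$ where $P$ is the permutation matrix of a bijection $\pi \colon [m] \to [m]$, so that $A'_{i,j} = A_{\pi(i), \pi(j)}$. Then the substitution $\phi \mapsto \pi \circ \phi$ in the above sum is a bijection on the set of maps $V(G) \to [m]$ and preserves each summand term-by-term, giving $\hom(G, A') = \hom(G, A)$ for every graph $G$ (simple or not).

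For the ``only if'' direction, apply Corollary~\ref{cor:gh-only-ew-k=0} to $H = H_A$ and $H' = H_{A'}$: the hypothesis yields an isomorphism $\sigma \colon V(H_A) \to V(H_{A'})$ of $\bbF$-weighted graphs. This forces $m = m'$ and, since $\sigma$ preserves edge weights, $A'_{\sigma(i), \sigma(j)} = A_{i,j}$ for all $i, j \in [m]$. Letting $P \in \bbF^{m \times m}$ be the permutation matrix with $P_{i, \sigma(i)} = 1$ (and $0$ otherwise), a direct index computation gives $(P^T A P)_{k, \ell} = A_{\sigma^{-1}(k), \sigma^{-1}(\ell)} = A'_{k, \ell}$, so $A' = P^T A P$ as required.

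The main ``obstacle'' is really only a bookkeeping one: fixing the convention that relates the bijection $\sigma$ to the permutation matrix $P$ so that the conjugation comes out as $P^T A P$ rather than $P A P^T$. All the substantive content of the corollary is already contained in Corollary~\ref{cor:gh-only-ew-k=0}; the translation between matrices and edge-weighted graphs with unit vertex weights is what allows this restatement to go through verbatim in both the directed and undirected cases.
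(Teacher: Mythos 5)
Your proposal is correct and matches the paper's intent: the paper explicitly treats Corollary~\ref{cor:gh-only-ew-k=0-matrix-version} as a restatement of Corollary~\ref{cor:gh-only-ew-k=0} in Schrijver's matrix terminology, and gives no separate proof beyond that translation. Your identification of $A$ with the unit-vertex-weight edge-weighted graph $H_A$, the re-indexing argument for the ``if'' direction, and the permutation-matrix bookkeeping for the ``only if'' direction are exactly the intended (and correct) details.
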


Our proof of Theorem~\ref{thm:gh-main} will show that
for any given $H, H'$, there is an explicitly constructed finite family
of graphs in $\calPLG^{\simp}[k]$ such that
the condition ``for all  $G \in \calPLG^{\simp}[k]$''
can be replaced with ``for all $G$ in this family'',
and thus we have an explicit finitary family of test graphs.
Moreover, this provides an explicit set of ``witnesses'' that can be
used to make various complexity dichotomy theorems
for graph homomorphism \emph{effective},
in particular, making the pinning steps in~\cite{Cai-Chen-Lu-2013}
computable, which was an open problem.

\section{The Vandermonde Argument}
%Technical statements}

%{\color{red} (JYC: we can also call this section ``The Vandermonde Argument'')}

We start with an exceedingly simple lemma, based on which all of our
results will be derived. We will call this lemma and its corollary 
the \emph{Vandermonde Argument}.
\begin{lemma}\label{lem:simpleVandermonde}
Let $n \ge 0$, and $a_i, x_i \in \bbF$ for $1 \le i \le n$.
Suppose 
\begin{equation}\label{eq:simple-Vandermonde-condition}
\sum_{i = 1}^n a_i x_i^j = 0, ~~~~\text{for all}~~ 0 \le j < n.
\end{equation}
Then for any function $f \colon \bbF \to \bbF$, we have
$\sum_{i = 1}^n a_i f(x_i) = 0$.
If (\ref{eq:simple-Vandermonde-condition}) is true
for $1 \le j \le n$, then the same conclusion holds
for any  function $f$ satisfying $f(0) =0$.
\end{lemma}

%\begin{remark}
%The statement is vacuously true if $n=0$, since an empty
%sum is 0.
%If (\ref{eq:simple-Vandermonde-condition}) is true
%for $1 \le j \le n$, then the same conclusion holds
%for any  function $f$ satisfying $f(0) =0$.
%\end{remark}
%
\begin{proof}
The statement is vacuously true if $n=0$, since an empty
sum is 0.
We may assume $n \ge 1$.
We partition $[n]$ into $\bigsqcup_{\ell=1}^{p} I_\ell$ 
such that $i, i'$ belong to
the same $I_\ell$ iff $x_i = x_{i'}$. Then (\ref{eq:simple-Vandermonde-condition}) 
is a Vandermonde system of rank $p$ with a solution
$(\sum_{i \in I_\ell} a_i)_{\ell \in [p]}$.
Thus $\sum_{i \in I_\ell} a_i =0$ for all $1 \le \ell \le p$.
It follows that $\sum_{i = 1}^n a_i f(x_i) = 0$
for any function $f \colon \bbF \to \bbF$.
%We also note that 
If  (\ref{eq:simple-Vandermonde-condition})
is true for  $1 \le j \le n$, then the same proof works except
when some $x_i=0$. In that case, we can separate out 
the term $\sum_{i \in I_{\ell_0}} a_i$ for the unique $I_{\ell_0}$ that contains
this $i$, and we get a Vandermonde system of rank $p-1$ on the other terms
$(\sum_{i \in I_\ell} a_i)_{\ell \in [p], \ell \ne \ell_0}$, which must be all zero.
\end{proof}

By iteratively applying Lemma~\ref{lem:simpleVandermonde},
we get the following Corollary.

\begin{corollary}\label{cor:vand-cancellation-gen}
Let $I$ be a finite (index) set, $s \ge 1$,
 and $a_i, b_{i j}  \in \bbF$ for all $i \in I, j \in [s]$.
Further, let $I = \bigsqcup_{\ell \in [p]} I_\ell$
be the partition of $I$
into equivalence classes, where
$i, i'$ are equivalent iff $b_{i j} = b_{i' j}$ for all $j \in [s]$.
If $\sum_{i \in I} a_i \prod_{j \in [s]} b_{i j}^{\ell_j} = 0$,
for all choices of $(\ell_1, \ldots, \ell_s)$ where each $0 \le \ell_j < |I|$,
then $\sum_{i \in I_\ell} a_i = 0$ for every $\ell \in [p]$.
\end{corollary}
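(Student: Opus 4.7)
The plan is to promote the polynomial-style hypothesis to an arbitrary-function-style identity by iterating Lemma~\ref{lem:simpleVamdermonde} once for each of the $s$ coordinates, and then to isolate each equivalence class via indicator functions. Concretely, I would prove by induction on $s \ge 1$ that, under the hypothesis of the corollary,
\begin{equation*}
\sum_{i \in I} a_i \prod_{j \in [s]} f_j(b_{ij}) = 0
\end{equation*}
for every choice of (arbitrary) functions $f_1, \ldots, f_s \colon \bbF \to \bbF$.

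The base case $s = 1$ is immediate: the hypothesis $\sum_{i \in I} a_i b_{i1}^{\ell_1} = 0$ for $0 \le \ell_1 < |I|$ is exactly (\ref{eq:simple-Vandermonde-conditoon}) with $n = |I|$ and $x_i = b_{i1}$, so Lemma~\ref{lem:simpleVamdermonde} yields $\sum_{i \in I} a_i f_1(b_{i1}) = 0$ for any $f_1$. For $s \ge 2$, fix any $(\ell_2, \ldots, \ell_s)$ with each $\ell_j \in [0, |I|)$ and absorb the tail into the coefficient, treating $A_i := a_i \prod_{j=2}^s b_{ij}^{\ell_j}$ as coefficients in the single-variable hypothesis with $x_i = b_{i1}$. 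Lemma~\ref{lem:simpleVamdermonde} then replaces $b_{i1}^{\ell_1}$ by an arbitrary $f_1(b_{i1})$. Now freeze $f_1$, set $a_i' := a_i f_1(b_{i1})$, and observe that what remains is exactly an $(s-1)$-variable instance of the hypothesis on the same index set $I$ with coefficients $a_i'$; the inductive hypothesis converts each remaining $b_{ij}^{\ell_j}$ into $f_j(b_{ij})$, completing the induction.

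To deduce the corollary, fix $\ell \in [p]$ and let $(c_1, \ldots, c_s) \in \bbF^s$ be the common type of $I_\ell$, so $b_{ij} = c_j$ for all $i \in I_\ell$ and $j \in [s]$. Choose $f_j \colon \bbF \to \bbF$ with $f_j(c_j) = 1$ and $f_j(x) = 0$ for $x \ne c_j$. Then $\prod_{j \in [s]} f_j(b_{ij})$ equals $1$ when $i \in I_\ell$ and $0$ otherwise, so the displayed identity collapses to $\sum_{i \in I_\ell} a_i = 0$. The only point I would watch is that each intermediate Vandermonde step must use the common ambient bound $n = |I|$ rather than some smaller bound coming from ``active'' coefficients; but since $I$ never shrinks during the induction and every exponent ranges over the same interval $[0,|I|)$, this is automatic, and no real obstacle arises. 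The whole argument is a clean $s$-fold iteration of Lemma~\ref{lem:simpleVamdermonde}, followed by a one-line indicator-function specialization.
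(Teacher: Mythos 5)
Your proof is correct and follows essentially the same route as the paper's: an $s$-fold iteration of Lemma~\ref{lem:simpleVamdermonde} followed by specialization to indicator functions. The only (cosmetic) difference is that you first establish the general identity $\sum_{i \in I} a_i \prod_{j} f_j(b_{ij}) = 0$ for arbitrary $f_j$ by induction and then specialize, whereas the paper plugs in the indicator at each step and tracks progressively refined equivalence classes; both are the same computation.
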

\begin{proof}
%We prove by induction on $s$.
%We apply Lemma~\ref{lem:simpleVandermonde} $s$ times.
%iteratively apply Lemma~\ref{lem:simpleVandermonde}.
First, we define an equivalence relation
where $i, i'$ belong to the same equivalence
class $\widetilde{I}$ iff $b_{i s} = b_{i' s}$. 
For any $\widetilde{I}$,
choose  $f$ with $f(x)=1$ for $x= b_{i s}$ where $i \in
\widetilde{I}$, and $f(x)=0$ otherwise.
By  Lemma~\ref{lem:simpleVandermonde}, 
%After the first application 
we get
$\sum_{i \in \widetilde{I}} a_i \prod_{j \in [s-1]} b_{i j}^{\ell_j} = 0$,
for an arbitrary $\widetilde{I}$, 
and all $0 \le \ell_j < |I|, j \in [s-1]$.
Corollary~\ref{cor:vand-cancellation-gen}
 follows after applying  Lemma~\ref{lem:simpleVandermonde} $s$ times.
\end{proof}

%\section{Proof of Main Theorem}\label{sec:MainProof}
\section{Undirected graphs}\label{sec:MainProof}

%{\color{red} (itcs proof, but with fewer details than arxiv)}

%In this conference version for the sake of simplicity of presentation we prove for
%undirected graphs; the full version has proofs for directed graphs as well as other results.

To illustrate the simplicity of the main proof idea,
in this section we first present a proof of 
Theorem~\ref{thm:gh-main} for undirected graphs.

We may assume that $H$ is on the vertex set $V(H) = [m]$,
given by vertex and edge weights $(\alpha_i)_{i \in [m]} \in (\bbF \setminus \{0\})^m$,
$(\beta_{i j})_{i, j \in [m]} \in \bbF^{m \times m}$.
Similarly, $H'$ is on $V(H') = [m']$, given by
$(\alpha'_i)_{i \in [m']} \in (\bbF \setminus \{0\})^{m'} $ and $(\beta'_{i j})_{i, j \in [m']} \in \bbF^{m' \times m'}$.

We first make a technical condition of
``super surjectivity''; it will be removed later.
% in Theorem~\ref{thm:gh-main}.
\begin{lemma}\label{lem:gh-highly-surj}
Let $H, H'$ be undirected $\bbF$-weighted graphs
such that $H$ is twin-free and
$m \ge m'$.
Suppose 
 $\varphi \colon [k] \to V(H)$ and $\psi \colon [k] \to V(H')$ where $k \ge 0$.
Assume $\varphi$ is ``super surjective'', namely:
 $|\varphi^{-1}(u)| \ge 2m^2$ for every $u \in V(H)$.
If $\hom_\varphi(G, H) = \hom_\psi(G, H')$ for every $G \in \calPLG^{\simp}[k]$, %for every directed simple loopless $k$-labeled graph $G$,
then there exists an isomorphism $\sigma$ from $H$ to $H'$
such that $\psi = \sigma \circ \varphi$.
\end{lemma}
\begin{proof}
Assume $m \ge 1$ (the case  $m = 0$ is trivial). Taking any $u \in V(H)$, we get
$k \ge |\varphi^{-1}(u)| \ge 2 m^2 >0$ and thus $m' \ge |\psi([k])| \ge 1$.
For each $\kappa = (b_i)_{i \in [k]} \in \{0, 1\}^k$,
we 
%can 
define a  graph $G_\kappa \in \calPLG^{\simp}[k]$:
\begin{quote}
$V(G_\kappa) = \{u_1, \ldots, u_k, v\}$,
with each $u_i$ labeled $i$.
For each $i \in [k]$, there is an  edge 
$(v, u_i)$ iff $b_i = 1$. There are no other edges.
\end{quote}

We now define a specific set of $G_\kappa$, given $\varphi$ and $\psi$.
We can partition $[k] = \bigsqcup_{i = 1}^m I_i$
where each $I_i = \varphi^{-1}(i)$ and
$|I_i| \ge 2m^2$.
For every $i \in [m]$, since $|I_i| \ge 2m m'$,
there exists  
$J_i \subseteq I_i$ such that $|J_i| \ge 2 m >0$
and the restriction $\psi_{|J_i}$ takes a constant value $s(i)$,
for some function $s \colon [m] \to [m']$.
Next, for each $i \in [m]$ and
for every  $0 \le k_i < 2m$,
we can fix  $K_i\subset J_i$,
with $|K_i| = k_i$.
Then
we let the tuple $\chi = \chi(K_1, \ldots, K_m)
\in \{0,1\}^k$
take $\chi_{|K_i} = 1$,  $i \in [m]$,  and all other entries are $0$.
Let $R$ be the set of all such tuples $\chi$ for every possible choice of 
$(k_i)_{i \in [m]}$ with $0 \le k_i < 2m$.

Then $\hom_\varphi(G_\chi, H) = \hom_\psi(G_\chi, H')$
for every $G_\chi$ with $\chi \in R$ is expressed by: For all
$0 \le k_j < 2m$, $j \in [m]$,
\begin{equation}\label{eq:hom-G-kappa-1st}
\sum_{i = 1}^m \alpha_i \prod_{j = 1}^m \beta_{i j}^{k_j}  = 
\sum_{i = 1}^{m'} \alpha'_i \prod_{j = 1}^m (\beta'_{i s(j)})^{k_j}.
\end{equation}
In (\ref{eq:hom-G-kappa-1st}) the sums on $i$ 
come from  assigning  $v \in V(G_\chi)$
 to $i \in V(H)$ or to $i \in V(H')$, respectively.

Because $H$ is twin-free the $m$-tuples $(\beta_{i j})_{j \in [m]} \in \bbF^{m}$
for $i \in [m]$ are pairwise distinct.
In (\ref{eq:hom-G-kappa-1st}) the sum in the LHS 
has $m$ terms, while the sum in the RHS has  $m' \le m$ terms.
Transferring the RHS to the LHS we get
at most $2m$ terms.
%%%
%Now we apply Corollary~\ref{cor:vand-cancellation-gen}.
Now we apply Corollary~\ref{cor:vand-cancellation-gen} to the sum
obtained by moving all terms of the RHS to the LHS in (\ref{eq:hom-G-kappa-1st}).
%%%
%By the pairwise distinctness of
%the $m$-tuples $(\beta_{i j})_{j \in [m]}
%\in \bbF^m$ for $i \in [m]$,
%we see that each term from the LHS of
%(\ref{eq:hom-G-kappa-1st}) must be canceled by some
% terms from the RHS. Again by the pairwise distinctness,
%there exist pairwise disjoint subsets $N_i \subseteq [m']$ for $i \in [m]$
%such that $(\beta_{i j})_{j \in [m]} = (\beta'_{q s(j)})_{j \in [m]}$ for all $q \in N_i$ and for every $i \in [m]$,
%and also $\alpha_i = \sum_{q \in N_i} \alpha'_q$ for every $i \in [m]$.
%%For now, it is also possible that some terms on RHS cancel each other out but not a LHS term
%%but we will see this is not the case by showing $[m'] = \bigsqcup_{i \in [m]} N_i$.
%Since every $\alpha_i \ne 0$ by the definition of an $\bbF$-weighted graph,
%we see that $\sum_{q \in N_i} \alpha'_q \ne 0$ so that $N_i \ne \emptyset$ for $i \in [m]$.
%However, $m \ge m'$ so this can only occur if $m = m'$,
%each $|N_i| = 1$ and $[m'] = \bigsqcup_{i \in [m]} N_i$.
%Then we can write the singleton sets $N_i = \{ \sigma(i) \}$ for $i \in [m]$,
%where $\sigma \colon [m] \to [m]$ is bijective.
%To sum up,
%%%
By the pairwise distinctness of
the $m$-tuples $(\beta_{i j})_{j \in [m]}
\in \bbF^{m}$ for $i \in [m]$, 
and since there are only $m' \le m$ terms from the RHS and every $\alpha_i \ne 0$,
we see that each term from the LHS of
(\ref{eq:hom-G-kappa-1st}) must be canceled by exactly one term from the RHS. 
And this can only occur if $m = m'$,
and there is a bijective map $\sigma \colon [m] \to [m]$:
%%%
\begin{equation}\label{eq:alpha-beta-cancellation-itentities}
\alpha_i = \alpha'_{\sigma(i)} \quad \text{for } i \in [m],
\quad  \quad (\beta_{i j})_{j \in [m]} = (\beta'_{\sigma(i) s(j)})_{j \in [m]} \quad \text{for } i \in [m].
\end{equation} 
Since $H, H'$ are undirected graphs,
we also have $\beta_{i j} = \beta_{ji} = \beta'_{ \sigma(j) s(i)}
=  \beta'_{s(i) \sigma(j)}$ for $i, j \in [m]$.

Next we show that $s$ is bijective. If for some $x, y \in [m]$ we have $s(x) = s(y)$,
then
\[
(\beta_{x j})_{j \in [m]} 
= (\beta'_{s(x) \sigma(j)})_{j \in [m]} 
= (\beta'_{s(y) \sigma(j)})_{j \in [m]} 
= (\beta_{y j})_{j \in [m]}.
\]
Since $H$ is twin-free, we conclude that $x = y$. Thus 
the map $s \colon [m] \to [m]$ 
is injective and so (since $[m]$ is finite) $s$ is bijective.
However, $\sigma \colon [m] \to [m]$ is also bijective, 
so it follows from (\ref{eq:alpha-beta-cancellation-itentities})
that the tuples $(\beta'_{i j})_{j \in [m]}$ for $i \in [m]$
are pairwise distinct, which means that $H'$ is twin-free as well.

Next we show $\psi_{|I_i} = s(i)$ for all $i \in [m]$.
If for all   $i \in [m]$, we have  $J_i = I_i$, then we are done.
Otherwise, take any $w \in [m]$ such that $J_w$ is a proper subset of
$I_w$ and 
we take any $t \in I_w \setminus J_w$.
Observe that $t \notin K_i$ for all $i \in [m]$.
In particular, $\chi(t) = 0$ for each $\chi \in R$.

For each $\chi \in R$,
let $\chi_+$ be the tuple obtained from $\chi$ by reassigning $\chi(t)$
(changing its $t$-th entry) from $0$ to $1$
and let $R_{+}$ be the set of all such $\chi_+$.

Then $\hom_\varphi(G_{\kappa}, H) = \hom_\psi(G_{\kappa}, H')$
for every $G_{\kappa}$ with $\kappa \in R_{+}$ is expressed as
(recall that we have already proved that $m'=m$)
\[
\sum_{i = 1}^m \alpha_i \beta_{i w} \prod_{j = 1}^m \beta_{i j}^{k_j}  = \sum_{i = 1}^m \alpha'_i \beta'_{i \psi(t)} \prod_{j = 1}^m  (\beta'_{i s(j)})^{k_j},
\]
which can be compared to (\ref{eq:hom-G-kappa-1st}), and here
for
$\kappa \in R_{+}$
we have one extra edge $(v,u_t)$ in $G_{\kappa}$,
and $\varphi(t) =w$ since $t \in I_w$.
So this holds for every $0 \le k_j < 2m$ where $j \in [m]$.
Transferring the RHS to the LHS and using (\ref{eq:alpha-beta-cancellation-itentities}), we get
\[
\sum_{i = 1}^m (\alpha_i \beta_{i w} - \alpha'_{\sigma(i)} \beta'_{\sigma(i) \psi(t)}) \prod_{j = 1}^m \beta_{i j}^{k_j}  = 0
\]
for every $0 \le k_j < 2m$ where $j \in [m]$.
Since $\alpha_i = \alpha'_{\sigma(i)} \not = 0$,
and the tuples $(\beta_{i j})_{j \in [m]}$ for $i \in [m]$ are pairwise distinct,
by Corollary~\ref{cor:vand-cancellation-gen},
we get $\beta_{i w} -  \beta'_{\sigma(i) \psi(t)} = 0$ for $i \in [m]$.
On the other hand, by~(\ref{eq:alpha-beta-cancellation-itentities}),
$\beta_{i w} = \beta'_{\sigma(i) s(w)}$  for $i \in [m]$.
It follows that $\beta'_{\sigma(i) \psi(t)} = \beta'_{\sigma(i) s(w)}$
for $i \in [m]$.
However, since $\sigma \colon[m] \to [m]$ is a bijection and,
as shown before, $H'$ is twin-free, this implies that  $\psi(t) = s(w)$. 
Recall that  $\psi_{|J_w} = s(w)$.
This proves that on $I_w \setminus J_w$,  $\psi$ also takes
the constant value $s(w)$.  
Thus $\psi_{|I_i} = s(i)$ for all $i \in [m]$.

Next we prove that $\beta_{i j} = \beta'_{\sigma(i) \sigma(j)}$ for
all $i, j \in [m]$,
i.e., $\sigma$ preserves the edge weights.

For each $\lambda = (b_i)_{i \in [k]}$,
$\tau = (c_i)_{i \in [k]} \in \{0, 1\}^k$,
we define a graph $G_{\lambda, \tau}  \in \calPLG^{\simp}[k]$:
\begin{quote}
$V(G_{\lambda, \tau}) = \{u_1, \ldots, u_k, v, v'\}$,
with each $u_i$ labeled $i$.
There is an edge  $(v, v')$ and,
for each $i \in [k]$, there is an edge
$(v,u_i)$ if $b_i = 1$,
and  an edge $(v', u_i)$ if $c_i = 1$.
There are no other edges. 
\end{quote}
%Clearly, each $G_{\lambda, \tau} \in \calPLG^{\simp}[k]$.

%Let $R^2 = R \times R$.
%By the definition of $R$, every $(\lambda, \tau) \in R^2$
%corresponds to a pair $(K_i, L_i)$, where $K_i \subset J_i
%\subseteq I_i$ and $L_i \subset J_i
%\subseteq I_i$ for $i \in [m]$,
%such that $|K_i| = k_i, |L_i| = \ell_i$,
%where $0 \le k_i, \ell_i < 2m$,
% and 
%$\lambda_{|K_i}  = 1$,  
%$\tau_{|L_i} = 1$,
% and all other entries are $0$.
%%%
Let $R^2 = R \times R$.
By the definition of $R$, every $(\lambda, \tau) \in R^2$
corresponds to a sequence of pairs $(K_i, L_i)$, $i \in [m]$,
where $K_i \subset J_i
\subseteq I_i$ and $L_i \subset J_i
\subseteq I_i$
such that $|K_i| = k_i, |L_i| = \ell_i$,
where $0 \le k_i, \ell_i < 2m$,
 and 
$\lambda_{|K_i}  = 1$,  
$\tau_{|L_i} = 1$, for $i \in [m]$,
 and all other entries are $0$.
Moreover, the above correspondence between $(\lambda, \tau) \in R^2$
and the sequences of pairs $(K_i, L_i)$, $i \in [m]$, is bijective.
In particular, every possible choice of $0 \le k_i, \ell_i < 2m$, $i \in [m]$, is accounted for exactly once under this correspondence.

Then $\hom_\varphi(G_{\lambda, \tau}, H) = \hom_\psi(G_{\lambda, \tau}, H')$
for every $G_{\lambda, \tau}$ with $(\lambda, \tau) \in R^2$  is
expressed as
\[
\sum_{i, j = 1}^m \alpha_{i} \alpha_{j} \beta_{i j} \prod_{r = 1}^m 
(\beta_{i r}^{k_r} \beta_{jr}^{\ell_r}) =
\sum_{i, j = 1}^m \alpha'_{i} \alpha'_{j} \beta'_{i j} \prod_{r = 1}^m 
\left( (\beta'_{i s(r)})^{k_r}  (\beta'_{j s(r)})^{\ell_r} \right).
\]
This holds for all $0 \le k_r, \ell_r < 2m$ 
where $r \in [m]$.
Transferring the RHS to the LHS and using (\ref{eq:alpha-beta-cancellation-itentities}), we get
\[
\sum_{i, j = 1}^m (\alpha_{i} \alpha_{j} \beta_{i j} - \alpha_{i} \alpha_{j} \beta'_{\sigma(i) \sigma(j)}) \prod_{r = 1}^m (\beta_{i r}^{k_r} \beta_{jr}^{\ell_r} ) = 0
\]
for every $0 \le k_r, \ell_r < 2m$ where $r \in [m]$.
Since the tuples 
$(\beta_{i r},  \beta_{j r})_{r \in [m]}$ for $i, j \in [m]$ are 
pairwise distinct,
%%%
%and $\alpha_{i} \alpha_{j} \not =0$,
%by Corollary~\ref{cor:vand-cancellation-gen}, 
by Corollary~\ref{cor:vand-cancellation-gen}, $\alpha_{i} \alpha_{j} \beta_{i j} - \alpha_{i} \alpha_{j} \beta'_{\sigma(i) \sigma(j)} = 0$ for $i, j \in [m]$.
But all $\alpha_i \ne 0$, so
%%%
%%%
\begin{equation}\label{eq:beta-preservation}
\beta_{i j} = \beta'_{\sigma(i) \sigma(j)} \quad \text{for } i, j \in [m].
\end{equation}
This %which
means that the bijection $\sigma \colon [m] \to [m]$ preserves the edge weights in addition to the vertex weights by~(\ref{eq:alpha-beta-cancellation-itentities}).
Hence $\sigma \colon [m] \to [m]$ is an isomorphism of $\bbF$-weighted graphs from $H$ to $H'$.
%(recall that $V(H) = [m]$ and $V(H) = [m'] = [m]$).

Finally, we show that $\psi = \sigma \circ \varphi$.
From~(\ref{eq:alpha-beta-cancellation-itentities}) and~(\ref{eq:beta-preservation}), we have
$\beta'_{\sigma(i) s(j)} = \beta_{i j} = \beta'_{\sigma(i) \sigma(j)}$ 
for $i, j \in [m]$.
As $H'$ is twin-free and $\sigma$ is bijective we get $\sigma(j) = s(j)$ for $j \in [m]$.
Now let $x \in [k]$. Then $x \in I_i$ for some $i \in [m]$.
Therefore $\varphi(x) = i$ and so $\psi(x) = s(i) = \sigma(i) = \sigma(\varphi(x))$ confirming $\psi  = \sigma \circ \varphi$.
\end{proof}

\begin{proof}[Proof of Theorem~\ref{thm:gh-main} for undirected graphs]
Consider an
arbitrary $\ell \ge k$ and let $G \in \calPLG^{\simp}[\ell]$ be any
$\ell$-labeled graph.
Let $G^* = \pi_{[k]}(G)$ be the graph obtained by unlabeling the labels not in $[k]$ from $G$ (if $k = \ell$, then $G^* = G$).
Clearly, $G^* \in \calPLG^{\simp}[k]$ so $\hom_\varphi(G^*, H) = \hom_\psi(G^*, H')$.
Expanding the sums on the LHS and the RHS of this equality
representing the  
maps $\varphi, \psi$
along the vertices formerly labeled by $[\ell] \setminus [k]$,
and then regrouping the terms corresponding to the same extension maps
 of $\varphi$ from $[\ell]$ to $[m]$ and of $\psi$ from $[\ell]$ to $[m']$,
 respectively,
and then bringing back the labels from $[\ell] \setminus [k]$, we get
%\begin{equation}\label{eq:gh-expanded-along-labels}
%\sum_{\substack{\mu \colon [\ell] \to [m] \\ \mu \text{ extends } \varphi }} \left(\prod_{k <i \le\ell} \alpha_{\mu(i)}\right) \hom_\mu(G, H) =
%\sum_{\substack{\nu \colon [\ell] \to [m'] \\ \nu \text{ extends } \psi}} \left(\prod_{k <i \le\ell} \alpha'_{\nu(i)}\right) \hom_\nu(G, H')
%\end{equation}
%%% alternative formula
\begin{equation}\label{eq:gh-expanded-along-labels}
\sum_{\substack{\mu \colon [\ell] \to [m] \\ \mu \text{ extends } \varphi }} \Bigg(\prod_{k <i \le\ell} \alpha_{\mu(i)}\Bigg) \hom_\mu(G, H) =
\sum_{\substack{\nu \colon [\ell] \to [m'] \\ \nu \text{ extends } \psi}} \Bigg(\prod_{k <i \le\ell} \alpha'_{\nu(i)}\Bigg) \hom_\nu(G, H')
\end{equation}
for every $G \in \calPLG^{\simp}[\ell]$.
(Here if $\ell =k$, the empty product $\prod_{k <i \le\ell}$ is 1.)

Now choose $\ell \ge k$ so that
we can extend $\varphi$ to a map $\eta \colon [\ell] \to [m]$ 
such that $|\eta^{-1}(u)| \ge 2m^2$ for every $u \in V(H)$. 
Clearly $\ell \le k + 2m^3$ suffices. 
(If $\varphi$ already satisfies the property, we can
take $\ell = k$ and $\eta = \varphi$.)
We fix $\ell$ and $\eta$ to be such.
Define
\begin{align*}
I &= \{ \mu \colon [\ell] \to [m] \mid \big(\mu_{|[k]} = \varphi \big) \wedge 
\big((\exists \sigma \in \Aut(H)) \; \mu = \sigma \circ \eta \big) \}, \\
J &= \{ \nu \colon [\ell] \to [m'] \mid \big(\nu_{|[k]} = \psi \big) \wedge 
\big((\exists \sigma \in \Isom(H, H')) \; \nu = \sigma \circ \eta \big) \}.
\end{align*}
Obviously, $\eta \in I$ so $I \ne \emptyset$.
For now, we do not exclude the possibility $J = \emptyset$
but our goal is to show that this is not the case.
If $\mu \colon [\ell] \to V(H)$ extends $\varphi$ but $\mu \notin I$, then by Lemma~\ref{lem:gh-highly-surj},
there exists a graph $G_{\eta, \mu} \in \calPLG^{\simp}[\ell]$
such that $\hom_\eta(G_{\eta, \mu}, H) \ne \hom_\mu(G_{\eta, \mu}, H)$.
Similarly, if $\nu \colon [\ell] \to V(H')$ extends $\psi$ but $\nu \notin J$, then
 by  Lemma~\ref{lem:gh-highly-surj},
there exists a graph $G'_{\eta, \nu} \in \calPLG^{\simp}[\ell]$
such that $\hom_\eta(G'_{\eta, \nu}, H) \ne \hom_\nu(G'_{\eta, \nu}, H')$.
Let $S$ be the set
consisting of these graphs $G_{\eta, \mu}$ and $G'_{\eta, \nu}$
(we remove any repetitions). %can 
Note that if $\mu' \in I$, then $\hom_\eta(G, H) = \hom_{\mu'}(G, H)$ for any $G \in \calPLG[\ell]$,
and if $\nu' \in J$, then $\hom_\eta(G, H) = \hom_{\nu'}(G, H')$ for any $G \in \calPLG[\ell]$.
In particular, both equalities hold for each $G \in S$.
We impose a linear order on $S$ and regard any set indexed by $S$ as a tuple.
For any tuple $\bar h = (h_G)_{G \in S}$ of integers,
 where each $0 \le h_G < 2 m^\ell$,
consider the graph $G_{\bar h} = \prod_{G \in S} G^{h_G} \in \calPLG^{\simp}[\ell]$.
Substituting $G = G_{\bar h}$ in~(\ref{eq:gh-expanded-along-labels})
and using the multiplicativity of partial graph homomorphisms (\ref{eq:part-hom-multiplicativity}), we obtain
%\[
%\sum_{\substack{\mu \colon [\ell] \to [m] \\ \mu \text{ extends } \varphi }} \left(\prod_{k < i \le \ell} \alpha_{\mu(i)}\right) \prod_{G \in S} (\hom_\mu(G, H))^{h_G} =
%\sum_{\substack{\nu \colon [\ell] \to [m'] \\ \nu \text{ extends } \psi}} \left(\prod_{k < i \le \ell}\alpha'_{\nu(i)}\right) \prod_{G \in S} (\hom_\nu(G, H'))^{h_G}
%\]
%%% alternative formula 1
%\[
%\sum_{\substack{\mu \colon [\ell] \to [m] \\ \mu \text{ extends } \varphi }}\!\! \left(\prod_{k < i \le \ell} \alpha_{\mu(i)}\right)\! \prod_{G \in S} (\hom_\mu(G, H))^{h_G} \!=\!
%\sum_{\substack{\nu \colon [\ell] \to [m'] \\ \nu \text{ extends } \psi}}\!\! \left(\prod_{k < i \le \ell}\alpha'_{\nu(i)}\right)\! \prod_{G \in S} (\hom_\nu(G, H'))^{h_G}
%\]
%%% alternative formula 2
\[
\sum_{\substack{\mu \colon [\ell] \to [m] \\ \mu \text{ extends } \varphi }}\!\! \Bigg(\prod_{k < i \le \ell} \alpha_{\mu(i)}\Bigg)\! \prod_{G \in S} (\hom_\mu(G, H))^{h_G} \!=\!
\sum_{\substack{\nu \colon [\ell] \to [m'] \\ \nu \text{ extends } \psi}}\!\! \Bigg(\prod_{k < i \le \ell}\alpha'_{\nu(i)}\Bigg)\! \prod_{G \in S} (\hom_\nu(G, H'))^{h_G}
\]
for every $0 \le h_G < 2 m^\ell$.
By the previous observations and the fact that $S$ contains
each $G_{\eta, \mu}$ and $G'_{\eta, \nu}$, the tuple $(\hom_\eta(G, H))_{G \in S}$ coincides with
the tuple $(\hom_\mu(G, H))_{G \in S}$ for $\mu \in I$ and it also coincides with
the tuple $(\hom_\nu(G, H'))_{G \in S}$ for $\nu \in J$;
on the other hand, this  tuple $(\hom_\eta(G, H))_{G \in S}$ 
is different from
the tuple $(\hom_\mu(G, H))_{G \in S}$ for each $\mu \colon [k] \to V(H)$ extending $\varphi$ and not in $I$ and it is also different from
the tuple $(\hom_\nu(G, H'))_{G \in S}$ for each $\nu \colon [k] \to V(H')$ extending $\psi$ and not in $J$.
Transferring the RHS to the LHS and then applying Corollary~\ref{cor:vand-cancellation-gen},
we conclude that
%\begin{equation}\label{eq:alpha-part-prod}
%\sum_{\mu \in I} \left(\prod_{k < i \le \ell} \alpha_{\mu(i)}\right) = \sum_{\nu \in J} \left(\prod_{k < i \le \ell} \alpha'_{\nu(i)}\right).
%\end{equation}
%%% alternative formula
\begin{equation}\label{eq:alpha-part-prod}
\sum_{\mu \in I} \Bigg(\prod_{k < i \le \ell} \alpha_{\mu(i)}\Bigg) = \sum_{\nu \in J} \Bigg(\prod_{k < i \le \ell} \alpha'_{\nu(i)}\Bigg).
\end{equation}
%%%
If $\mu \in I$, then $\mu = \sigma \circ \eta$ for some $\sigma \in \Aut(H)$,
and therefore $\alpha_{\mu(i)} = \alpha_{\sigma(\eta(i))} = \alpha_{\eta(i)}$ for $1 \le i \le \ell$.
Hence $\prod_{k < i \le \ell} \alpha_{\mu(i)} = \prod_{k < i \le \ell} \alpha_{\eta(i)}$ (if $k = \ell$, then both sides are $1$),
so (\ref{eq:alpha-part-prod}) transforms to
%\begin{equation}\label{eq:alpha-part-prod-simp}
%|I|_\bbF \cdot \prod_{k < i \le \ell} \alpha_{\eta(i)} 
%= \sum_{\nu \in J} \left(\prod_{k < i \le \ell} \alpha'_{\nu(i)}\right).
%\end{equation}
%%% alternative formula
\begin{equation}\label{eq:alpha-part-prod-simp}
|I|_\bbF \cdot \prod_{k < i \le \ell} \alpha_{\eta(i)} 
= \sum_{\nu \in J} \Bigg(\prod_{k < i \le \ell} \alpha'_{\nu(i)}\Bigg).
\end{equation}
%%%
Here we let $|I|_\bbF = |I| \cdot 1_\bbF = 1_\bbF + \ldots + 1_\bbF \in \bbF$ ($1_\bbF$ occurs $|I|$ times).
Since all $\alpha_i \ne 0$, we have $\prod_{k < i \le \ell} \alpha_{\eta(i)} \ne 0$
(if $k = \ell$, this product is $1_\bbF \ne 0$).
Because $I \ne \emptyset$ (as $\eta \in I$) we have $|I| \ge 1$;
 but $\Char \bbF = 0$ so $|I|_\bbF \ne 0$
and therefore $|I|_\bbF \cdot \prod_{k < i \le \ell} \alpha_{\eta(i)} \ne 0$.
This implies that the RHS of (\ref{eq:alpha-part-prod-simp}) is nonzero as well
which can only occur when $J \ne \emptyset$.
Take $\xi \in J$. Then $\xi = \sigma \circ \eta$ for some
isomorphism of $\bbF$-weighted graphs $\sigma \colon V(H) \to V(H')$ from $H$ to $H'$.
Restricting to $[k]$, we obtain
 $\psi  = \sigma \circ \varphi$
which completes the proof.
\end{proof}
\begin{remark}\label{effective-issue-inmain-thm}
Theorem~\ref{thm:gh-main} shows that the
condition  $\hom_\varphi(G, H) = \hom_\psi(G, H')$ 
for all $G \in \calPLG^{\simp}[k]$ is equivalent to the existence of
an isomorphism from $H$ to $H'$
such that $\psi = \sigma \circ \varphi$.  This  condition is effectively
checkable. However, for the purpose of effectively  producing
a \#P-hardness reduction in the dichotomy theorems, e.g., in~\cite{Cai-Chen-Lu-2013},
 we need
a witness $G$ such that $\hom_\varphi(G, H) \not = \hom_\psi(G, H')$.

The proof of
 Theorem~\ref{thm:gh-main} 
gives  an \textit{explicit finite} list to check.
For $\ell=k + 2m^3$,
let $\mathcal P_{\ell,m} = 
\{ \prod_{G \in S} G^{h_G} \mid 0 \le h_G < 2 m^{\ell}\}$,
where $S$ is from the proof of Theorem~\ref{thm:gh-main} using the
construction of Lemma~\ref{lem:gh-highly-surj}.
We then apply $\pi_{[k]}$ to each graph in $\mathcal P_{\ell,m}$
and  define $\mathcal Q_{k, m} = \pi_{[k]}(\mathcal P_{\ell,m}) \subseteq \calPLG^{\simp}[k]$.
Then the proof of Theorem~\ref{thm:gh-main} shows that
the existence of an isomorphism $\sigma \colon V(H) \to V(H')$ from $H$ to $H'$ such that $\varphi = \sigma \circ \psi$
is equivalent to $\hom_\varphi(G, H) = \hom_\psi(G, H')$ for every $G \in \mathcal Q_{k, m}$.
\end{remark}

%\begin{remark}\label{remark-on-regts}
%\input{discussion-regts.tex}
%\end{remark}

\section{Proof of Main Theorem}\label{sec:MainProof-dir}

The proof in Section~\ref{sec:MainProof} can be easily adapted
to directed graphs to prove Theorem~\ref{thm:gh-main} in full
generality.
%For convenience in this section, 
%Without loss of generality,
%for graphs $H$ and $H'$, 
Again we assume $V(H) = [m],\, V(H') = [m']$,
and  $(\alpha_i)_{i \in [m]} \in (\bbF \setminus \{0\})^m$, $(\beta_{i j})_{i, j \in [m]} \in \bbF^{m \times m}$,
$(\alpha'_i)_{i \in [m']} \in (\bbF \setminus \{0\})^{m'}$, and $(\beta'_{i j})_{i, j \in [m']} \in \bbF^{m' \times m'}$
are the vertex and edge weights in $H$ and $H'$, correspondingly.
The technical condition of
``super surjectivity'' in Lemma~\ref{lem:gh-highly-surj-dir} 
will be removed later.

\begin{lemma}\label{lem:gh-highly-surj-dir}
Let $H, H'$ be directed  $\bbF$-weighted graphs
such that $H$ is twin-free and
$m \ge m'$.
Suppose 
 $\varphi \colon [k] \to V(H)$ and $\psi \colon [k] \to V(H')$ where $k \ge 0$.
Assume $\varphi$ is ``super surjective'', namely:
 $|\varphi^{-1}(u)| \ge 4m^2$ for every $u \in V(H)$.
If $\hom_\varphi(G, H) = \hom_\psi(G, H')$ for every $G \in \calPLG^{\simp}[k]$, %for every directed simple loopless $k$-labeled graph $G$,
then there exists an isomorphism $\sigma$ from $H$ to $H'$
such that $\psi = \sigma \circ \varphi$.
\end{lemma}
\begin{proof}
Assume $m \ge 1$ (the case  $m = 0$ is trivial). Taking any $u \in V(H)$, we get
$k \ge |\varphi^{-1}(u)| \ge 4 m^2 >0$ and thus $m' \ge |\psi([k])| \ge 1$.
For each $\kappa = (b_i)_{i \in [k]} \in \{\downarrow, \uparrow, \perp\}^k$,
we define a directed graph $G_\kappa \in \calPLG^{\simp}[k]$:
\begin{quote}
$V(G_\kappa) = \{u_1, \ldots, u_k, v\}$,
with each $u_i$ labeled $i$.
For each $i \in [k]$, there is a directed edge 
from $v$ to $u_i$ if $b_i = \downarrow$,
a directed edge from $u_i$ to $v$ if $b_i = \uparrow$ and %do nothing
no edge between $u_i$ and $v$ if $b_i = \perp$.
There are no other edges.
\end{quote}

We now define a specific set of $G_\kappa$, given $\varphi$ and $\psi$.
We can partition $[k] = \bigsqcup_{i = 1}^m I_i$
where each $I_i = \varphi^{-1}(i)$ and
$|I_i| \ge 4m^2$.
For every $i \in [m]$, since $|I_i| \ge 4m m'$,
there exists  
$J_i \subseteq I_i$ such that $|J_i| \ge 4 m >0$
and the restriction $\psi_{|J_i}$ takes a constant value $s(i)$,
for some function $s \colon [m] \to [m']$.
Next, for each $i \in [m]$ and
for every  $0 \le k_i, \ell_i < 2m$,
we can fix disjoint subsets $K_i\subset J_i$ and $L_i \subset J_i$,
with $|K_i| = k_i$ and $|L_i| = \ell_i$.
Then
we let the tuple $\chi = \chi(K_1, \ldots, K_m, L_1, \ldots, L_m)
\in \{\downarrow, \uparrow, \perp\}^k$
take $\chi_{|K_i} = \downarrow$, $\chi_{|L_i} = \uparrow$,  $i \in [m]$,  and all other entries are $\perp$.
 This way every possible
 choice of $(k_i, \ell_i)_{i \in [m]}$ with $0 \le k_i, l_i < 2m$
defines a tuple of sets $(K_i, L_i)_{i \in [m]}$ with disjoint $K_i, L_i 
\subset J_i$ for $i \in [m]$,
which in turn defines the tuple $\chi 
 \in \{\downarrow, \uparrow, \perp\}^k$ by the previous rule. 
Let $R$ be the set of all such tuples $\chi$
 for every possible choice of 
$(k_i, \ell_i)_{i \in [m]}$ with $0 \le k_i, l_i < 2m$. 

Then $\hom_\varphi(G_\chi, H) = \hom_\psi(G_\chi, H')$
for every $G_\chi$ with $\chi \in R$ is expressed by: For all
$0 \le k_j, \ell_j < 2m$, $j \in [m]$,
\begin{equation}\label{eq:hom-G-kappa-1st-dir}
\sum_{i = 1}^m \alpha_i \prod_{j = 1}^m (\beta_{i j}^{k_j} \beta_{j i}^{\ell_j}) = \sum_{i = 1}^{m'} \alpha'_i \prod_{j = 1}^m \left( (\beta'_{i s(j)})^{k_j} (\beta'_{s(j) i})^{\ell_j} \right).
\end{equation}
In (\ref{eq:hom-G-kappa-1st-dir}) the sums on $i$ 
come from  assigning  $v \in V(G_\chi)$
 to $i \in V(H)$ or to $i \in V(H')$, respectively.

Because $H$ is twin-free the $2 m$-tuples $(\beta_{i j}, \beta_{j i})_{j \in [m]} \in \bbF^{2 m}$
for $i \in [m]$ are pairwise distinct.
In (\ref{eq:hom-G-kappa-1st-dir}) the sum in the LHS 
has $m$ terms, while the sum in the RHS has  $m' \le m$ terms.
Transferring the RHS to the LHS we get
at most $2m$ terms.
%%%
%Now we apply Corollary~\ref{cor:vand-cancellation-gen}.
Now we apply Corollary~\ref{cor:vand-cancellation-gen} to the sum
obtained by moving all terms of the RHS to the LHS in (\ref{eq:hom-G-kappa-1st-dir}).
%%%
%By the pairwise distinctness of
%the $2 m$-tuples $(\beta_{i j}, \beta_{j i})_{j \in [m]}
%\in \bbF^{2 m}$ for $i \in [m]$,
%we see that each term from the LHS of
%(\ref{eq:hom-G-kappa-1st-dir}) must be canceled by some
% terms from the RHS. Again by the pairwise distinctness,
%there exist pairwise disjoint subsets $N_i \subseteq [m']$ for $i \in [m]$
%such that $(\beta_{i j}, \beta_{j i})_{j \in [m]} = (\beta'_{q s(j)}, \beta'_{s(j) q})_{j \in [m]}$ for all $q \in N_i$ and for every $i \in [m]$,
%and also $\alpha_i = \sum_{q \in N_i} \alpha'_q$ for every $i \in [m]$.
%%For now, it is also possible that some terms on RHS cancel each other out but not a LHS term
%%but we will see this is not the case by showing $[m'] = \bigsqcup_{i \in [m]} N_i$.
%Since every $\alpha_i \ne 0$ by the definition of an $\bbF$-weighted graph,
%we see that $\sum_{q \in N_i} \alpha'_q \ne 0$ so that $N_i \ne \emptyset$ for $i \in [m]$.
%However, $m \ge m'$ so this can only occur if $m = m'$,
%each $|N_i| = 1$ and $[m'] = \bigsqcup_{i \in [m]} N_i$.
%Then we can write the singleton sets $N_i = \{ \sigma(i) \}$ for $i \in [m]$,
%where $\sigma \colon [m] \to [m]$ is bijective.
%To sum up, 
%%%
By the pairwise distinctness of
the $2 m$-tuples $(\beta_{i j}, \beta_{j i})_{j \in [m]}
\in \bbF^{2 m}$ for $i \in [m]$, 
and since there are only $m' \le m$ terms from the RHS and every $\alpha_i \ne 0$,
we see that each term from the LHS of
(\ref{eq:hom-G-kappa-1st-dir}) must be canceled by exactly one term from the RHS. 
And this can only occur if $m = m'$,
and there is a bijective map $\sigma \colon [m] \to [m]$:
\begin{equation}\label{eq:alpha-beta-cancellation-itentities-dir}
\alpha_i = \alpha'_{\sigma(i)} \quad \text{for } i \in [m],
\quad  \quad (\beta_{i j}, \beta_{j i})_{j \in [m]} = (\beta'_{\sigma(i) s(j)}, \beta'_{s(j) \sigma(i)})_{j \in [m]} \quad \text{for } i \in [m].
\end{equation} 
Note that since (\ref{eq:alpha-beta-cancellation-itentities-dir})
holds for all $i, j \in [m]$,  if we observe the second entry of each
pair, we have $\beta_{i j} = \beta'_{s(i) \sigma(j)}$
(as well as $\beta'_{\sigma(i) s(j)}$ from the first entry of each pair) for all $i, j \in [m]$.
%{\color{red} Since $H, H'$ are undirected graphs,
%we also have $\beta_{i j} = \beta_{ji} = \beta'_{ \sigma(j) s(i)}
%=  \beta'_{s(i) \sigma(j)}$ for $i, j \in [m]$. (itcs one, but we don't need to say a similar thing here)}

Next we show that $s$ is bijective. If for some $x, y \in [m]$ we have $s(x) = s(y)$,
then
\[
(\beta_{x j}, \beta_{j x})_{j \in [m]} 
= (\beta'_{s(x) \sigma(j)}, \beta'_{\sigma(j) s(x)})_{j \in [m]} 
= (\beta'_{s(y) \sigma(j)}, \beta'_{\sigma(j) s(y)})_{j \in [m]} 
= (\beta_{y j}, \beta_{j y})_{j \in [m]}.
\]
Since $H$ is twin-free, we conclude that $x = y$. Thus 
the map $s \colon [m] \to [m]$ 
is injective and so (since $[m]$ is finite) $s$ is bijective.
However, $\sigma \colon [m] \to [m]$ is also bijective, 
so it follows from (\ref{eq:alpha-beta-cancellation-itentities-dir})
that the tuples $(\beta'_{i j}, \beta'_{j i})_{j \in [m]}$ for $i \in [m]$
are pairwise distinct, which means that $H'$ is twin-free as well.

Next we show $\psi_{|I_i} = s(i)$ for all $i \in [m]$.
If for all   $i \in [m]$, we have  $J_i = I_i$, then we are done.
Otherwise, take any $w \in [m]$ such that $J_w$ is a proper subset of
$I_w$ and 
we take any $t \in I_w \setminus J_w$.
Observe that $t \notin K_i \cup L_i$ for all $i \in [m]$.
In particular, $\chi(t) = \perp$ for each $\chi \in R$.

For each $\chi \in R$ and  $b \in \{\downarrow, \uparrow\}$,
let $\chi_b$ be the tuple obtained from $\chi$ by reassigning $\chi(t)$
(changing its $t$-th entry) from $\perp$ to $b$
and let $R_{b}$ be the set of all such $\chi_b$.

Then $\hom_\varphi(G_{\kappa}, H) = \hom_\psi(G_{\kappa}, H')$
for every $G_{\kappa}$ with $\kappa \in R_{\downarrow}$ is expressed as
(recall that we have already proved that $m'=m$)
\[
\sum_{i = 1}^m \alpha_i \beta_{i w} \prod_{j = 1}^m (\beta_{i j}^{k_j} \beta_{j i}^{\ell_j}) = \sum_{i = 1}^m \alpha'_i \beta'_{i \psi(t)} \prod_{j = 1}^m \left( (\beta'_{i s(j)})^{k_j} (\beta'_{s(j) i})^{\ell_j} \right),
\]
which can be compared to (\ref{eq:hom-G-kappa-1st-dir}), and here
for
$\kappa \in R_{\downarrow}$
we have one extra edge $(v,u_t)$ in $G_{\kappa}$,
and $\varphi(t) =w$ since $t \in I_w$.
So this holds for every $0 \le k_j, \ell_j < 2m$ where $j \in [m]$.
Transferring the RHS to the LHS and using (\ref{eq:alpha-beta-cancellation-itentities-dir}), we get
\[
\sum_{i = 1}^m (\alpha_i \beta_{i w} - \alpha'_{\sigma(i)} \beta'_{\sigma(i) \psi(t)}) \prod_{j = 1}^m (\beta_{i j}^{k_j} \beta_{j i}^{\ell_j}) = 0
\]
for every $0 \le k_j, \ell_j < 2m$ where $j \in [m]$.
Since $\alpha_i = \alpha'_{\sigma(i)} \not = 0$,
and the tuples $(\beta_{i j}, \beta_{j i})_{j \in [m]}$ for $i \in [m]$ are pairwise distinct,
by Corollary~\ref{cor:vand-cancellation-gen},
we get $\beta_{i w} -  \beta'_{\sigma(i) \psi(t)} = 0$ for $i \in [m]$.
Using $\hom_\varphi(G_{\kappa}, H) = \hom_\psi(G_{\kappa}, H')$
for every $G_{\kappa}$ with $\kappa \in R_{\uparrow}$,
we similarly conclude that $\beta_{w i} = \beta'_{\psi(t) \sigma(i)}$ for $i \in [m]$.
On the other hand, by~(\ref{eq:alpha-beta-cancellation-itentities-dir}),
$\beta_{i w} = \beta'_{\sigma(i) s(w)}$  and $\beta_{w i} = \beta'_{s(w) \sigma(i)}$ for $i \in [m]$.
It follows that $\beta'_{\sigma(i) \psi(t)} = \beta'_{\sigma(i) s(w)}$ and $\beta'_{\psi(t) \sigma(i)} = \beta'_{s(w) \sigma(i)}$
for $i \in [m]$.
However, since $\sigma \colon[m] \to [m]$ is a bijection and,
as shown before, $H'$ is twin-free, this implies that  $\psi(t) = s(w)$. 
Recall that  $\psi_{|J_w} = s(w)$.
This proves that on $I_w \setminus J_w$,  $\psi$ also takes
the constant value $s(w)$.  
Thus $\psi_{|I_i} = s(i)$ for all $i \in [m]$.

%%%%%
Next we prove that $\beta_{i j} = \beta'_{\sigma(i) \sigma(j)}$ for
all $i, j \in [m]$,
i.e., $\sigma$ preserves the edge weights.

For each $\lambda = (b_i)_{i \in [k]}$,
$\tau = (c_i)_{i \in [k]} \in \{\downarrow, \uparrow, \perp\}^k$,
we define a directed graph $G_{\lambda, \tau}  \in \calPLG^{\simp}[k]$:
\begin{quote}
$V(G_{\lambda, \tau}) = \{u_1, \ldots, u_k, v, v'\}$,
with each $u_i$ labeled $i$.
There is a directed edge from $v$ to $v'$ and,
for each $i \in [k]$, there is a directed edge
from $v$ to $u_i$ if $b_i = \downarrow$,
from $u_i$ to $v$ if $b_i = \uparrow$,
and no edge between $u_i$ and $v$ if $b_i = \perp$;
there is a directed edge from $v'$ to $u_i$ if $c_i = \downarrow$,
from $u_i$ to $v'$ if $c_i = \uparrow$,
and no edge between $u_i$ and $v'$ if $c_i = \perp$.
There are no other edges.
%{\color{red} This defines the edge set $E(G_{\lambda, \tau})$. (itcs one, but we don't need to say a similar thing here)}
\end{quote}
%Clearly, each $G_{\lambda, \tau} \in \calPLG^{\simp}[k]$.

Let $R^2 = R \times R$.
By the definition of $R$, every $(\lambda, \tau) \in R^2$
corresponds to a sequence of tuples $(K_i, L_i, K_i', L_i')$, $i \in [m]$,
where $K_i \sqcup L_i \subset J_i
\subseteq I_i$, and also $K'_i \sqcup L'_i \subset J_i
\subseteq I_i$, for $i \in [m]$
such that $|K_i| = k_i, |L_i| = \ell_i,  |K'_i| = k'_i, |L'_i| = \ell'_i$,
where $0 \le k_i, \ell_i,  k'_i, \ell'_i < 2m$,
 and 
$\lambda_{|K_i}  = \downarrow$,  $\lambda_{|L_i} = \uparrow$, 
$\tau_{|K'_i}  =  \downarrow$,  $\tau_{|L'_i} = \uparrow$,
 and all other entries are $\perp$.
Moreover, the above correspondence between $(\lambda, \tau) \in R^2$
and the sequences of tuples $(K_i, L_i, K_i', L_i')$, $i \in [m]$, is bijective.
In particular, every possible choice of $0 \le k_i, \ell_i, k'_i, \ell'_i < 2m$, $i \in [m]$, is accounted for exactly once under this correspondence.

Then $\hom_\varphi(G_{\lambda, \tau}, H) = \hom_\psi(G_{\lambda, \tau}, H')$
for every $G_{\lambda, \tau}$ with $(\lambda, \tau) \in R^2$  is
expressed as
\[
\sum_{i, j = 1}^m \alpha_{i} \alpha_{j} \beta_{i j} \prod_{r = 1}^m (\beta_{i r}^{k_r} \beta_{r i}^{\ell_r} \beta_{jr}^{k'_r} \beta_{rj}^{\ell'_r}) =
\sum_{i, j = 1}^m \alpha'_{i} \alpha'_{j} \beta'_{i j} \prod_{r = 1}^m \left( (\beta'_{i s(r)})^{k_r} (\beta'_{s(r) i})^{\ell_r} (\beta'_{j s(r)})^{k'_r} (\beta'_{s(r) j})^{\ell'_r} \right).
\]
This holds for all $0 \le k_r, \ell_r, k'_r, \ell'_r < 2m$ 
where $r \in [m]$.
Transferring the RHS to the LHS and using (\ref{eq:alpha-beta-cancellation-itentities-dir}), we get
\[
\sum_{i, j = 1}^m (\alpha_{i} \alpha_{j} \beta_{i j} - \alpha_{i} \alpha_{j} \beta'_{\sigma(i) \sigma(j)}) \prod_{r = 1}^m (\beta_{i r}^{k_r} \beta_{r i}^{\ell_r} \beta_{jr}^{k'_r} \beta_{rj}^{\ell'_r}) = 0,
\]
for every $0 \le k_r, \ell_r,  k'_r, \ell'_r < 2m$ where $r \in [m]$.
Since the tuples $(\beta_{i r}, \beta_{r i})_{r \in [m]}$ for $i \in [m]$ are pairwise distinct,
the tuples $(\beta_{i r}, \beta_{r i}, \beta_{j r}, \beta_{r j})_{r \in [m]}$ for $i, j \in [m]$ are also pairwise distinct.
% Then by Corollary~\ref{cor:vand-cancellation-gen}, $\alpha_{i} \alpha_{j} \beta_{i j} - \alpha_{i} \alpha_{j} \beta'_{\sigma(i) \sigma(j)} = 0$.
Then by Corollary~\ref{cor:vand-cancellation-gen}, $\alpha_{i} \alpha_{j} \beta_{i j} - \alpha_{i} \alpha_{j} \beta'_{\sigma(i) \sigma(j)} = 0$ for $i, j \in [m]$.
But all $\alpha_i \ne 0$, so
\begin{equation}\label{eq:beta-preservation-dir}
\beta_{i j} = \beta'_{\sigma(i) \sigma(j)} \quad \text{for } i, j \in [m].
\end{equation}
This %which
means that the bijection $\sigma \colon [m] \to [m]$ preserves the edge weights in addition to the vertex weights by~(\ref{eq:alpha-beta-cancellation-itentities-dir}).
Hence $\sigma \colon [m] \to [m]$ is an isomorphism of $\bbF$-weighted graphs from $H$ to $H'$.
%(recall that $V(H) = [m]$ and $V(H) = [m'] = [m]$).

Finally, we show that $\psi = \sigma \circ \varphi$.
From~(\ref{eq:alpha-beta-cancellation-itentities-dir}) and~(\ref{eq:beta-preservation-dir}), we have
$\beta'_{\sigma(i) s(j)} = \beta_{i j} = \beta'_{\sigma(i) \sigma(j)}$ 
and $\beta'_{s(j) \sigma(i)} = \beta_{j i} = \beta'_{\sigma(j) \sigma(i)}$ for $i, j \in [m]$.
As $H'$ is twin-free and $\sigma$ is bijective we get $\sigma(j) = s(j)$ for $j \in [m]$.
Now let $x \in [k]$. Then $x \in I_i$ for some $i \in [m]$.
Therefore $\varphi(x) = i$ and so $\psi(x) = s(i) = \sigma(i) = \sigma(\varphi(x))$ confirming $\psi  = \sigma \circ \varphi$.
\end{proof}

With Lemma~\ref{lem:gh-highly-surj-dir}  
replacing Lemma~\ref{lem:gh-highly-surj},
the proof of Theorem~\ref{thm:gh-main} 
for undirected graphs in Section~\ref{sec:MainProof} can be easily adapted 
to directed graphs. This completes the proof of Theorem~\ref{thm:gh-main}.

\begin{remark}
The same remark at the end of Section~\ref{sec:MainProof} holds
for directed graphs with $\ell = k + 4m^3$.
\end{remark}

%\begin{remark}
%{\color{red} countexample for finite $\Char \bbF$}
%\end{remark}

\begin{proof}[Proof of Corollary~\ref{cor:gh-only-ew-k=0}]
We prove a stronger statement below; Corollary~\ref{cor:gh-only-ew-k=0}
is the special case $k=0$ of Corollary~\ref{cor:gh-only-ew}.
%The following corollary
%will be proved after the proof of Theorem~\ref{thm:gh-main} in
%Section~\ref{sec:MainProof}.
\begin{corollary}\label{cor:gh-only-ew}
%Let $\bbF$ be a field of characteristic $0$.
Let $H, H'$ be (directed or undirected) $\bbF$-edge-weighted graphs.
Let $\varphi \colon [k] \to V(H)$ and $\psi \colon [k] \to V(H')$ where $k \ge 0$.
If $\hom_\varphi(G, H) = \hom_\psi(G, H')$ for every $G \in \calPLG^{\simp}[k]$, %(for every directed simple loopless $k$-labeled graph $G$),
then there exists an isomorphism
%of $\bbF$-weighted graphs
$\sigma \colon V(H) \to V(H')$ from $H$ to $H'$
such that $\psi' = \sigma \circ \varphi$,
where for every $i \in [k]$, $\psi'(i)$ is a twin of $\psi(i)$.
\end{corollary}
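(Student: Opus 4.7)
The plan is to reduce Corollary~\ref{cor:gh-only-ew} to Corollary~\ref{cor:gh-both-twin-free} via twin contraction. Let $\widetilde H, \widetilde{H'}$ denote the twin contractions of $H, H'$, with quotient maps $\pi \colon V(H) \to V(\widetilde H)$ and $\pi' \colon V(H') \to V(\widetilde{H'})$ sending each vertex to its twin class. Because $H$ and $H'$ are edge-weighted (all vertex weights equal $1$), each vertex of $\widetilde H$ carries weight $|I| \in \bbF$, where $I$ is the underlying twin class; since $\Char \bbF = 0$, these weights are nonzero, so no vertex of $\widetilde H$ is deleted during twin contraction, and $\widetilde H, \widetilde{H'}$ are genuinely twin-free. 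Set $\tilde\varphi = \pi \circ \varphi$ and $\tilde\psi = \pi' \circ \psi$.

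The key step is the identity $\hom_\varphi(G, H) = \hom_{\tilde\varphi}(G, \widetilde H)$ for every $G \in \calPLG^{\simp}[k]$, and similarly for $H'$. Unpacking (\ref{eqn:partial-hom-dec}) with $\alpha_\phi = \alpha_\varphi = 1$, the LHS is $\sum_\phi \prod_{uv \in E(G)} \beta_H(\phi(u),\phi(v))$ over all $\phi$ extending $\varphi$. I would group these $\phi$'s by the induced map $\tilde\phi = \pi \circ \phi$; because twin vertices have identical edge weights, the inner product depends only on $\tilde\phi$, and the number of $\phi$'s collapsing to a fixed $\tilde\phi$ is $\prod_{v \text{ unlabeled}} |\tilde\phi(v)|$ (labeled vertices are pinned by $\varphi$). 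This is exactly the factor $\tilde\alpha_{\tilde\phi}/\tilde\alpha_{\tilde\varphi}$ appearing in (\ref{eqn:partial-hom-dec}) for $\widetilde H$, so the two expressions agree. Consequently the hypothesis transfers to the contractions, and Corollary~\ref{cor:gh-both-twin-free} produces an isomorphism $\tilde\sigma \colon V(\widetilde H) \to V(\widetilde{H'})$ with $\tilde\psi = \tilde\sigma \circ \tilde\varphi$.

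Finally, I would lift $\tilde\sigma$ to an isomorphism $\sigma \colon V(H) \to V(H')$. Since $\tilde\sigma$ preserves vertex weights, $|I| = |\tilde\sigma(I)|$ in $\bbF$, hence as positive integers; choose any bijections $\sigma_I \colon I \to \tilde\sigma(I)$ for each twin class $I$ of $H$, and glue them into $\sigma$. Twin interchangeability guarantees that $\sigma$ preserves edge weights, so it is an $\bbF$-weighted graph isomorphism. Setting $\psi'(i) = \sigma(\varphi(i))$, which lies in $\tilde\sigma(\tilde\varphi(i)) = \tilde\psi(i)$ and is therefore a twin of $\psi(i)$, completes the proof. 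The main obstacle is the accounting identity in the middle paragraph; it is precisely the fact that the vertex weights in $H$ are literally $1$ (so twin-class sizes $|I|$ appear) that makes the reduction work, and the rest is a mechanical lifting.
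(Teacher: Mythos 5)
Your proposal is correct and follows essentially the same route as the paper: contract twins in $H$ and $H'$, transfer the hypothesis to the contractions (using that the class sizes $|I|$ are nonzero in $\bbF$ since $\Char\bbF=0$, so no vertices are deleted), invoke Corollary~\ref{cor:gh-both-twin-free} to get $\tilde\sigma$ with $\tilde\psi=\tilde\sigma\circ\tilde\varphi$, and lift $\tilde\sigma$ by arbitrary bijections between corresponding twin classes of equal size. Your middle paragraph in fact spells out the counting identity $\hom_\varphi(G,H)=\hom_{\tilde\varphi}(G,\widetilde H)$ in more detail than the paper, which simply asserts it.
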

%
%
%
%
%{\color{red} Proof of the extension of Schrijver's result}
%\begin{proof}[Proof of Corollary~\ref{cor:gh-only-ew}]
%Let $H$ and $H'$ be the edge $\bbF$-weighted graphs
%with the edge weight matrices $A$ and $A'$ respectively.
%Since all $\alpha_i = 1$ and $\alpha'_i = 1$,
Since all vertex weights of $H$ and $H'$ are $1$,
after the twin reduction steps of $H$ to $\widetilde H$ and of $H'$ to $\widetilde{H'}$
(there is no $0$-weight vertices removal),
%resulting respectively in the twin compressions $\widetilde H$ and $\widetilde{H'}$,
every vertex $u \in V(H)$ is associated
 to a vertex $\widetilde u \in V(\widetilde H)$
with a positive integer vertex weight equal to the number of the vertices % $\alpha_{\widetilde u}$ 
in the twin equivalence class in $V(H)$ containing $u$;
the same is true for every vertex in $V(H')$.
% maps to a vertex $\widetilde u \in V(\widetilde{H'})$
%w.
%ith a positive integer vertex weight equal to the number of the vertices % $\alpha'_{\widetilde u}$
%i.
%n the twin equivalence class in $V(H')$ containing $u$.
%Thus none of the vertices of $H$ and $H'$ are mapped to $\perp$.
%More formally, for every $u \in V(H)$ and $u \in V(H')$, we have $\widetilde u \ne \perp$.
%For each $v \in V(\widetilde H)$, define $I_v$ to the twin equivalence class in $V(H)$
%consisting of the vertices in $V(H)$ mapping to $v$ after the twin reduction step of $H$ to $\widetilde H$,
%and for each $w \in V(\widetilde{H'})$, define $I'_w$ to the twin equivalence class in $V(H)$
%consisting of the vertices in $V(H)$ mapping to $w$ after the twin reduction step of $H'$ to $\widetilde{H'}$.
%Then $V(H) = \bigsqcup_{v \in V(\widetilde H)} I_v$ and $V(H') = \bigsqcup_{w \in V(\widetilde{H'})} I'_w$.
%Furthermore, $\widetilde{\varphi(i)} \in V(\widetilde H)$
%and $\widetilde{\psi(i)} \in V(\widetilde{H'})$ for $i \in [k]$.
%Then the map $\widetilde \varphi \colon [k] \to V(\widetilde H)$, $\widetilde \varphi(i) = \widetilde{\varphi(i)}$ for $i \in [k]$
%and the map $\widetilde \psi \colon [k] \to V(\widetilde{H})$, $\widetilde \psi(i) = \widetilde{\psi(i)}$ for $i \in [k]$
%are indeed maps to $V(\widetilde H)$ and $V(\widetilde H')$, respectively.
%Then $\hom_\varphi(G, H) = \hom_{\widetilde \varphi}(G, \widetilde H)$
%and $\hom_\varphi(G, H) = \hom_{\widetilde \psi}(G, \widetilde{H'})$
%for every $G \in \calPLG^{\simp}[k]$.
We  define $\widetilde \varphi \colon [k] \to V(\widetilde{H})$
 by $\widetilde{\varphi}(i) = \widetilde{\varphi(i)}$,
and $\widetilde \psi \colon [k] \to V(\widetilde{H'})$ from $\psi$
similarly.
It follows that $\hom_{\widetilde \varphi}(G, \widetilde H) = \hom_{\widetilde \psi}(G, \widetilde{H'})$ for every $G \in \calPLG^{\simp}[k]$.
By construction, the graphs $\widetilde H$ and $\widetilde{H'}$ are twin-free.
Therefore by Theorem~\ref{thm:gh-main},
% Corollary~\ref{cor:gh-both-twin-free}, 
there is an isomorphism
of $\bbF$-weighted graphs $\xi \colon V(\widetilde H) \to V(\widetilde{H'})$ 
from $\widetilde H$ to $\widetilde{H'}$
such that $\widetilde \psi = \xi \circ \widetilde \varphi$.
In particular, $\xi$ preserves the vertex weights, so that the twin class
% so $|I_u| = |I'_{\xi(u)}|$
$\widetilde{u}$ has the same size as
its corresponding $\xi(\widetilde{u})$.
By $\widetilde \psi = \xi \circ \widetilde \varphi$
the subset of labels in $[k]$ mapped  by $\varphi$ to a twin class 
is the same  subset mapped by $\psi$ to  the corresponding twin class. 
% for $u \in V(\widetilde H)$. % $\alpha_u = \alpha_{\xi(u)}$ which is the same as
%In particular, for every $u \in V(\widetilde H)$, $\alpha_u = \alpha'_{\xi(u)}$
%implying that the twin equivalence classes in $H$ and $H'$
%containing the preimages of $u$ and $\xi(u)$, respectively, have the same cardinality.
%%%
We can fix a bijection
for each pair of corresponding twin classes.
Then we can define
an isomorphism $\sigma$ of $\bbF$-weighted graphs from
$H$ to $H'$ by mapping $V(H)$ to $V(H')$, starting with
 $\xi$ and then expanding within each pair of corresponding
twin classes according to the chosen bijection.
The map $\psi' = \sigma \circ \varphi$
can only differ from $\psi$ mapping $i \in [k]$ to
a twin of $\psi(i)$, for all $i \in [k]$.
% $\varphi = \sigma \circ \psi$.
%
%{\color{red}
%Trivially, $\sigma \colon V(H) \to V(H')$ is a bijection preserving the vertex weights.
%%%%% Here there is no vertex weight. all 1.
%To see that it preserves edge weights, let $u, v \in V(H)$.
%Then $\sigma(u) = \sigma_{\widetilde u}(u) \in \xi(\widetilde u)$ and
%$\sigma(v) = \sigma_{\widetilde v}(v) \in \xi(\widetilde v)$ so
%\[
%\beta_{u v} = \beta_{\widetilde u \widetilde v} = \beta_{\xi(\widetilde u) \xi(\widetilde v)} = \beta_{\sigma_{\widetilde u}(u) \sigma_{\widetilde v}(v)} = 
%\beta_{\sigma(u), \sigma(v)}
%\]
%and we are done.  
%}
\end{proof}

\begin{remark}\label{remark-on-regts}
In their proof of Theorem 2.1 in~\cite{Goodall-et-al},
Goodall, Regts and Vena noted in a footnote on p.~268 that
``Even though Lov\'{a}sz [9] (this is~\cite{Lovasz-2006})
 works over $\mathbb{R}$ and assumes $a_i > 0$ for all $i$, 
it is easy to check that the proof of
his Lemma 2.4 remains valid in our setting.''
 The following is a sketch of this approach due to Goodall, Regts and Vena;
we thank Guus Regts for enlightening discussions.
Notations below can be found in~\cite{Lovasz-2006}.

Following the proof of Lemma 2.4 by  Lov\'{a}sz in~\cite{Lovasz-2006}
one comes to Claim 4.2. On p.~968 the set $\Psi$ is defined as
consisting of  all maps
equivalent to $\mu$. This set appears in the sum on line 2 of p.~969.
Let $\Psi_0$ be
the subset of   all maps $\eta \in \Psi$
that restrict to $\phi$. Since  $\mu \in \Psi$ and $\mu$ restricts to
$\phi$, the subset $\Psi_0$ is nonempty; however it could have
 cardinality $> 1$. Then in the sum on line~2 of p.~969
when one collects all terms with the same restriction, 
$\phi$ appears with the coefficient
$\sum_{\eta \in \Psi_0} \alpha(\eta(k + 1))$.
The crucial step in the proof of Claim 4.2
is that this sum is nonzero.
 In~\cite{Lovasz-2006} this is nonzero since all vertex weights
 are positive.
Without this positivity, it is possible that such a sum is $0$.
Thus one cannot directly follow the proof in~\cite{Lovasz-2006}
in the general case when a nonempty subset of vertex weights can sum to $0$.

However, instead of Claim 4.2 in~\cite{Lovasz-2006}
on line 6 of p.~970 after the proof of
Claim 4.4, one can
 extend $\phi$ to a  surjective map
$\mu: [\ell] \rightarrow [m]$ for some $\ell \ge k$. 
Then apply the trace operator of~\cite{Lovasz-2006}
 $\ell -k$ times to an analogous sum defined on line $-1$ of p.~968, 
which is in 
${\cal A}''_{\ell}$. 
This gives a sum $\Sigma = \sum_{\eta \in \Psi} \prod_{i \in [k+1: \ell]} 
\alpha(\eta(i)) \eta_{\rm res}$,
where $\eta_{\rm res}$ is the restriction of $\eta$ to $[k]$.

Then use Claim 4.4, all $\alpha(\eta(i))$ are the same for $\eta \in \Psi$
and thus $P = \prod_{i \in [k+1: \ell]} \alpha(\eta(i))$ is a common
factor. Since each $\alpha(\eta(i)) \ne 0$, this $P \ne 0$.
Since $\mu \in \Psi$, $\phi = \mu_{\rm res}$ appears as a term in $\Sigma$
with a coefficient $|S|P$, where $S$ is the subset of $\Psi$ consisting of maps
that restrict to $\phi$.  As $\mathbb{F}$ has characteristic $0$, 
this $|S|P \ne 0$.

Now since $\psi$ is given as equivalent to $\phi$,
and the sum $\Sigma$ is in ${\cal A}''_k$,
$\psi$ must have the same coefficient as that of $\phi$, which has
the form $|S'|P$, where $S'$ is the subset of $\Psi$ that restrict to $\psi$.
So $|S'| = |S|$, and in particular, $S'$ is nonempty.
Thus there is some $\eta \in \Psi$ such that $\eta$ extends $\psi$,
and $\eta$ is equivalent to $\mu$.

This leads to the following theorem following the approach due to Goodall, Regts and Vena.
\begin{theorem}
Let $\bbF$ be a field of characteristic $0$. Let 
$H$ be an undirected $\bbF$-weighted twin-free graph.
For any $k \ge 0$, if $\phi, \psi \colon [k] \rightarrow V(H)$
and $\hom_\phi(G, H) = \hom_\psi(G, H')$
for all undirected graphs $G$, where $G$ may
have multiloops and multiedges,
then there is an automorphism $\sigma$ of $H$ such that
$\phi = \sigma \circ \psi$.
\end{theorem}

\end{remark}

%%% contstructive part
\section{Effective GH dichotomies}

We briefly discuss how to use Theorem~\ref{thm:gh-main} to make
complexity dichotomies for graph homomorphism effective.
A long and fruitful sequence of work~\cite{Dyer-Greenhill-2000,
Dyer-Greenhill-corrig-2004,
Bulatov-Grohe-2005,Grohe-Thurley-2011,Thurley-2009,Goldberg-et-al-2010}
led to the following complexity dichotomy for
weighted graph homomorphisms~\cite{Cai-Chen-Lu-2013} which unifies these
previous ones:
There is a tractability condition ${\cal P}$ such that
for any (algebraic) complex symmetric matrix $H$, if $H$ satisfies ${\cal P}$
then $\hom(\cdot, H)$ is P-time computable, otherwise
there is a P-time reduction
from a canonical \#P-hard problem to $\hom(\cdot, H)$.  
However, in the long sequence of reductions in~\cite{Cai-Chen-Lu-2013}
there are nonconstructive steps; a prominent example is
the first pinning lemma (Lemma 4.1, p.~937). This involves condensing
$H$ by collapsing ``equivalent'' vertices, while introducing vertex weights.
Consider all $1$-labeled graphs $G$.
%, with one special vertex $u$ labeled 1.
We say two vertices $u, v \in V(H)$ are ``equivalent''
if $\hom_u(G, H) = \hom_v(G, H)$ where the 
notation $\hom_u(G, H)$ denotes the partial sum of
$\hom(G, H)$ where we restrict to all mappings which map
 the labeled vertex of $G$
to $u$, and similarly for $\hom_v(G, H)$.
This is just the special case $k=1$ in Theorem~\ref{thm:gh-main} (note that we first apply the twin compression step to $H$).
Previously the P-time reduction was proved existentially.
Using Theorem~\ref{thm:gh-main} (see Remark
%~\ref{effective-issue-inmain-thm}
at the end of Section~\ref{sec:MainProof}),
% after the proof),
 this step can be made effective.

There is a finer distinction between making the dichotomy
effective in the sense discussed here versus 
the decidability of the dichotomy.
The dichotomy criterion in~\cite{Cai-Chen-Lu-2013} is decidable 
in P-time (measured in the size of the description of $H$); 
however, according to the proof in~\cite{Cai-Chen-Lu-2013}  which
involves nonconstructive steps,
  when the decision algorithm decides  $\hom(\cdot,  H)$
is \#P-hard it does not produce a reduction.
The results in this paper can constructively produce such a reduction
when $\hom(\cdot,  H)$
is \#P-hard.

Previous versions of Theorem~\ref{thm:gh-main} (e.g., \cite{Lovasz-1967,Lovasz-2006})
show that the above equivalence on $u,v$ for suitably restricted classes
of $H$ can be decided
by testing for graph isomorphism (with pinning).  However, to actually obtain
the promised P-time reduction one has to search for ``witness'' graphs $G$
to $\hom_u(G, H) \not = \hom_v(G, H)$. Having no  graph isomorphism
mapping $u$ to $v$ does not readily yield such a ``witness'' graph $G$,
although an open ended search is guaranteed to find one.
Thus Theorem~\ref{thm:gh-main} gives a double exponential time (in the size of $H$) algorithm
to find a reduction algorithm (specifically to find a reduction
from the problem EVALP to EVAL in~\cite{Cai-Chen-Lu-2013}), 
while directly applying previous versions
of the theorem gives a computable process with no definite time bound.
(But we emphasize that no previous versions of Theorem~\ref{thm:gh-main}
apply to the dichotomy in~\cite{Cai-Chen-Lu-2013}.)

\section{Counterexample for fields of finite characteristic}\label{sec:counterexample}

In Lemmas~\ref{lem:gh-highly-surj} and \ref{lem:gh-highly-surj-dir}, the field $\bbF$ is arbitrary.
By contrast, for Theorem~\ref{thm:gh-main} the proof 
uses the assumption that $\Char \bbF = 0$.
We show that this assumption 
cannot be removed, for any fixed $k$, by an explicit counterexample.
The counterexample
also applies to Corollaries~\ref{cor:gh-both-twin-free-k=0}
%Corollaries~\ref{cor:gh-both-twin-free}
to \ref{cor:gh-only-ew-k=0-matrix-version}.

Let $\Char \bbF = p > 0$.
For $n \ge 2$ and $\ell_1 > \ldots > \ell_n >0$,
define an (undirected) $\bbF$-weighted graph 
$H = H_{n, \ell_1, \dots, \ell_n}$ 
with the vertex set $U \cup \bigcup_{i = 1}^n V_i$
where $U = \{u_1, \ldots, u_n\}$ and 
$V_i = \{v_{i, j} \mid  1 \le j \le \ell_i p\}$, for $i \in [n]$,
and the edge set being the union of the edge sets
that form a copy of the complete graph $K_n$ on $U$ and 
$K_{1 + \ell_i p}$ on $\{u_i\} \cup V_i$ for $i \in [n]$.
$H$ is a simple graph with no loops.
To make
% $H_{n, \ell_1, \dots, \ell_n}$
$H$ an $\bbF$-weighted graph,
%we assign each vertex weight $1$,
%assign each edge weight $1$. 
we assign each vertex and edge weight $1$.
(So $H$ is really unweighted.)
%(So $H$ is really unweighted; $H$ also belongs to edge weighted graphs, and we
%interpret any non-edge as having edge weight $0$.)
%{\color{red} (We can also think of $H$ as a directed graph)}
%Let $n \ge 2$ and $0 < \ell_1 < \ldots < \ell_n$.
It is easy  to see that $H$ is twin-free:
First, any two distinct vertices from $U$ or from the same $V_i$ are not twins
because $H$ is loopless.  
(Note that for vertices $i,j$ 
to be twins in an undirected graph, it is necessary that
if $(i,j)$ is an edge, then the loops $(i,i)$, $(j,j)$ must also be 
present.)
% (Steffen's comment was the following: ``I don't see why i and j must both have loops, I only see "iff"''; perhaps it's a bit confusing or wasn't phrased right, didn't get that comment, but even before I'd felt that the sentence may feel like iff)}
 Second, for any $i \in [n]$,  $u_i \in U$
and any $v \in V_i$ are not twins by $\deg(u_i) > \deg(v)$.
Third, $u_i \in U$ (or any  $v \in V_i$) and any $w \in V_j$, for $j \ne i$,
 are not twins because $w$ has some neighbor in $V_j$
while $u_i$ (or $v$) does not.
%It is easy to see that $H_{n, \ell_1, \ldots, \ell_n}$ is twin-free and
Let  $\sigma \in \Aut(H)$ be an automorphism of $H$.
Each vertex $u \in U$ has the property that $u$ has two
neighbors (one in $U$ and one not in $U$) such that
they are not neighbors to each other. 
This property separates $U$ from the rest.
Furthermore $\deg(u_1) > \ldots > \deg(u_n)$.
Therefore $\sigma$  must fix $U$ pointwise.
Then it is easy to see that $\sigma$ must permute within each $V_i$.

For any $\varphi \colon [k] \to U \subset V(H)$ where $k \ge 0$,
we claim that $\hom_\varphi(G, H) = \hom_\varphi(G, K_U)$ for 
every $G \in \calPLG[k]$,
where $K_U$ is the complete graph with the vertex set $U$.

Let ${\mathfrak S}_N$ denote the symmetry group on $N$ letters.
We define a  group action 
of $\prod_{i=1}^n {\mathfrak S}_{\ell_i p}$ on $\{\xi \mid \xi \colon V(G) \to V(H)\}$
which permutes the images of $\xi$ within each of $V_1, \ldots, V_n$,
and fixes $U$ pointwise.
Thus for $g = (g_1, \ldots, g_n) \in \prod_{i=1}^n {\mathfrak S}_{\ell_i p}$,
if $\xi(w) \in V_i$, then $\xi^g(w) = g_i(\xi(w))$.
This  group action partitions all  $\xi$  
into orbits. 
Consider any $\xi  \colon V(G) \to  V(H)$
extending $\varphi$ such that the image  $\xi(V(G)) \not \subseteq
U$. 
Let $\eta$ be in the  orbit of $\xi$. 
%{\color{red} (JYC: Steffen's suggestion "in the orbit of" or "in the same orbit as", probably the first one feels better)}
The nonzero contributions to $\hom_\xi(G, H)$ and $\hom_\eta(G, H)$
come from either edge weights within $U$, where they are
identical, or within each $\{u_i\} \cup V_i$. Hence by the definition of
the group action, 
$\hom_\xi(G, H) = \hom_\eta(G, H)$.
The stabilizer of $\xi$ consists of those $g$ such that
each $g_i$ fixes the image set $\xi(V(G)) \cap V_i$ pointwise.
Since $\xi(V(G)) \not \subseteq 
U$, the orbit has cardinality,
which is the index of  the stabilizer, divisible by some $\ell_i p$.
In particular it is $0 \bmod p$.
Thus the total contribution from each  orbit is zero in $\bbF$,
except for those  $\xi$ with $\xi(V(G)) \subseteq  U$.
The claim follows.

For $k \ge 1$, we take $H' = H$.
We say that maps  $\varphi, \psi \colon [k] \to U$ have the \emph{same type}
if for every $i, j \in [k]$, $\varphi(i) = \varphi(j)$ iff $\psi(i)
= \psi(j)$. Thus the inverse image sets $\varphi^{-1}(\varphi(i))$
and $\psi^{-1}(\psi(i))$ have the same cardinality for every $i \in [k]$.
It follows that the image sets $\varphi([k])$ and $\psi([k])$,
consisting of
the elements of $U$ having a nonempty inverse image sets
under $\varphi^{-1}$ and $\psi^{-1}$ respectively,
are of the same cardinality. So there is a bijection $\sigma$ of $U$,
mapping $U  \setminus  \varphi([k])$ to $U  \setminus  \psi([k])$,
and  also for every $i \in [k]$,
$(\sigma \circ \varphi) (i) = \psi(i)$.
Thus  $\sigma \in \Aut(K_U)$ and $\sigma \circ \varphi = \psi$.
Take $\varphi, \psi \colon [k] \to U$ such that $\varphi \ne \psi$, and they
have the same type.
For example, we can take
 $\varphi(i) = u_1$ and $\psi(i) = u_2$ for every $i \in [k]$.
Since  $\varphi$ and $\psi$ have the same type,
clearly $\hom_\varphi(G, K_U) = \hom_\psi(G, K_U)$.
For every $G \in \calPLG[k]$,
we have already shown that
$\hom_\varphi(G, K_U) = \hom_\varphi(G, H)$, and since $H'=H$,
$\hom_\psi(G, K_U) = \hom_\psi(G, H')$,
implying that $\hom_\varphi(G, H) = \hom_\psi(G, H')$.
If  Theorem~\ref{thm:gh-main} were to hold for the
field $\bbF$ of $\Char \bbF = p > 0$,
there would be an
automorphism $\widehat{\sigma}
\in \Aut(H)$ such that $\widehat{\sigma} \circ \varphi = \psi$.
But every  automorphism of $H$ must fix $U$ pointwise,
and thus it restricts to the identity map on $U$.
And since $\varphi([k]) \subseteq U$,
we have  $\widehat{\sigma} \circ \varphi = \varphi  \ne \psi$, 
a contradiction.

When $k = 0$, in addition to
$H= H_{n, \ell_1, \ldots, \ell_n}$ we also take $H'
 = H_{n, \ell'_1, \ldots, \ell'_n}$ 
on the vertex set $U \cup \bigcup_{i = 1}^n V'_i$,
where $n \ge 2$, 
$\ell'_1 > \ldots > \ell'_n >0$
and $(\ell'_1, \ldots, \ell'_n) \ne (\ell_1, \ldots, \ell_n)$.
As $k = 0$, the only possible choices are the empty maps
 $\varphi = \emptyset$ and $\psi = \emptyset$,
and $\hom(G, H) = \hom(G, K_U) = \hom(G, H')$
still holds for every $G \in \calPLG[0]$.
However, the same property that
every vertex $u \in U$ has  two
neighbors such that
they are not neighbors to each other separates $U$ from the rest
in both $H$ and $H'$.
Then the monotonicity  $\deg(u_1) > \ldots > \deg(u_n)$
within both $H$ and $H'$ shows that 
any isomorphism from $H$ to $H'$, if it exists,  must fix $U$ pointwise.
Then it is easy to see that $\sigma$ must be a bijection
from $V_i$ in $H$ to the corresponding copy $V_i'$ in $H'$.
This forces 
$(\ell_1, \ldots, \ell_n) =  (\ell'_1, \ldots, \ell'_n)$,
a contradiction.

\section{Rank of connection tensors and dimension of graph algebras}\label{sec:Implications}
%\section{Implications for connection matrices and tensors}\label{sec:Implications}

%%%---
%In this section, we work simultaneously in the frameworks of both directed and undirected GH.
%so $\calPLG[k], \calPLG_{\simp}[k]$ refer to either version.
%The results will be true for both frameworks.
%%%---

The purpose of this section is to extend %to summarize the extensions
%of
the main results from~\cite{Lovasz-2006}.
%, especial its main theorem (Theorem 2.2).
These are stated as Theorems~\ref{thm:con-obj-ranks-char-0-field} and~\ref{thm:N(H,k)-matrix-column-space}.

%%%
An $\bbF$-valued \textit{graph parameter} is a function from
finite graph isomorphism classes to $\bbF$.
 For convenience, we think of a graph parameter as a function defined on finite graphs and invariant under graph isomorphism. We 
allow multiple edges in our graphs, but no loops, as input to a graph parameter,
following the standard definition~\cite{Freedman-Lovasz-Schrijver-2007,Lovasz-2006}.
%%%
A graph parameter $f$ is called \textit{multiplicative}, if for any disjoint union $G_1 \sqcup G_2$ of graphs $G_1$ and $G_2$ we have $f(G_1 \sqcup G_2) = f(G_1) f(G_2)$.
%For labeled graphs, a graph parameter ignores its labels.
A graph parameter on a labeled graph ignores its labels.
Every weighted graph homomorphism $f_H = \hom(\cdot, H)$ is a multiplicative graph parameter.

A ($k$-labeled, $\bbF$-)\textit{quantum graph} is a finite formal $\bbF$-linear
combination of finite $k$-labeled graphs.
$\calG[k] = \bbF \calPLG[k]$ is the monoid algebra of $k$-labeled $\bbF$-quantum graphs.
We denote by $\calG^{\simp}[k]$ the monoid algebra of \textit{simple} $k$-labeled $\bbF$-quantum graphs;
it is a subalgebra of $\calG[k]$.
$U_k$ is the multiplicative identity
and the empty sum is the additive identity in both $\calG[k]$ and $\calG^{\simp}[k]$.

Let $f$ be any graph parameter. For all integers $k, n \ge 0$, 
we define the following $n$-dimensional array $T(f, k, n) \in \bbF^{(\calPLG[k])^n}$,
which can be identified with $(V^{\otimes n})^*$, the dual space of $V^{\otimes n}$,
where $V = \bigoplus_{\calPLG[k]} \bbF$ is the infinite dimensional vector space with
coordinates indexed by ${\calPLG[k]}$.
The entry of $T(f, k, n)$ at coordinate $(G_1, \ldots, G_n)$ is $f(G_1 \cdots G_n)$; when $n = 0$, 
we define $T(f, k, n)$ to be the scalar $f(U_k)$. The arrays 
$T(f, k, n)$ are symmetric with respect to its coordinates,
i.e., $T(f, k, n) \in \Sym(\bbF^{(\calPLG[k])^n})$.
We denote by $\symrank T(f_H, k, n)$ its symmetric rank (see the Appendix in Section~\ref{sec:appendix}).
Fix $f, k$ and $n$, we call the $n$-dimensional
 array $T(f, k, n)$ the ($k$-th, $n$-dimensional)
 \textit{connection tensor} of the graph parameter $f$. When $n = 2$, 
a connection tensor 
is exactly
a \textit{connection matrix} $M(f, k)$ of the graph parameter $f$ 
studied in \cite{Freedman-Lovasz-Schrijver-2007}, i.e., $T(f, k, 2) = M(f, k)$.

%---
%{\color{red}
The  empty graph is $K_0 = U_0$.
The proof of the following theorem comes from~\cite{Cai-Govorov-2019};
for convenience of the reader we include a simple proof here, which
also gives the decomposition~(\ref{eqn:T(f_H,k,n)-decomposition}) below.
\begin{theorem}\label{thm:uniform-exponential-rank-boundedness}
For any graph parameter defined by the graph homomorphism $f_H = \hom(\cdot, H)$,
 we have $f_H(K_0) = 1$ and $\symrank T(f_H, k, n) \leqslant |V(H)|^k$ for all $k, n \ge 0$. 
\end{theorem}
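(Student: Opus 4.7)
The plan is to prove this via the decomposition of $\hom(\cdot, H)$ into partial homomorphisms already introduced in the preliminaries; the multiplicativity property~(\ref{eq:part-hom-multiplicativity}) will hand us a symmetric rank-one decomposition of $T(f_H,k,n)$ directly, so essentially no new machinery is needed.

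First, I would dispense with the claim that $f_H(K_0) = 1$. Since $K_0 = U_0$ is the empty graph, the sum in~(\ref{eqn:hom-def}) ranges over maps $\phi \colon \emptyset \to V(H)$, of which there is exactly one (the empty map), and the associated product over $V(G)$ and $E(G)$ is empty, hence equal to $1$ by convention ($0^0=1$, empty products are $1$). This immediately gives $\hom(K_0,H) = 1$.

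For the rank bound, the key identity I would use is~(\ref{eqn:full-hom-dec}): for any $k$-labeled graph $G$,
\[
\hom(G,H) \;=\; \sum_{\psi\colon[k]\to V(H)} \alpha_\psi \,\hom_\psi(G,H).
\]
Apply this to $G = G_1 \cdots G_n$, and then invoke the multiplicativity identity~(\ref{eq:part-hom-multiplicativity}), which says $\hom_\psi(G_1\cdots G_n,H) = \prod_{i=1}^n \hom_\psi(G_i,H)$ for each fixed $\psi$. This yields the pointwise decomposition
\begin{equation}\label{eqn:T(f_H,k,n)-decomposition}
T(f_H,k,n)_{G_1,\ldots,G_n} \;=\; \sum_{\psi\colon[k]\to V(H)} \alpha_\psi \prod_{i=1}^n \hom_\psi(G_i,H).
\end{equation}

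Now for each $\psi$ define a vector $v_\psi \in \bbF^{\calPLG[k]}$ by $v_\psi(G) = \hom_\psi(G,H)$; the right-hand side of~(\ref{eqn:T(f_H,k,n)-decomposition}) is then $\sum_\psi \alpha_\psi \, v_\psi^{\otimes n}$, which is a sum of $|V(H)|^k$ scalar multiples of symmetric rank-one tensors. Since the scalars $\alpha_\psi$ can be absorbed into the $v_\psi$ (either by rescaling one factor, or over an algebraic closure by taking an $n$-th root, but in fact any scalar multiple of a symmetric rank-one tensor is itself symmetric rank one in the standard definition), this exhibits a symmetric decomposition of $T(f_H,k,n)$ into at most $|V(H)|^k$ rank-one symmetric summands. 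Hence $\symrank T(f_H,k,n) \le |V(H)|^k$, as claimed.

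The only subtle point — and it is minor — is ensuring that the passage from $\sum_\psi \alpha_\psi v_\psi^{\otimes n}$ to a genuine symmetric rank-$|V(H)|^k$ decomposition is valid over an arbitrary field $\bbF$; this is handled by noting that a scalar multiple $\alpha v^{\otimes n}$ is, by the standard definition of symmetric rank used here (matching that in~\cite{Cai-Govorov-2019}), itself a rank-one symmetric tensor (e.g., by absorbing $\alpha$ into a single coordinate of $v$ when expressing the tensor as a multilinear form, or by working in an appropriate extension), so the $|V(H)|^k$ bound on the number of rank-one summands is what matters.
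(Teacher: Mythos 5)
Your proof is correct and takes essentially the same route as the paper: both apply (\ref{eqn:full-hom-dec}) to a product $G_1\cdots G_n$ and use the multiplicativity (\ref{eq:part-hom-multiplicativity}) to obtain $T(f_H,k,n)=\sum_{\varphi\colon[k]\to V(H)}\alpha_\varphi\,(\hom_\varphi(\cdot,H))^{\otimes n}$, a combination of $|V(H)|^k$ symmetric $n$-th tensor powers. Your closing worry about absorbing the scalars $\alpha_\psi$ is moot, since the paper's definition of $\symrank$ already counts terms of the form $\lambda_i\,\bff_i^{\otimes n}$ with $\lambda_i\in\bbF$.
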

\begin{proof}
The first claim is obvious, as an empty product is $1$,
and the sum in~(\ref{eqn:hom-def})
 is over the unique empty map $\emptyset$ which is the only possible map
from  the empty set $V(K_0)$.
 For the second claim
  notice that by~(\ref{eq:part-hom-multiplicativity}) for any $k$-labeled graphs $G_1, \ldots, G_n$ and $\varphi \colon [k] \to V(H)$,
\begin{equation}\label{eqn:partial-hom-mult}
\hom_\varphi(G_1 \cdots G_n, H) = \hom_\varphi(G_1, H) \cdots \hom_\varphi(G_n, H).
\end{equation}
When $n = 0$, this equality is $\hom_\varphi(U_k, H) = 1$
according to~(\ref{eqn:hom-def}), as an empty product is $1$. %\ref{eqn:hom-phi-def}

%By (\ref{eqn:partial-hom-dec}),
%By (\ref{eqn:full-hom-dec}),
 %the connection tensor $T(f_H, k, n)$ is 
%a sum of $|V(H)|^k$ order-$n$ tensors, one for each mapping $\varphi \colon [k] \to V(H)$; (\ref{eqn:partial-hom-mult}) shows that these tensors are symmetric and have symmetric rank $1$.
%

By~(\ref{eqn:full-hom-dec}) and~(\ref{eqn:partial-hom-mult}),
for the connection tensor $T(f_H, k, n)$ we have the following decomposition:
%\[
\begin{equation}\label{eqn:T(f_H,k,n)-decomposition}
T(f_H, k, n) = \sum_{\varphi \colon [k] \to V(H)} \alpha_\varphi (\hom_\varphi(\cdot, H))^{\otimes n},
\end{equation}
%\]
where each $\hom_\varphi(\cdot, H) \in \bbF^{\calPLG[k]}$ and $k, n \ge 0$. Now each $T(f_H, k, n)$ is a linear combination of $|V(H)|^k$ tensor $n$-powers
and therefore $\symrank  T(f_H, k, n) \le q^k$ for $k, n \ge 0$.
\end{proof}
%---

Similarly to~\cite{Freedman-Lovasz-Schrijver-2007}
where an explicit expressibility criterion
involving connection matrices was shown,
the converse of this theorem is also true
and was shown in~\cite{Cai-Govorov-2019}.
The framework in~\cite{Cai-Govorov-2019} is undirected GH.
However, with slight adjustments,
a similar result can be obtained for directed GH
%(for more discussions, see Section 8 of the full version of~\cite{Cai-Govorov-2019}).
%(for more discussions, see the section ``Extensions'' of the full version of~\cite{Cai-Govorov-2019}).
(for more discussions, see Section 6 of the full version of~\cite{Cai-Govorov-2019}).
%}

For graph parameters of the form $f_H = \hom(\cdot, H)$,
where $H$ has positive vertex weights and real edge weights,
the main results of~\cite{Lovasz-2006}
are to compute the rank of the corresponding connection matrices,
and the dimension of graph algebras, etc.
We will prove these results for  arbitrary $\bbF$-weighted graphs 
(without  vertex or edge weight restrictions). Moreover we will prove these for
connection tensors (see~\cite{Cai-Govorov-2019}). 
Below we let $H$ be a (directed or undirected) $\bbF$-weighted graph.

For $k \ge 0$, let $N(k, H)$ be the matrix
whose rows are indexed by maps $\varphi \colon [k] \to V(H)$ 
and columns  are indexed by $\calPLG[k]$,
and the row indexed by $\varphi$ is $\hom_\varphi(\cdot, H)$.
We have a group action of $\Aut(H)$ on the $k$-tuples from $V(H)^k = \{\varphi \colon [k] \to V(H) \}$
by $\varphi \mapsto \sigma \circ \varphi$ for $\sigma \in \Aut(H)$ and $\varphi \colon [k] \to V(H)$.
We use $\orb_k(H)$ to denote the number of its orbits.

As mentioned before, $f_H = \hom(\cdot, H)$ ignores labels on a labeled graph,
so we can think of $f_H$ as defined on $\calPLG[k]$
and then by linearity as defined on $\calG[k]$.
Then we can define the following bilinear symmetric form on $\calG[k]$:
\[
\langle x, y \rangle = f_H(x y),\quad x, y \in \calG[k]. %~~~~~~\mbox{for}~~
\]

Let
%\[
%\calK_{[k]} = \{ x \in \calG[k] \colon 
%%f_H(x y) = 
%%% i deleted this f, f_H because just defined <,> 
%\langle x, y \rangle = 0 \ \forall y \in \calG[k] \}.
%\]
\[
\calK_{[k]} = \{ x \in \calG[k] \colon 
%f_H(x y) = 
%% i deleted this f, f_H because just defined <,> 
\forall y \in \calG[k], \, \langle x, y \rangle = 0 \}.
\]
Clearly, $\calK_{[k]}$ is an ideal in $\calG[k]$, so we can form a quotient algebra $\calG'[k] = \calG[k] / \calK_{[k]}$. 
%It is easy to see that $h \in \calK_{[k]}$ 
%%iff for any $g \in \calG[k]$, $h^T M(k, H) g = 0$ 
%iff  $M(k, H) h = 0$.
%$h^T M(k, H) = 0$.
%if we view $h$ as a column-tuple of finite support in $\calG[k] = \bigoplus_{\calPLG[k]} \bbF$ (as vector spaces only). %$\bbF^{\calPLG[k]}$.
%This is equivalent to $M(k, H) h = 0$ by the symmetry of $M(k, H)$.~\footnote{It is
%also easy to show that $\calG'[k]$, $\calK_{[k]}$ are precisely respectively
%the algebra $\tilde \calG[k]$ and the ideal $\calK_{[k]}$ from~\cite{Cai-Govorov-2019};
%in case of real edge and positive vertex weights
%this is precisely the algebra $\calG'[k]$
%(to see it, only the positivity of the vertex weights is needed (actually the fact that any nonempty sum of them is nonzero)).}
%%%%---

In order to be consistent with the notation in~\cite{Lovasz-2006},
when $f = f_H = \hom(\cdot, H)$
for an $\bbF$-weighted (directed or undirected) graph $H$,
we let $T(k, n, H) = T(f_H, k, n)$ where $k, n \ge 0$
and $M(k, H) = M(f_H, k)$ where $k \ge 0$.
It is easy to see that $h \in \calK_{[k]}$
%iff for any $g \in \calG[k]$, $h^T M(k, H) g = 0$
iff  $M(k, H) h = 0$.
%%% moved it here, b/c only now is M(k, H) defined

The following theorem extends
Theorem 2.2, Corollary 2.3 and the results of Section 3 in~\cite{Lovasz-2006}.

\begin{theorem}\label{thm:con-obj-ranks-char-0-field}
Let $\bbF$ be a field.
% of characteristic $0$.
Let $H$ be a (directed or undirected) $\bbF$-weighted 
%twin-free 
graph.
Then $\calG'[k] \cong \bbF^{r_k}$ as isomorphic algebras, where
\[
r_k = \dim \calG'[k] = \symrank T(k, n, H) = \rank T(k, n, H) = \rank M(k, H) 
\le  \rank N(k, H) \le \orb_k(H)
\]
for $k \ge 0$ and $n \ge 2$.
Here $\symrank$ denotes symmetric tensor rank,
and $\rank$ on $T$ and on $M, N$ denote tensor and matrix rank respectively
(see the Appendix in Section~\ref{sec:appendix}).

Furthermore, if  $\bbF$ has characteristic $0$ and  $H$ is twin-free,
then the above quantities are all equal. 
In particular, if $H$ 
has no nontrivial $\bbF$-weighted automorphisms,
then 
 $r_k = |V(H)|^k$.
\end{theorem}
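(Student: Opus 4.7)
My plan is to derive every equality from the decomposition~(\ref{eqn:T(f_H,k,n)-decomposition}) of Theorem~\ref{thm:uniform-exponential-rank-boundedness} combined with the Main Theorem~\ref{thm:gh-main} applied with $H' = H$. Since $\sigma \in \Aut(H)$ preserves vertex weights and satisfies $\hom_{\sigma \circ \varphi}(\cdot, H) = \hom_\varphi(\cdot, H)$ (the identity noted at the end of Section~2), we also have $\alpha_{\sigma \circ \varphi} = \alpha_\varphi$. Consequently the sum~(\ref{eqn:T(f_H,k,n)-decomposition}) collapses along $\Aut(H)$-orbits on the set of maps $[k] \to V(H)$: fixing orbit representatives $\varphi_1, \ldots, \varphi_s$ with $s = \orb_k(H)$,
\[
T(k, n, H) = \sum_{r = 1}^{s} |\orb(\varphi_r)|\, \alpha_{\varphi_r}\, \hom_{\varphi_r}(\cdot, H)^{\otimes n},
\]
which yields the upper bounds $\symrank T(k, n, H) \le s$ and, by the same collapse of identical rows, $\rank N(k, H) \le s$.

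For matching lower bounds, I apply Theorem~\ref{thm:gh-main} with $H' = H$: since $H$ is twin-free, an equality $\hom_{\varphi_r}(\cdot, H) = \hom_{\varphi_{r'}}(\cdot, H)$ on $\calPLG^{\simp}[k]$ forces $\varphi_{r'} = \sigma \circ \varphi_r$ for some $\sigma \in \Aut(H)$, hence $r = r'$. So the $s$ functionals $\hom_{\varphi_r}(\cdot, H)$ are pairwise distinct on $\calPLG[k]$; since each is a monoid homomorphism $(\calPLG[k], \cdot) \to (\bbF, \cdot)$ with $\hom_{\varphi_r}(U_k, H) = 1$ by~(\ref{eq:part-hom-multiplicativity}), the standard Artin-style argument for linear independence of characters makes them linearly independent over $\bbF$. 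Hence there exist $G_1, \ldots, G_s \in \calPLG[k]$ such that $A = (\hom_{\varphi_r}(G_t, H))_{r, t = 1}^s$ is invertible, proving $\rank N(k, H) \ge s$. For $M(k, H)$, the factorization $M(k, H) = N(k, H)^T \operatorname{diag}(\alpha_\varphi)\, N(k, H)$ yields $\rank M(k, H) \le \rank N(k, H) = s$, while the $s \times s$ submatrix of $M(k, H)$ indexed by $G_1, \ldots, G_s$ equals $A^T \operatorname{diag}(|\orb(\varphi_r)|\, \alpha_{\varphi_r})\, A$ and is invertible precisely because $\Char \bbF = 0$ forces each $|\orb(\varphi_r)|$ to be nonzero in $\bbF$; so $\rank M(k, H) = s$. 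For $n \ge 2$, the $(1, n-1)$-flattening of $T(k, n, H)$ has columns that depend only on the product $G_2 \cdots G_n$, so its column space coincides with that of $M(k, H)$; the standard chain rank of flattening $\le$ tensor rank $\le$ symmetric tensor rank then yields $s = \rank M(k, H) \le \rank T(k, n, H) \le \symrank T(k, n, H) \le s$, forcing equality throughout.

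Finally, I define $\Phi \colon \calG'[k] \to \bbF^s$ by $[x] \mapsto (\hom_{\varphi_r}(x, H))_{r = 1}^s$, extended by linearity. Multiplicativity of each $\hom_{\varphi_r}(\cdot, H)$ makes $\Phi$ a unital algebra homomorphism into $\bbF^s$ with componentwise product. Expanding $f_H(xy)$ and grouping by orbits shows that $x \in \calK_{[k]}$ iff $\sum_{r = 1}^s |\orb(\varphi_r)|\, \alpha_{\varphi_r}\, \hom_{\varphi_r}(x, H)\, \hom_{\varphi_r}(y, H) = 0$ for all $y \in \calG[k]$; by the linear independence of the $\hom_{\varphi_r}(\cdot, H)$ together with the nonvanishing of each $|\orb(\varphi_r)|\, \alpha_{\varphi_r}$, this is equivalent to $\hom_{\varphi_r}(x, H) = 0$ for every $r$. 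Hence $\Phi$ is well-defined and injective, and surjectivity is immediate because the images $\Phi([G_t])$ are the columns of the invertible matrix $A$ and thus span $\bbF^s$. This gives $\calG'[k] \cong \bbF^s$ as algebras and $\dim \calG'[k] = s = \orb_k(H)$; the special case $\Aut(H) = \{\id\}$ makes every orbit a singleton, yielding $\orb_k(H) = |V(H)|^k$. The main technical point — and the only place the characteristic-zero hypothesis is essential — is keeping the orbit-sum coefficients $|\orb(\varphi_r)|\, \alpha_{\varphi_r}$ nonzero in $\bbF$; the counterexamples of Section~\ref{sec:counterexample} show precisely how the conclusion fails in positive characteristic.
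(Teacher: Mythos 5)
Your proof is correct, and it reaches the same conclusion by a route that differs from the paper's in two substantive places. The paper first proves an arbitrary-field version (Theorem~\ref{thm:con-obj-ranks-abitrary-field}): it partitions $V(H)^k$ into classes $I_1,\ldots,I_s$ by \emph{equality of the functionals} $\hom_\varphi(\cdot,H)$ (not a priori by orbits), establishes linear independence of the distinct $\bfx_{\varphi_i}$ via its signature Vandermonde Argument (Corollary~\ref{cor:vand-cancellation-gen}) applied to products $\prod G_{ij}^{h_{ij}}$ of distinguishing graphs, and computes $\rank T=\symrank T=r$ from the appendix Lemma~\ref{lem:tensor-powers-linear-independence-nonsymmetric-and-symmetric-rank-and-uniqueness-long} (which additionally gives essential uniqueness of the decomposition for $n\ge 3$); only at the end does it invoke Corollary~\ref{cor:gh-both-twin-free} to identify the classes $I_i$ with $\Aut(H)$-orbits and $\Char\bbF=0$ to show every orbit sum $b_i=|I_i|_\bbF\,\alpha_{\varphi_i}$ is nonzero. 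You instead work with orbits from the start, prove linear independence by the Dedekind--Artin independence of characters applied to the monoid homomorphisms $\hom_{\varphi_r}(\cdot,H)\colon(\calPLG[k],\cdot)\to(\bbF,\cdot)$, and obtain the tensor-rank lower bound by a $(1,n-1)$-flattening whose column space equals that of $M(k,H)$. Both of your substitutions are valid: the Dedekind argument goes through even though these homomorphisms may take the value $0$ (the usual proof only needs $\hom_{\varphi_r}(U_k,H)=1$ and distinctness), and the flattening inequality holds in the paper's infinite-dimensional setting since a rank-$r$ decomposition of $T$ flattens to a rank-$\le r$ matrix. What the paper's route buys is the intermediate arbitrary-field theorem and the uniqueness statement for the tensor decomposition, neither of which your argument recovers; what your route buys is brevity and reliance on classical facts rather than on the Vandermonde machinery. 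You correctly isolate the role of $\Char\bbF=0$ (nonvanishing of the orbit-size coefficients $|\orb(\varphi_r)|_\bbF\,\alpha_{\varphi_r}$), which matches where the paper uses it.
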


\begin{comment}
In fact, before proving Theorem~\ref{thm:con-obj-ranks-char-0-field},
we prove the following statement because many properties
hold for an arbitrary field $\bbF$.
%(Below $\bbF^r$ where $r \ge 0$ is the trivial commutative algebra.)
\begin{theorem}\label{thm:con-obj-ranks-abitrary-field}
Let $\bbF$ be a field.
Let $H$ be a (directed or undirected) $\bbF$-weighted graph.
Then $\calG'[k] \cong \bbF^{r_k}$ as isomorphic algebras, where
\[
r_k = \dim \calG'[k] = \symrank T(H, k, n) = \rank T(H, k, n) = \rank M(H, k) \le \rank N(k, H) \le \orb_k(H)
\]
for $k \ge 0$ and $n \ge 2$.
Here $\symrank$ denotes symmetric tensor rank,
and $\rank$ on $T$ and on $M, N$ denote tensor and matrix rank respectively.
%%%---
%(When $k = 0$ the above inequalities are in fact equalities
%and the above quantities are all equal to $1$.)
%{\color{red} (the counterexample is for $k > 0$, which is why I've mentioned it here, probably move to the counterexample remark)}
%%%---
\end{theorem}
\end{comment}

The following theorem  generalizes Lemma 2.5 in~\cite{Lovasz-2006}.
%but we need to introduce one more definition first.
Let $k \ge 0$ be an integer and $H$ be an $\bbF$-weighted graph.
We say that a vector $f \colon V(H)^k \to \bbF$ is
\textit{invariant under the automorphisms} of $H$ 
%(concisely, under $\Aut(H)$)
if $f(\sigma \circ \varphi) = f(\varphi)$ for every $\sigma \in \Aut(H)$ and $\varphi \in V(H)^k$.
%When $k = 0$, we have $V(H)^0 = \{\emptyset\}$,
%so the only map $f \colon V(H)^k \to \bbF$ is the empty map $\emptyset$,
%which is trivially invariant under the automorphisms of $H$.
%{\color{red} (not sure if an elaboration is needed)}

\begin{theorem}\label{thm:N(H,k)-matrix-column-space}
Let $\bbF$ be a field of characteristic $0$.
Let $H$ be a (directed or undirected) $\bbF$-weighted twin-free graph.
Then for $k \ge 0$, the column space of $N(k, H)$ consists of precisely
those vectors $f \colon V(H)^k \to \bbF$ that are invariant under the automorphisms of $H$.
Moreover, every such vector can be obtained as
a finite linear combination of the columns of $N(k, H)$
indexed by $\calPLG^{\simp}[k]$.
% ($k$-labeled simple graphs).
\end{theorem}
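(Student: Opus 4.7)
The plan is to establish both inclusions for the column space of $N(k,H)$ inside $\bbF^{V(H)^k}$, with the stronger conclusion that already the columns indexed by $\calPLG^{\simp}[k]$ suffice. For the easy inclusion, I note that the subspace $W \subseteq \bbF^{V(H)^k}$ of vectors constant on every orbit of the $\Aut(H)$-action $\varphi \mapsto \sigma \circ \varphi$ has dimension $\orb_k(H)$, and each column of $N(k,H)$ already lies in $W$ because $\hom_{\sigma \circ \varphi}(G,H) = \hom_\varphi(G,H)$ for every $\sigma \in \Aut(H)$ and every $G \in \calPLG[k]$, as recorded at the end of Section~2.

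For the matching lower bound, let $N^{\simp}(k,H)$ denote the submatrix of $N(k,H)$ whose columns are indexed by $\calPLG^{\simp}[k]$; I will show its rank equals $\orb_k(H)$. Since rows indexed by $\varphi,\psi$ in the same orbit are identical, the rank is at most $\orb_k(H)$, and equality holds iff the rows of the orbit-quotient matrix are linearly independent. So suppose
\[
\sum_{[\varphi]} c_{[\varphi]} \hom_\varphi(G,H) = 0 \quad\text{for every } G \in \calPLG^{\simp}[k],
\]
where the sum runs over $\Aut(H)$-orbits in $V(H)^k$ and $c_{[\varphi]} \in \bbF$. For every ordered pair of distinct orbits $[\varphi] \ne [\psi]$, Theorem~\ref{thm:gh-main} applied with $H'=H$ (both already twin-free) produces a witness $G_{[\varphi],[\psi]} \in \calPLG^{\simp}[k]$ satisfying $\hom_\varphi(G_{[\varphi],[\psi]},H) \ne \hom_\psi(G_{[\varphi],[\psi]},H)$. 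Collecting these finitely many witnesses into a list $G_1,\ldots,G_s$, the tuples $(\hom_\varphi(G_j,H))_{j \in [s]}$ become pairwise distinct across orbits.

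Next I invoke the Vandermonde Argument. Because $\calPLG^{\simp}[k]$ is closed under the product and $\hom_\varphi$ is multiplicative by~(\ref{eq:part-hom-multiplicativity}), substituting $G = \prod_j G_j^{\ell_j}$ into the assumed identity yields
\[
\sum_{[\varphi]} c_{[\varphi]} \prod_{j=1}^s \hom_\varphi(G_j,H)^{\ell_j} = 0
\]
for every tuple of nonnegative integers $(\ell_j)$. Corollary~\ref{cor:vand-cancellation-gen} then forces $c_{[\varphi]} = 0$ for every orbit, so the column space of $N^{\simp}(k,H)$ fills all of $W$. This proves both the main assertion and the ``moreover'' clause, and the finiteness of the combination is automatic because $W$ is finite-dimensional. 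The main obstacle will be reconciling the product-closure required by the Vandermonde Argument with the restriction to simple graphs, and this is handled precisely by the fact, recorded in Section~2, that $\calPLG^{\simp}[k]$ is closed under the product operation (so arbitrary powers $G_j^{\ell_j}$ and their products never leave the simple-graph class).
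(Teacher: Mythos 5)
Your proposal is correct and follows essentially the same route as the paper: both arguments observe that every column of $N(k,H)$ is constant on $\Aut(H)$-orbits, use Corollary~\ref{cor:gh-both-twin-free} (the $H'=H$ case of the Main Theorem) to produce simple witness graphs separating distinct orbits, and then apply multiplicativity together with the Vandermonde Argument (Corollary~\ref{cor:vand-cancellation-gen}) to products of these witnesses to get linear independence of the orbit rows, finishing by a dimension count. The only cosmetic difference is that the paper first proves linear independence of the classes defined by equality of $\hom_\varphi(\cdot,H)$ over all of $\calPLG[k]$ and then identifies these classes with the orbits under the characteristic-zero and twin-free hypotheses, whereas you work with the orbits from the start.
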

From this theorem, we can immediately conclude the existence of
simple contractors and connectors %from $\calPLG^{\simp}[k]$.
(from $\calPLG^{\simp}[2]$)
 when $\Char \bbF = 0$ for GH functions
(see~\cite{Lovasz-Szegedy-2009} for the definitions).
We now proceed to the proof.
Below, we let $k \ge 0$ be arbitrary.
For $\varphi \in V(H)^k$, we let $\bfx_\varphi = \hom_\varphi(\cdot, H)$; we will think of it as a row-tuple
and use it as a shorthand.
%The proof of Theorem~\ref{thm:uniform-exponential-rank-boundedness} implies that
From the proof of Theorem~\ref{thm:uniform-exponential-rank-boundedness}
we have the following decomposition for the connection tensor $T(H, k, n)$ where $n \ge 0$:
\begin{equation}\label{eq:T(k,n,H)-decomposition}
T(k, n, H) = \sum_{\varphi \colon [k] \to V(H)} \alpha_\varphi (\hom_\varphi(\cdot,H))^{\otimes n}. %= \sum_{\varphi \colon [k] \to V(H)} \alpha_\varphi \bfx_\varphi^{\otimes n}.
\end{equation}
Here $\alpha_\varphi = \prod_{i \in [k]} \alpha_{\varphi(i)}$
for $\varphi \colon [k] \to V(H)$ as defined right after~(\ref{eqn:partial-hom-dec}).
For the connection matrix $M(k, H)$ (viewed as a $2$-dimensional array (tensor)) this means that
%For the connection matrices this means that for $k \ge 0$,
\[
M(k, H) = \sum_{\varphi \colon [k] \to V(H)} \alpha_\varphi (\hom_\varphi(\cdot,H))^{\otimes 2}. %= \sum_{\varphi \colon [k] \to V(H)} \alpha_\varphi \bfx_\varphi^{\otimes 2}.
\]

%{\color{red} no need to introduce the word equivalent, can do the partition by $\bfx_\varphi = \bfx_\psi$}

We call $\varphi, \psi \colon [k] \to V(H)$ equivalent, if $\bfx_\varphi = \bfx_\psi$,
i.e., if $\hom_\varphi(G, H) = \hom_\psi(G, H)$ for all $G \in \calPLG[k]$.
%Equivalently, this means $\bfx_\varphi = \bfx_\psi$.
%This defines the equivalence relation on the rows of $N(k, H)$ by their sameness.
%Clearly, this is an equivalence relation
%so it partitions $V(H)^k$ where $s = s_k$ (depends on $k$)
%into (nonempty) equivalence classes $I_1, \ldots, I_s$.
Clearly, this is an equivalence relation so it partitions $V(H)^k$
into (nonempty) equivalence classes $I_1, \ldots, I_s$.
%where the number of equivalence classes $s$ %(depends on $k$).
Note that $s \ge 1$ as $V(H)^k \ne \emptyset$.
For every $i \in [s]$, fix $\varphi_i \in I_i$.
%%%---
%Then for any $\varphi \in I_i$ where $i \in [s]$,
%we have $\hom_{\varphi_i}(\cdot, H) = \hom_\varphi(\cdot, H)$.
%Besides, $\hom_{\varphi_i}(\cdot, H) \ne \hom_{\varphi_j}(\cdot, H)$
%for any distinct $i, j \in [s]$.  %$i \ne j$.
%%%---
%Then Corollary~\ref{lem:part-hom-pair-dist-implies-lin-ind} implies that
%the rows $\bfx_{\varphi_1}, \ldots, \bfx_{\varphi_s}$ are linearly independent.
%{\color{red} (proof from the lemma inserted)}

%%%
We show that $\bfx_{\varphi_1}, \ldots, \bfx_{\varphi_s}$ are linearly independent.
%The case $s = 0$ is vacuously true. Let $s \ge 1$.
Suppose for some $a_1, \ldots, a_s \in \bbF$,
we have $\sum_{i = 1}^s a_i \bfx_{\varphi_i} = 0$.
Then $\sum_{i = 1}^s a_i \hom_{\varphi_i}(G, H) = 0$ for all $G \in \calPLG[k]$.
For every $1 \le i < j \le s$, we fix $G_{i j} \in \calPLG[k]$
such that $\hom_{\varphi_i}(G_{i j}, H) \ne \hom_{\varphi_j}(G_{i j}, H)$.
For any tuple $\bar h = (h_{i j})_{1 \le i < j \le s}$,
where each $0 \le h_{i j} < s$,
consider the graph $G_{\bar h} = \prod_{1 \le i < j \le s} G^{h_{i j}} \in \calPLG^{\simp}[k]$.
(If $s = 1$, then $\bar h$
is the empty tuple so $G_{\bar h} = U_k$).
Substituting $G = G_{\bar h}$ in the previous equality %in~(\ref{eq:gh-expanded-along-labels})
and using the multiplicativity of partial graph homomorphisms~(\ref{eq:part-hom-multiplicativity}),
we obtain
\[
\sum_{i = 1}^s a_i \prod_{1 \le j < \ell \le n} (\hom_{\varphi_i}(G_{j \ell}, H))^{h_{j \ell}} = 0.
\]
for every $0 \le h_{j \ell} < s$ where $1 \le j < \ell \le s$.
%By the previous observations,
Clearly,
the tuples $(\hom_{\varphi_i}(G_{j \ell}, H))_{1 \le j < \ell \le s}$
for $i \in [s]$ are pairwise distinct.
Applying Corollary~\ref{cor:vand-cancellation-gen},
we conclude that each $a_i = 0$,
which shows the linear independence of $\bfx_{\varphi_1}, \ldots, \bfx_{\varphi_s}$.
%%%

By definition, every row of $N(k, H)$ is precisely
one of $\bfx_{\varphi_1}, \ldots, \bfx_{\varphi_s}$
and each $\bfx_{\varphi_i}$ appears as a row in $N(k, H)$,
so we conclude that $\rank N(k, H) = s$.
%Since the rows of $N(k, H)$ are precisely
%$\bfx_{\varphi_1}, \ldots, \bfx_{\varphi_s}$ possibly with repetitions,
%we conclude that $\rank N(k, H) = s$.
%Since these are precisely all rows of $N(k, H)$ up to a repetition,
%we conclude that $\rank N(k, H) = s$.
Note that if $\varphi, \psi \in V(H)^k$
are in the same orbit of the action of $\Aut(H)$ on $V(H)^k$,
i.e., $\varphi = \sigma \circ \psi$
for some $\sigma \in \Aut(H)$ %$\varphi, \psi \in V(H)^k$ and
then $\hom_\varphi(\cdot, H) = \hom_\psi(\cdot, H)$
and therefore $\varphi, \psi \in I_i$ for some $i \in [s]$.
%(belong to the same equivalence class).
It follows that $\rank N(k, H) = s \le \orb_k(H)$.
Then~(\ref{eq:T(k,n,H)-decomposition}) can be written as
\[
T(k, n, H) = \sum_{i \in [s]} \sum_{\varphi \in I_i} \alpha_\varphi (\hom_\varphi(\cdot, H))^{\otimes n} = \sum_{i \in [s]} (\sum_{\varphi \in I_i} \alpha_\varphi) (\hom_{\varphi_i}(\cdot, H))^{\otimes n} = \sum_{i \in [s]} b_i (\hom_{\varphi_i}(\cdot, H))^{\otimes n},
\]
where $n \ge 0$.
Here $b_i = \sum_{\varphi \in I_i} \alpha_\varphi$ for $i \in [s]$.
%Then for the connection matrices we have %Again, for the connection matrices this means that
%Then
%\[
%M(k, H) = \sum_{i \in [s]} \sum_{\varphi \in I_i} \alpha_\varphi (\hom_\varphi(\cdot, H))^{\otimes 2} = \sum_{i \in [s]} (\sum_{\varphi \in I_i} \alpha_\varphi) (\hom_{\varphi_i}(\cdot, H))^{\otimes 2} = \sum_{i \in [s]} b_i (\hom_{\varphi_i}(G, H))^{\otimes 2}
%\]
%where $k \ge 0$.
Now let $J = \{ i \in [s] \mid b_i \ne 0\}$ and put $r = |J| \le s$. %{\color{red} (to indicate the dependence on $k$)}
%Then the above formulas can be further rewritten as
Then
\begin{equation}\label{eq:T(k,n,H)-compressed}
T(k, n, H) = \sum_{i \in J} b_i (\hom_{\varphi_i}(\cdot, H))^{\otimes n},
\end{equation}
where $n \ge 0$, and
\begin{equation}\label{eq:M(k,H)-compressed}
M(k, H) = \sum_{i \in J} b_i (\hom_{\varphi_i}(\cdot, H))^{\otimes 2}.
\end{equation}
By Corollary~\ref{cor:tensor-powers-linear-independence-nonsymmetric-and-symmetric-rank-and-uniqueness-long},
we have $\rank T(k, n, H) = \symrank T(k, n, H) = r$ for $n \ge 2$
(and also by Lemma~\ref{lem:tensor-powers-linear-independence-nonsymmetric-rank-and-uniqueness-long} for any $n \ge 3$, any expression $T(k, n, H)$
as $\sum_{i = 1}^r a_i \bfy_{i, 1} \otimes \cdots \otimes \bfy_{i, n}$ is a permutation of the sum in (\ref{eq:T(k,n,H)-compressed})).
It is also clear that the matrix rank $\rank M(k, H) = r$.~\footnote{%\color{red}
%In fact, one can show that for any (possibly infinite) matrix $A$,
%its row, column and nonzero minor ranks coincide
%in the sense that either all of them are
%the same nonnegative integer
%or all three of them are infinite and we put them to be $\infty$.
%%(either all three of them are finite so the same,
%%or one of them is infinite so all three of them are
%%and we put them to be $\infty$.
%Besides,
It can be shown that the matrix rank of a (possibly infinite) matrix $A$
coincides with the tensor rank of $A$. %,
%i.e., $\rank_{\operatorname{mat}} A = \rank_A$.
For a symmetric matrix $A$, as long as $\Char \bbF \ne 2$,
one can show that $\rank A = \symrank A$.
}
Now we show that $\calG'[k] \cong \bbF^r$.
For this, define the linear map
\[
\Phi \colon \calG[k] \to \bbF^r, \quad \Phi(h) = (\bfx_{\varphi_i}(h))_{i \in J}, \quad h \in \calG[k]. %(\hom_{\varphi_i}(h, H))_{i \in J} = 
\]
By~(\ref{eq:part-hom-multiplicativity}), we have 
\begin{gather*}
\Phi(g h) = \Phi(g) \Phi(h), \quad g, h \in \calG[k], \\
\Phi(U_k) = ( \bfx_{\varphi_i}(U_k))_{i \in J} = \underbrace{(1, \ldots, 1)}_{r = |J| \text{ times}} \in \bbF^r,
\end{gather*}
so $\Phi \colon \calG[k] \to \bbF^r$ is an algebra homomorphism.
We now prove its surjectivity.
%If $r = 0$ this is trivial so let $r \ge 1$.
Clearly, $\im \Phi$ is a subalgebra of $\bbF^r$.
If $\im \Phi \neq \bbF^r$, then $\dim \im \Phi < r$,
and so as a vector subspace of $\bbF^r$,
$\im \Phi$ has a nonzero annihilator, %in $\bbF^r$
%(nonzero orthogonal complement with respect to the dot product on $\bbF^r$),
i.e., there exists a nonzero tuple $(a_i)_{i \in J} \in \bbF^r$ such that
$\sum_{i \in J} a_i \bfx_{\varphi_i} (h) = 0$ for any $h \in \calG[k]$.
This implies that $(\sum_{i \in J} a_i \bfx_{\varphi_i}) (h) = 0$ for any $h \in \calG[k]$
and therefore $\sum_{i \in J} a_i \bfx_{\varphi_i} = 0$ %(by restricting to $\calPLG[k]$)
contradicting the linear independence of %$\bfx_{\varphi_i},\, i \in J$
$\bfx_{\varphi_1}, \ldots \bfx_{\varphi_s}$ (note that $J \subseteq [s]$).
Therefore $\im \Phi = \bbF^r$.
We have shown that $\Phi \colon \calG[k] \to \bbF^r$ is surjective.

Next, we show $\ker \Phi = \calK_{[k]}$.
For this, let $h \in \calG[k]$.
As noted before, $h \in \calK_{[k]}$ iff $M(k, H) h = 0$
which by (\ref{eq:M(k,H)-compressed}) is equivalent to $\sum_{i \in J} b_i \bfx_{\varphi_i}(h) \bfx_{\varphi_i} = 0$.
This,  by the linear independence of $\bfx_{\varphi_i},\, i \in J$,
is equivalent to $\bfx_{\varphi_i}(h) = 0$ for $i \in J$
which is in turn equivalent to $h \in \ker \Phi$.
Hence $\ker \Phi = \calK_{[k]}$.

Then $\Phi \colon \calG[k] \to \bbF^r$ factors through
$\calG[k] / \ker \Phi = \calG[k] / \calK_{[k]} = \calG'[k]$,
inducing an algebra isomorphism
\[
\Phi' \colon \calG[k] \to \bbF^r, \quad \Phi'(h + \calK_{[k]}) = (\bfx_{\varphi_i}(h))_{i \in J}, \quad h \in \calG[k].
\]
Thus $\calG'[k] \cong \bbF^r$.
This proves the first part of 
Theorem~\ref{thm:con-obj-ranks-char-0-field}.
%Combining it with what we have proved in this section
%%{\color{red} (the previously proved $r = \symrank T(H, k, n) = \rank T(H, k, n) = \rank M(k, H)$ for $n \ge 2$,
%%$r \le s$ and $\rank N(k, H) = s \le \orb_k(H)$), (JYC: way better than saying the red stuff)}, %(or just summarizing everything proved),
%we conclude that Theorem~\ref{thm:con-obj-ranks-abitrary-field} holds.
%%{\color{red} (When $k = 0$, that all the quantities in Theorem~\ref{thm:con-obj-ranks-abitrary-field} are equal to $1$ is trivial.)}
%%we obtain the following statement.

To continue with the proof of Theorem~\ref{thm:con-obj-ranks-char-0-field},
 we assume $\Char \bbF = 0$ and $H$ is twin-free.
By 
%Corollary~\ref{cor:gh-both-twin-free}, 
Theorem~\ref{thm:gh-main},
if $\varphi, \psi \in I_i$ for some $i \in [s]$,
then there exists $\sigma \in \Aut(H)$ such that $\psi = \sigma \circ \varphi$. %,
%in other words, $\varphi$ and $\psi$ are in the same orbit.
%The converse is also trivial and was noted earlier.
It follows that the orbits of the action of $\Aut(H)$ on $V(H)^k$
are precisely $I_i$ where $i \in [s]$.
Hence $\rank N(k, H) = s = \orb_k(H)$.
Additionally, in this case we have $\alpha_{\psi(i)} = \alpha_{\sigma(\varphi(i))} = \alpha_{\varphi(i)}$ for $i \in [k]$,
so $\prod_{i \in [k]} \alpha_{\psi(i)} = \prod_{i \in [k]} \alpha_{\varphi(i)}$,
in other words, $\alpha_\psi = \alpha_\varphi$.
Because $\varphi_i \in I_i$ we obtain that $b_i = \sum_{\varphi \in I_i} \alpha_\varphi = |I_i|_\bbF \cdot \alpha_{\varphi_i}$ for $i \in [s]$
(here  $|I_i|_\bbF = |I_i| \cdot 1_\bbF = 1_\bbF + \ldots + 1_\bbF \in \bbF$ where $1_\bbF$ occurs $|I_i|$ times).
Since all $\alpha_i \ne 0$, we have $\alpha_\varphi \ne 0$
(if $k = 0$, this product is $1_\bbF \ne 0$).
Because $I_i \ne \emptyset$ we have $|I_i| \ge 1$;
but $\Char \bbF = 0$ so $|I_i|_\bbF \ne 0$
and therefore $b_i = |I_i|_\bbF \cdot \alpha_{\varphi_i} \ne 0$ for $i \in [s]$.
Thus $J = [s]$ so that $r = s$.
This finishes the proof of Theorem~\ref{thm:con-obj-ranks-char-0-field}.
%We conclude that Theorem~\ref{thm:con-obj-ranks-char-0-field} holds.
%Combining everything shown so far, %this all with our previous argument,
%we conclude that Theorem~\ref{thm:con-obj-ranks-char-0-field} holds.
%we obtain the following statement.

Next, since every row  of $N(k, H)$ indexed by $\varphi \in I_i$
is the same within $I_i$,
%it is easy to see that
\[
\im N(k, H) 
%= \{N(k, H) x \mid x \in \bbF^{V(H)^k}\} 
\subseteq \Span\{ \sum_{\varphi \in I_i} e_\varphi \}_{i \in [s]},
\]
where  $\im N(k, H)$ consists of finite linear combinations of
the columns of $N(k, H)$, 
$\{e_\varphi\}_{\varphi \in (V(H))^k}$ is the canonical basis of $\bbF^{V(H)^k}$,
i.e., the $\varphi$-th entry of $e_\varphi$ is $1$ while the rest are $0$.
We denote the vector space on the RHS of this equation by $V$.
Since $\rank N(k, H) = s$ and $\dim V = s$,
it follows that $\im N(k, H) = V$,
i.e., $V$ is the column space of $N(k, H)$.
As shown before, $I_i$ where $i \in [s]$
are precisely the orbits of the action of $\Aut(H)$ on $V(H)^k$.
Therefore $V$ is precisely the subspace of the vectors in $\bbF^{V(H)^k}$
invariant under the automorphisms of $H$.

Since $\Char \bbF = 0$ and $H$ is twin-free,
in the previous proof of the linear independence of $\bfx_{\varphi_1}, \ldots, \bfx_{\varphi_s}$
by 
%Corollary~\ref{cor:gh-both-twin-free}, 
Theorem~\ref{thm:gh-main},
we can in fact choose $G_{i j} \in \calPLG^{\simp}[k]$ 
such that $\hom_{\varphi_i}(G_{i j}, H) \ne \hom_{\varphi_j}(G_{i j}, H)$,
for all $1 \le i < j \le s$.
Then this proof further implies
the linear independence of the restrictions of $\bfx_{\varphi_1}, \ldots, \bfx_{\varphi_s}$
to the coordinates $G_{\bar h} = \prod_{1 \le i < j \le n} G_{i j}^{h_{i j}} \in \calPLG^{\simp}[k]$
where $\bar h = (h_{i j})_{1 \le i < j \le s}$ with each $0 \le h_{i j} < s$.
Since every row of $N(k, H)$ is precisely
one of $\bfx_{\varphi_1}, \ldots, \bfx_{\varphi_s}$
and each $\bfx_{\varphi_i}$ appears as a row in $N(k, H)$,
%Since the rows of $N(k, H)$ are precisely $\bfx_{\varphi_1}, \ldots, \bfx_{\varphi_s}$
%possibly with repetitions,
the columns of $N(k, H)$ indexed by these $G_{\bar h}$
form a submatrix of rank $s$,
so we can choose $s$ linearly independent
columns among these.
%This gives us a nonzero $s \times s$ minor,
%and so we get $s$ linearly independent columns indexed by $G_{\bar h}$,
%they remain linearly independent in the entire $N(k, H)$ since its rows are
%repetitions of $\bfx_{\varphi_1}, \ldots, \bfx_{\varphi_s}$.
Since $\rank N(k, H) = s$,
we conclude that these $s$ columns indexed by elements of $\calPLG^{\simp}[k]$ span the entire
column space of $N(k, H)$ which is precisely $V$.
Thus the proof of Theorem~\ref{thm:N(H,k)-matrix-column-space} is complete.

\begin{remark}
When $k = 0$, it is easy to see that
Theorem~\ref{thm:con-obj-ranks-char-0-field} and
Theorem~\ref{thm:N(H,k)-matrix-column-space} hold
for arbitrary fields $\bbF$ and arbitrary $\bbF$-weighted graphs $H$ (directed or undirected).
%For the latter, this is the same as saying that
%in the statement of Theorem~\ref{thm:con-obj-ranks-abitrary-field} all $\le$ become $=$ 
Indeed,  for $k = 0$,
%the displayed chain of quantities in
the quantities in the displayed chain in
%%%
%the displayed equation chain in 
Theorem~\ref{thm:con-obj-ranks-char-0-field} are all $1$, and the statements of Theorem~\ref{thm:N(H,k)-matrix-column-space} are vacuously true.
%{thm:con-obj-ranks-char-0-field}
%and in the equality sequence in Theorem~\ref{thm:con-obj-ranks-abitrary-field} are all $1$).
However, when $k \ge 1$, one can show that both assumptions $\Char \bbF = 0$ and $H$ is twin-free are indispensable; otherwise, 
one can construct a counterexample in which all inequalities in
the statement of Theorem~\ref{thm:con-obj-ranks-char-0-field} 
are strict and the statements of Theorem~\ref{thm:N(H,k)-matrix-column-space} are false.  
\end{remark}

\section{Appendix: multilinear algebra}\label{sec:appendix}

In this paper we use notions such as  tensor and symmetric tensor rank
over infinite dimensional vector spaces such as
 $\calG[k]$. This infinite dimensionality
causes some  technical complications, and some well-known
theorems for finite dimensional spaces do not hold.
For the convenience of readers we  recap some definitions and
needed results, see also~\cite{Cai-Govorov-2019}.

%For the background on multilinear algebra over (infinite)-dimensional vector spaces,
%including the notation and the notions such as the tensor and symmetric tensor rank
%%{\color{red} (the rank of a tensor and the symmetric rank of a symmetic tensor)}
%we refer the reader to~\cite{Cai-Govorov-2019}.
%We only briefly recap a few definitions.

Below we use $V, V_1, \ldots $ to denote vector spaces (over $\bbF$)
and $\calI, \calI_1, \ldots$ to denote (index) sets.
If $V$ is a vector space, $V^*$ denotes its dual space,
and  $V^{\otimes n}$ the $n$-fold tensor power of $V$.

The symmetric group ${\mathfrak S}_n$ acts 
on $V^{\otimes n}$  by
$\sigma(\otimes_{i = 1}^n v_i) = \otimes_{i = 1}^n v_{\sigma(i)}$,
and extended by linearity.
%$\sigma(\bfx_1 \otimes \cdots \otimes \bfx_n) 
%= \bfx_{\sigma(1)} \otimes \cdots \otimes \bfx_{\sigma(n)}$.
Recall that $V^{\otimes n}$ consists of finite linear combinations of
such terms. 
We call a tensor $A \in V^{\otimes n}$ symmetric 
if $\sigma(A) = A$ for all $\sigma \in {\mathfrak S}_n$,
and denote by $\Sym^n(V)$ the set of symmetric tensors in $V^{\otimes n}$.
%Note that because 
As $\bbF$ may have finite characteristic $p$,
the usual symmetrizing operator from $V^{\otimes n}$ to $\Sym^n(V)$,
which requires division by $n!$, 
is in general not defined.

We will also denote the space of symmetric 
$n$-fold multilinear functions on $V$ by $\Sym((V^{\otimes n})^*)$, 
i.e., the functions from $(V^{\otimes n})^*$ that are symmetric.
We have $(V^*)^{\otimes n} \cap \Sym((V^{\otimes n})^*) = \Sym^n (V^*)$.

%Let us recall some definitions.
%We will use an equivalent definition of the tensor rank of a function,
%it is trivially equivalent to the one given in~\cite{Cai-Govorov-2019} (Appendix),
%the definition of the symmetric tensor rank of a function is the same as that in~\cite{Cai-Govorov-2019} (Appendix)

%We also need to refer to the rank of functions
%in $(\bigotimes_{i = 1}^n V_i)^*$. As mentioned before 
%$\bigotimes_{i = 1}^n V_i^*$ is embeded as a subspace 
%of $(\bigotimes_{i = 1}^n V_i)^*$.

Any $A \in \bigotimes_{j = 1}^n V_j$, where $n \ge 1$, can be expressed as
a finite sum
\[
A = \sum_{i = 1}^r \bfv_{i 1} \otimes \cdots \otimes \bfv_{i n},
\quad \bfv_{i j}\in V_j.
\]
The least $r \ge 0$ for which $A$ has such an expression
is called the
rank of $A$,  denoted by $\rank(A)$.
For $A \in \Sym^n(V)$  the symmetric rank of $A$
is the  least $r \ge 0$ for which $A$ can be expressed as
\[
A = \sum_{i = 1}^r \lambda_i \bfv_i^{\otimes n},
\quad \lambda_i \in \bbF, \bfv_i \in V,
\]
and is denoted by $\symrank(A)$.

We need to refer to the rank of functions
in $(\bigotimes_{i = 1}^n V_i)^*$. Clearly, 
$\bigotimes_{i = 1}^n V_i^*$ canonically embeds as a subspace 
of $(\bigotimes_{i = 1}^n V_i)^*$ (but for infinite dimensional
$V_i$, it embeds as a proper subspace).
For a function $F \in (\bigotimes_{i = 1}^n V_i)^*$, where $n \ge 0$,
we define the rank of the function $F$ to be $\infty$ if $F \notin \bigotimes_{i = 1}^n V_i^*$,
and if $F \in \bigotimes_{i = 1}^n V_i^*$,
the rank of $F$ is the least $r$ for which $F$ can be written as
\[
F = \sum_{i = 1}^r a_i \bff_{i 1} \otimes \cdots \otimes \bff_{i n},
\quad \quad a_i \in \bbF, \quad \bff_{i j} \in V_j^*.
\]
%{\color{red} (This definition is trivally equivalent to the one in~\cite{Cai-Govorov-2019}.)}
The symmetric rank $\symrank(F)$ of $F \in \Sym((V^{\otimes n})^*)$
  is similarly defined. 
It is $\infty$ if $F \not \in \Sym^n(V^*)$.
For $F \in \Sym^n(V^*)$, we define  $\symrank(F)$  to be the least $r$ 
such that
\[
F = \sum_{i = 1}^r \lambda_i \bff_i^{\otimes n},
\quad \quad \lambda_i \in \bbF, \quad \bff_i \in V^*,
\]
if such an expression exists; $\symrank(F) = \infty$ otherwise.
%By the same Lemma~\ref{lm:finite-symmetric-tensor-rank}
%for infinite $\bbF$, we have
%$\symrank(F) < \infty$ for all $F \in \Sym^n(V^*)$.

%{\color{red} multilinear algebra statements}

\begin{lemma}\label{lem:linear-independence-and-nonzero-minor-long}
The vectors $\bfx_1, \ldots, \bfx_r \in \bbF^\calI$ are linearly independent iff in the $r \times \calI$ 
matrix formed by $\bfx_1, \ldots, \bfx_r$ as rows 
there exists a nonzero $r \times r$ minor.
\end{lemma}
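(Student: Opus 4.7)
The plan is to prove both directions of this standard rank identity, being careful that the index set $\calI$ may be infinite so that traditional ``row rank $=$ column rank'' needs a brief justification. Throughout, let $X$ denote the $r \times \calI$ matrix whose rows are $\bfx_1, \ldots, \bfx_r$, and regard each column $X_j \in \bbF^r$ for $j \in \calI$.

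For the ``if'' direction, suppose some $r \times r$ minor is nonzero, say the one with column indices $j_1, \ldots, j_r \in \calI$. I would consider the projection map $\pi \colon \bbF^\calI \to \bbF^r$ onto the coordinates $j_1, \ldots, j_r$. The vectors $\pi(\bfx_1), \ldots, \pi(\bfx_r)$ form the rows of the chosen $r \times r$ submatrix, whose determinant is nonzero, so they are linearly independent in $\bbF^r$. Since a linear map cannot send linearly independent vectors to independent ones unless the originals are independent, $\bfx_1, \ldots, \bfx_r$ are linearly independent in $\bbF^\calI$.

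For the ``only if'' direction, let $d = \dim \Span\{X_j\}_{j \in \calI} \subseteq \bbF^r$, so $d \le r$. I would pick column indices $j_1, \ldots, j_d \in \calI$ such that $X_{j_1}, \ldots, X_{j_d}$ form a basis of the column space. Then every column $X_j$ can be written as a finite linear combination $X_j = \sum_{k=1}^d c_{j,k} X_{j_k}$, which entry-wise means $\bfx_i(j) = \sum_{k=1}^d c_{j,k} \bfx_i(j_k)$ for every $i \in [r]$. Consequently, any putative dependence $\sum_{i=1}^r a_i \bfx_i = 0 \in \bbF^\calI$ reduces to the \emph{finite} system $\sum_{i=1}^r a_i \bfx_i(j_k) = 0$ for $k \in [d]$, since the remaining coordinates are forced from these. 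This finite system is precisely the statement that $(a_1, \ldots, a_r)$ lies in the left null space of the $r \times d$ submatrix of $X$ on columns $j_1, \ldots, j_d$; by the hypothesis that the $\bfx_i$ are independent, this null space is trivial, forcing $d \ge r$. Combined with $d \le r$ we get $d = r$, and the $r \times r$ minor on columns $j_1, \ldots, j_d$ is nonzero.

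The only mildly subtle step is the finite-to-infinite reduction in the second direction, which would be a potential pitfall if one tried to blindly invoke ``row rank equals column rank'' for matrices indexed by an arbitrary set $\calI$; but once one expresses every column in the basis $X_{j_1}, \ldots, X_{j_d}$, the argument is routine and purely finite-dimensional.
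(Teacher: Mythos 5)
Your proof is correct, and the ``only if'' direction takes a genuinely different route from the paper's. You argue on the column side: the columns $X_j$ live in the finite-dimensional space $\bbF^r$, so their span has a finite basis $X_{j_1}, \ldots, X_{j_d}$ drawn from the columns themselves; every coordinate of a dependence relation is then forced by the coordinates $j_1, \ldots, j_d$, so the left null space of the $r \times d$ submatrix coincides with the space of global dependences, and triviality of the latter gives $d \ge r$ by rank--nullity, hence $d = r$ and an invertible $r \times r$ submatrix. The paper instead works entirely on the row side: it takes a maximal subset $R \subseteq [r]$ of rows whose restrictions to some finite $C \subseteq \calI$ are independent, and shows $R = [r]$ by noting that if some row $\bfx_j$ were restricted-dependent on the rows in $R$ over $C$, the coefficients of that dependence would be unique and would persist when any further index $k$ is adjoined to $C$, yielding a global dependence $\bfx_j = \sum_{i \in R} c_i \bfx_i$ and a contradiction. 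Your argument is shorter and reduces immediately to standard finite-dimensional facts (a spanning set contains a basis; row rank equals column rank for finite matrices), at the cost of introducing the column space; the paper's exchange-style argument never mentions columns and explicitly constructs the finite witness set $C$ row by row. Both are complete and elementary, and both correctly isolate the only subtlety, namely that $\calI$ may be infinite.
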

\begin{proof}
$\Leftarrow$ is obvious, so let us prove $\Rightarrow$.
Let $R \subseteq [r]$ be a maximal subset satisfying the property that
for some finite subset $C \subseteq \calI$ the
set of vectors $\{\bfx_i\mid_{C} \: : \, i \in R\}$ is linearly independent,
where $\bfx_i\mid_{C}$ is the restriction of $\bfx_i$ to $C$.
Suppose linear independence is achieved by $C$ for $R$.
Then
it also holds for any $C' \supseteq C$. 

If  $R \not = [r]$,  let $j \in [r] \setminus R$,
and consider $R^+ = R \cup \{j\}$. 
$\{\bfx_i\mid_{C}  \: : \, i \in R^+\}$ is linearly dependent.
Hence a unique linear combination holds for some $c_i \in \bbF$ ($i \in R$),
\begin{equation}\label{eqn:onC-lin-dep}
\bfx_j\mid_{C} \, = \sum_{i\in R} c_i \bfx_i\mid_{C}.
\end{equation}
For any $k \not \in C$,
$\{\bfx_i\mid_{C\cup\{k\}}  \: : \, i \in R^+\}$ is
also linearly dependent, and 
we have $\bfx_j\mid_{C\cup\{k\}}
= \sum_{i\in R} c'_i \bfx_i\mid_{C\cup\{k\}}$ for some $c'_i \in \bbF$.
Compared to (\ref{eqn:onC-lin-dep}), $c'_i = c_i$ for all $i \in R$.
Hence $\bfx_j \, = \sum_{i\in R} c_i \bfx_i$, a contradiction
to $\{\bfx_1, \ldots, \bfx_r\}$ being linearly independent.
So $R = [r]$.  There exists a nonzero $r \times r$ minor
in the $R \times C$ submatrix.
\end{proof}

For $\bfx = (x_i)_{i \in \calI} \in \bbF^\calI$
and  $h = (h_i)_{i \in \calI}  \in \bigoplus_\calI \bbF$ 
(in a direct sum, only finitely many $h_i$ are zero),
%(thus all but finitely many $h_i$ are zero),
we denote their dot product
 by $\bfx(h) = \sum_{i \in \calI} x_i h_i \in \bbF$. 
 Here we view $\bbF^\calI$ as the dual space of $\bigoplus_\calI \bbF$.
(In general the dot product for $\bfx, \bfy  \in \bbF^\calI$
is not defined.)

\begin{lemma}\label{lem:linear-independence-and-dual-vectors-long}
Let $\bfx_1, \ldots, \bfx_r \in \bbF^\calI$ be linearly independent. Then there exist $h_1, \ldots, h_r \in \bigoplus_{\calI} \bbF$ dual
to $\bfx_1, \ldots, \bfx_r$, i.e., $\bfx_i(h_j) = \delta_{i j},\, 1 \le i, j 
\le  r$.
\end{lemma}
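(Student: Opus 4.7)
The plan is to reduce the infinite-dimensional situation to a finite one using the preceding Lemma~\ref{lem:linear-independence-and-nonzero-minor-long} and then invert a finite matrix. Since $\bfx_1, \ldots, \bfx_r \in \bbF^\calI$ are linearly independent, that lemma furnishes a finite subset $C \subseteq \calI$ with $|C| = r$ such that the $r \times r$ matrix $M = (\bfx_i(e_c))_{i \in [r],\, c \in C}$ is nonsingular, where $e_c \in \bigoplus_\calI \bbF$ denotes the standard basis vector supported at $c$.

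Having isolated such a $C$, I would define the candidate duals inside $\bigoplus_\calI \bbF$ by pulling back the columns of $M^{-1}$. Concretely, for each $j \in [r]$ set
\[
h_j = \sum_{c \in C} (M^{-1})_{c, j}\, e_c \in \bigoplus_\calI \bbF,
\]
which is a finite sum and hence lies in the direct sum as required. Then the computation
\[
\bfx_i(h_j) = \sum_{c \in C} \bfx_i(e_c)\, (M^{-1})_{c,j} = \sum_{c \in C} M_{i, c}\, (M^{-1})_{c, j} = (M M^{-1})_{i, j} = \delta_{i j}
\]
gives the duality relations in one line.

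There is essentially no obstacle here once Lemma~\ref{lem:linear-independence-and-nonzero-minor-long} is available; the entire content of the statement is that infinite-dimensional linear independence can be certified by a finite nonsingular submatrix, which is exactly what the previous lemma provides. The only minor subtlety worth flagging explicitly in the write-up is that the $h_j$ must genuinely belong to $\bigoplus_\calI \bbF$ (the direct sum, with finite support) rather than the product $\bbF^\calI$; this is automatic from the construction because $C$ is finite, so each $h_j$ has support contained in $C$. No case analysis on $r = 0$ is needed since the claim is vacuous there (an empty family of duals trivially works).
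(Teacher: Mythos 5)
Your proof is correct and follows essentially the same route as the paper: both invoke Lemma~\ref{lem:linear-independence-and-nonzero-minor-long} to extract a nonsingular $r \times r$ submatrix and then take the $h_j$ to be the columns of its inverse pushed back into $\bigoplus_\calI \bbF$. Nothing further is needed.
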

\begin{proof}
By Lemma~\ref{lem:linear-independence-and-nonzero-minor-long}, there exist $r$ 
%pairwise
distinct indices $k_j \in \calI,\, 1 \le j \le r$ such that the matrix $A = (a_{i j})_{i, j = 1}^r = ((\bfx_i)_{k_j})_{i, j = 1}^r$ is invertible,
and let $B = (b_{i j}) = A^{-1}$.
%_{i, j = 1}^r$.
%Let $B = (b_{i j})_{i, j = 1}^r$ be its inverse. 
Taking $h_i = \sum_{j = 1}^r b_{j i} e_{k_j} \in \bigoplus_\calI \bbF,\, 1 \le i \le r$, we see that the 
equality 
$AB = I_r$ directly translates into the desired result.
\end{proof}

\begin{lemma}\label{lem:quotient-space-dimension-long}
Let $\bfx_1, \ldots, \bfx_r \in \bbF^\calI$. Consider the linear map $\Phi \colon \bigoplus_\calI \bbF \to \bbF^r,\, h \mapsto ( \bfx_1(h), \ldots, \bfx_r(h) )$. Then $\dim (\bigoplus_\calI \bbF / \ker \Phi) = \dim \Span \{ \bfx_i \}_{i = 1}^r$.
\end{lemma}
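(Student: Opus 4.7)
The plan is to reduce the statement to a standard first-isomorphism-theorem calculation, then bound $\dim \im \Phi$ from above and below by $s := \dim \Span\{\bfx_i\}_{i=1}^r$.

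First I would apply the first isomorphism theorem for vector spaces to identify $\bigoplus_\calI \bbF / \ker \Phi$ with $\im \Phi \subseteq \bbF^r$. Thus it suffices to prove $\dim \im \Phi = s$. After a reordering (which does not change either side of the claimed equality), I may assume $\bfx_1, \ldots, \bfx_s$ are linearly independent and each $\bfx_i$ with $i>s$ satisfies $\bfx_i = \sum_{j=1}^s c_{ij}\bfx_j$ for scalars $c_{ij}\in\bbF$.

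For the upper bound, these dependency relations pass through $\Phi$: for every $h\in\bigoplus_\calI\bbF$ and every $i>s$,
\begin{equation*}
\bfx_i(h) \;=\; \sum_{j=1}^{s} c_{ij}\,\bfx_j(h),
\end{equation*}
so the last $r-s$ coordinates of $\Phi(h)$ are determined linearly by the first $s$. Hence $\im \Phi$ is contained in the graph of a linear map $\bbF^s \to \bbF^{r-s}$, giving $\dim \im \Phi \le s$.

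For the lower bound I would invoke Lemma~\ref{lem:linear-independence-and-dual-vectors-long} applied to the linearly independent vectors $\bfx_1,\ldots,\bfx_s$: it produces dual elements $h_1,\ldots,h_s \in \bigoplus_\calI \bbF$ with $\bfx_i(h_j) = \delta_{ij}$ for $1\le i,j\le s$. The first $s$ coordinates of $\Phi(h_1),\ldots,\Phi(h_s)$ then form the standard basis of $\bbF^s$, so the vectors $\Phi(h_1),\ldots,\Phi(h_s)$ are linearly independent in $\bbF^r$, giving $\dim \im \Phi \ge s$. Combining the two bounds finishes the proof.

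I do not expect a real obstacle here: the only non-routine ingredient is the existence of dual finitely-supported functionals, which is already supplied by Lemma~\ref{lem:linear-independence-and-dual-vectors-long} (itself relying on Lemma~\ref{lem:linear-independence-and-nonzero-minor-long}). Everything else is bookkeeping with the first isomorphism theorem and the observation that a linear dependence among the $\bfx_i$'s forces the corresponding coordinate dependence in $\im\Phi$.
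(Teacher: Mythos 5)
Your proposal is correct and follows the same route as the paper: reduce via the first isomorphism theorem to computing $\dim \im \Phi$, then handle the linearly independent case using the dual vectors from Lemma~\ref{lem:linear-independence-and-dual-vectors-long}. You merely spell out the two steps the paper leaves implicit (why one may reduce to the independent case, and why the dual vectors give the lower bound), and both are carried out correctly.
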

\begin{proof}
By the First Isomorphism Theorem for vector spaces $\bigoplus_I \bbF / \ker \Phi \cong \im \Phi$.
So it suffices to prove $\dim \im \Phi = \dim \Span \{ \bfx_i \}_{i = 1}^r$. 
Clearly it suffices to prove the case when
$\bfx_1, \ldots, \bfx_r$ are linearly independent, and that follows
directly from Lemma~\ref{lem:linear-independence-and-dual-vectors-long}.
\end{proof}

\begin{lemma}\label{lem:tensor-powers-linear-independence-nonsymmetric-rank-and-uniqueness-long}
Let $r \ge 0$, $n \ge 2$ and let $\bfx_{1, j}, \ldots, \bfx_{r, j} \in \bbF^{\calI_j}$ be $r$ linearly independent vectors for $1 \le j \le n$ and $a_1, \ldots, a_r \in \bbF \setminus \{0\}$.
%Then for any integer $n \ge 2$, the symmetric tensor
Then the tensor
\begin{equation}\label{eq:tensor-powers-linear-independence-nonsymmetric-rank-and-uniqueness-long}
A = \sum_{i = 1}^r a_i \bfx_{i, 1} \otimes \cdots \otimes \bfx_{i, n} %\sum_{i = 1}^r a_i \bigotimes_{j = 1}^n \bfx_{i, j} 
\end{equation}
has $\rank(A) = r$. For $n \ge 3$, any expression of $A$ as $\sum_{i = 1}^r b_i \bfy_{i, 1} \otimes \cdots \otimes \bfy_{i, n}$ is a permutation of the sum in (\ref{eq:tensor-powers-linear-independence-nonsymmetric-rank-and-uniqueness-long}).
\end{lemma}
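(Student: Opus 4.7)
The plan is to use dual vectors (Lemma~\ref{lem:linear-independence-and-dual-vectors-long}) as the main tool and reduce everything to contractions. At the start I would fix, for each mode $j \in [n]$, dual vectors $h_{1,j}, \ldots, h_{r,j} \in \bigoplus_{\calI_j} \bbF$ with $\bfx_{i,j}(h_{k,j}) = \delta_{ik}$; this is possible since the $\bfx_{i,j}$ are linearly independent. For the rank equality, the upper bound $\rank(A) \le r$ is immediate. For the lower bound I would take any expression $A = \sum_{i=1}^s c_i \bfv_{i,1} \otimes \cdots \otimes \bfv_{i,n}$ and contract against $h_{k,2} \otimes \cdots \otimes h_{k,n}$ for each $k \in [r]$. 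The original decomposition returns $a_k \bfx_{k,1}$, while the alternative returns an element of $\Span\{\bfv_{1,1}, \ldots, \bfv_{s,1}\}$; the $r$ vectors $a_k \bfx_{k,1}$ are linearly independent, so $s \ge r$.

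For the uniqueness claim with $n \ge 3$, fix any expression $A = \sum_{i=1}^r b_i \bfy_{i,1} \otimes \cdots \otimes \bfy_{i,n}$. The first step is to show that for every $j$ the family $\{\bfy_{i,j}\}_{i=1}^r$ is linearly independent, which I would prove by sandwiching the mode-$j$ image $W_j(A)$: the $\bfy$-expression shows $W_j(A) \subseteq \Span\{\bfy_{i,j}\}_{i=1}^r$ (dimension $\le r$), while contracting the $\bfx$-expression against the duals $h_{k,\ell}$ in every $\ell \ne j$ places $a_k \bfx_{k,j}$ inside $W_j(A)$, giving $\Span\{\bfx_{k,j}\}_{k=1}^r \subseteq W_j(A)$ (dimension $= r$). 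Hence both containments are equalities, the $\bfy_{i,j}$ are linearly independent, and each $b_i \ne 0$.

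The heart of the argument is then a rank-$1$ reduction. For each $k$ I would contract against $h_{k,3} \otimes \cdots \otimes h_{k,n}$: the $\bfx$-side returns $a_k \bfx_{k,1} \otimes \bfx_{k,2}$, while the $\bfy$-side returns $\sum_{i=1}^r b_i c_{i,k} \bfy_{i,1} \otimes \bfy_{i,2}$ with $c_{i,k} = \prod_{j=3}^n \bfy_{i,j}(h_{k,j})$. Because the $\{\bfy_{i,1}\}$ and $\{\bfy_{i,2}\}$ are each linearly independent, the latter's rank equals $|\{i : b_i c_{i,k} \ne 0\}|$, which must be $1$; let $\sigma(k)$ be that surviving index. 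This yields nonzero scalars $\lambda_{k, j}$ ($j = 1, 2$) with $\bfy_{\sigma(k), j} = \lambda_{k, j} \bfx_{k, j}$, and injectivity of $\sigma$ follows from the linear independence of $\{\bfx_{k,1}\}$, making $\sigma$ a permutation of $[r]$. Running the same contraction with modes $\{3, \ldots, n\}$ replaced by $[n] \setminus \{1, j\}$ reproduces the same $\sigma$ (again by the independence of the $\bfy$'s) and supplies $\lambda_{k, j}$ for every $j$. Finally, the simple tensors $\bfx_{k,1} \otimes \cdots \otimes \bfx_{k,n}$ are linearly independent (another application of the dual vectors), so comparing the two decompositions of $A$ term by term forces $b_{\sigma(k)} \prod_{j=1}^n \lambda_{k, j} = a_k$, exhibiting the $\bfy$-expression as a permutation of the $\bfx$-expression. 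The principal obstacle is precisely the rank-$1$ reduction step: it is \emph{this} step that requires $n \ge 3$, because for $n = 2$ matrices do not admit unique low-rank decompositions and the entire argument collapses.
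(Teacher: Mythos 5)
Your proof is correct and follows essentially the same strategy as the paper's: fix dual vectors via Lemma~\ref{lem:linear-independence-and-dual-vectors-long}, contract to get the rank lower bound, then use a rank-one reduction to show exactly one index survives each contraction, yielding the permutation. The only (immaterial) difference is that you contract all but two modes at once and invoke the $n=2$ rank statement on the resulting matrix, whereas the paper peels off one mode at a time.
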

\begin{proof}
When $r = 0$, the statement is trivially true as $A =0$, so we assume $r \ge 1$.
Let $n \ge 2$ and let $\rank(A) = s$. Clearly $s \le r$.
By the definition of having tensor rank $s$, there
exist $\bfy_{1, j}, \ldots, \bfy_{s, j} \in \bbF^{\calI_j}$
for each $1 \le j \le n$
and $b_1, \ldots, b_s \in \bbF \setminus \{0\}$ such that
\begin{equation}\label{eqn:tensor-powers-linear-independence-nonsymmetric-rank-and-uniqueness-lemma-two-sided-decomposition-long}
\sum_{i = 1}^r a_i \bfx_{i, 1} \otimes \cdots \otimes \bfx_{i, n} = A = \sum_{j = 1}^s b_j \bfy_{j, 1} \otimes \cdots \otimes \bfy_{j, n}.
\end{equation}
By Lemma~\ref{lem:linear-independence-and-dual-vectors-long},
there exist $h_{1, j}, \ldots, h_{r, j}$ dual to
$\bfx_{1, j}, \ldots, \bfx_{r, j}$ for each $1 \le j \le n$.
For any $1 \le i \le r$,
applying $h_{i, 2} \otimes \cdots \otimes h_{i, n}$ to the sum,
where $h_{i, k}$ is applied along the coordinate space $\bbF^{\calI_k}$
 for $2 \le k \le n$, 
we get $a_i \bfx_{i, 1}$ as a linear combination
of $\bfy_{1, 1}, \ldots, \bfy_{s, 1}$. 
As $a_i \not =0$, and $\bfx_{1, 1}, \ldots, \bfx_{r, 1}$ are linearly independent,
we get $s \ge r$.
So $s = r$,
and $\bfy_{1, 1}, \ldots, \bfy_{s, 1}$ are linearly independent.
Analogously,
we get that
%$a_j \bfx_{i, j}$ is a linear combination of
$\bfy_{1, j}, \ldots, \bfy_{s, j}$
are linearly independent for $2 \le j \le n$.
%It also follows that all $b_j \ne 0$.

Next, let $n \ge 3$ and consider 
(\ref{eqn:tensor-powers-linear-independence-nonsymmetric-rank-and-uniqueness-lemma-two-sided-decomposition-long}) again, where $s = r$.
For any $1 \le i \le r$,
applying $h_{i, n}$ along  the coordinate space $\bbF^{\calI_n}$,
we get
\begin{equation}\label{eqn:tensor-powers-linear-independence-nonsymmetric-rank-and-uniqueness-lemma-expression-of-ai-otimes-x(i,1)-to-x(i,n-1)-alt-long}
a_i \bfx_{i, 1} \otimes \cdots \otimes \bfx_{i, n - 1} = B = \sum_{j = 1}^r b_j \bfy_{j, n}(h_{i, n}) \bfy_{j, 1} \otimes \cdots \otimes \bfy_{j, n - 1}.
\end{equation}
From the LHS, $\rank(B) = 1$.
 By applying what has just been proved to $B$
having $\rank(B) = 1$, and $\bfy_{1, k}, \ldots, \bfy_{r, k}$
are linearly independent for all $1 \le k \le n-1$, 
from the expression of $B$ in the RHS of 
(\ref{eqn:tensor-powers-linear-independence-nonsymmetric-rank-and-uniqueness-lemma-expression-of-ai-otimes-x(i,1)-to-x(i,n-1)-alt-long})
 we see that  the number of
terms with nonzero coefficients on the RHS is exactly one.
% $\rank(B) =1$.
 Since $b_j \not = 0$,
it follows that for any $i$, there is exactly one $j = j_n(i)$ such that
 $\bfy_{j, n}(h_{i, n}) \not = 0$, and 
(\ref{eqn:tensor-powers-linear-independence-nonsymmetric-rank-and-uniqueness-lemma-expression-of-ai-otimes-x(i,1)-to-x(i,n-1)-alt-long}) takes the form
\begin{equation}\label{eqn:one-term-tensorpower}
a_i \bfx_{i, 1} \otimes \cdots \otimes \bfx_{i, n - 1} = B = 
b_j \bfy_{j, n}(h_{i, n}) \bfy_{j, 1} \otimes \cdots \otimes \bfy_{j, n - 1},
\end{equation}
where $j = j_n(i)$.
%(Similarly, for any $2 \le k \le n - 1$, there is exactly one $j = j(i, k)$
%such that $\bfy_{j, k}(h_i) \not = 0$. Now we go back to $j = j(i, n)$.)
For any $1 \le k \le n-1$, 
applying $\bigotimes_{1 \le s \le n-1: s \not = k} h_{i, s}$ to $B$,
%(\ref{eqn:tensor-powers-linear-independence-nonsymmetric-rank-and-uniqueness-lemma-expression-of-ai-otimes-x(i,1)-to-x(i,n-1)-alt-long})
where $h_{i, s}$ is  applied along  $\bbF^{\calI_s}$,
%for $2 \le s \le n - 1, s \ne k$,
we get
$a_i \bfx_{i, k} =  b'_j \bfy_{j, k}$,
where $1 \le i \le r$, $j = j_n(i)$,
%  $1 \le k \le n-1$
 and $b'_j = b_j \prod_{1 \le s \le n: s \not = k} 
\bfy_{j, s}(h_{i,s})$.
Since $a_i \ne 0$ and
%%%% th following is an error!!
% $\bfx_{i, 1}, \ldots, \bfx_{i, r}$ are linearly independent,
%%% corrected below:
 $\bfx_{1, k}, \ldots, \bfx_{r, k}$ are linearly independent,
we conclude that
the map $i \mapsto j = j_n(i)$ is a permutation, and  $b'_j \ne 0$.
%(\ref{eqn:tensor-powers-linear-independence-symmetric-rank-and-uniqueness-lemma-two-sided-decomposition-long}) are linearly independent by Corollary~\ref{cor:tensor-powers-linear-implies-linear-independence-long},
%the expression is the same up to a permutation.}
%
%{\color{red} (better explanation)}
%

Similarly for any $1 \le i \le r$, we can apply 
$h_{i, 1}$ along  the coordinate space $\bbF^{\calI_1}$ to
(\ref{eqn:tensor-powers-linear-independence-nonsymmetric-rank-and-uniqueness-lemma-two-sided-decomposition-long}),
and get
\begin{equation}\label{eqn:another B-eqn}
a_i \bfx_{i, 2} \otimes \cdots \otimes \bfx_{i, n} = B' = \sum_{j = 1}^r b_j \bfy_{j, 1}(h_{i, 1}) \bfy_{j, 2} \otimes \cdots \otimes \bfy_{j, n}.
\end{equation}
By the same argument, 
there is exactly one $j = j_1(i)$ such that
for all $2 \le k \le n$, 
\begin{equation}\label{eqn:x-through-y}%\[
a_i \bfx_{i, k}  = b_j
\bigg( \prod_{\text{\small $\substack{1 \le s \le n \\ 
s \not = k}$}} \bfy_{j, s}(h_{i, s}) \bigg)
 \bfy_{j, k}.
\end{equation}%\]
Taking $k=2$, by the linear independence of $\bfx_{1, 2}, \ldots, \bfx_{r, 2}$,
we have $j_1(i) = j_n(i)$, for all $1 \le i \le r$, which we now denote simply
by $j(i)$.

%%%
We also have $0 \ne a_i = b_j \prod_{s=1}^n \bfy_{j, s}(h_{i, s})$,
which implies that for all $1 \le k \le n$,
%We also have $0 \ne a_i = b_j \prod_{s=1}^n \bfy_{j, s}(h_{i, s})$,
%where the last equality can be seen by evaluating both sides of~(\ref{eqn:x-through-y}) at $h_{i, k}$ for any fixed $2 \le k \le n$.
%This implies that for all $1 \le k \le n$,
%%%
\[\bfy_{j, k} = \frac{a_i}{b_j \prod_{\text{\small $\substack{1 \le s \le n \\
s \not = k}$}} \bfy_{j, s}(h_{i, s})} \bfx_{i, k} 
= \bfy_{j, k}(h_{i, k})  \bfx_{i, k},\]
where $j = j(i)$.
It follows that, for all $1 \le i \le r$, and $j =j(i)$,
% for k=1 is was also known by earlier eqn
\[b_j \bfy_{j, 1} \otimes \cdots \otimes \bfy_{j, n}
= b_j  \prod_{s=1}^n \bfy_{j, s}(h_{i, s}) 
\bfx_{i, 1} \otimes \cdots \otimes \bfx_{i, n}
= a_i \bfx_{i, 1} \otimes \cdots \otimes \bfx_{i, n}.\]
\end{proof}

\begin{corollary}\label{cor:tensor-powers-linear-independence-nonsymmetric-and-symmetric-rank-and-uniqueness-long}
Let $r \ge 0$, and
let $\bfx_1, \ldots, \bfx_r \in \bbF^\calI$ be $r$ linearly independent vectors and $a_1, \ldots, a_r \in \bbF \setminus \{0\}$. Then for any integer $n \ge 2$, the symmetric tensor
\begin{equation}\label{eq:tensor-powers-linear-independence-nonsymmetric-and-symmetric-rank-and-uniqueness-long}
A = \sum_{i = 1}^r a_i \bfx_i^{\otimes n} \in \Sym^n(\bbF^\calI)
\end{equation}
has $\rank(A) = \symrank(A) = r$. %For $n \ge 3$, any expression of $A$ as $\sum_{i = 1}^r b_i \bfy_i^{\otimes n}$ is a permutation of the sum in (\ref{eq:tensor-powers-linear-independence-nonsymmetric-and-symmetric-rank-and-uniqueness-long}).
\end{corollary}
\begin{proof}
%Let $n \ge 2$. 
Clearly, $\rank(A) \le \symrank(A) \le r$. By Lemma~\ref{lem:tensor-powers-linear-independence-nonsymmetric-rank-and-uniqueness-long}, $\rank(A) = r$
so $\symrank(A) = r$. %When $n \ge 3$, by the second part of the same lemma, the uniqueness of the decomposition (the second statement of this lemma) also holds.
%{\color{red} (since we are quantifying over all tensor powers decomposition rather that arbitrary ones)}
\end{proof}

\section*{Acknowledgements}
We thank Guus Regts for telling us
%of his work,
about his beautiful work,
and for very insightful comments on this paper. We also thank 
the anonymous referees of the  11th  
\emph{Innovations  in  Theoretical  Computer  Science}  (ITCS) conference
2020 for their helpful comments.

\bibliography{references}

\end{document}